\let\isout\sout \renewcommand{\sout}[1]{\ifmmode\text{\isout{\ensuremath{#1}}}\else\isout{#1}\fi}
\definecolor{darkgreen}{HTML}{006400}
\newcommand{\fe}[1]{\textcolor{black}{#1}}
\newcommand{\tsout}[1]{}
\newtheorem{theorem}{Theorem}
\newtheorem{observation}[theorem]{Observation}
\newcommand \QAOA {{\sc QAOA}}
\newcommand \QAOAw {{\sc QAOA-warm}}
\newcommand \mc {{\sc Max-Cut}}
\newcommand{\bx}{\mathbf{x}}
  \providecommand\BibTeX{{%
    \normalfont B\kern-0.5em{\scshape i\kern-0.25em b}\kern-0.8em\TeX}}}
\begin{document}

\title{Bridging Classical and Quantum with SDP initialized warm-starts for QAOA}

\author{Reuben Tate}
\author{Majid Farhadi}
\affiliation{Georgia Institute of Technology, Atlanta, GA 30332, USA}

\author{Creston Herold}
\author{Greg Mohler}
\affiliation{Georgia Tech Research Institute, Atlanta, GA 30332, USA}

\author{Swati Gupta}
\email{Corresponding Author: swatig@gatech.edu}
\affiliation{Georgia Institute of Technology, Atlanta, GA 30332, USA}

\date{\today}

\begin{abstract}
We study the Quantum Approximate Optimization Algorithm (\QAOA) in the context of the \mc\ problem. Noisy quantum devices are only able to accurately execute \QAOA\ at low circuit depths, while classically-challenging problem instances may call for a relatively high circuit-depth. This is due to the need to build correlations between reachable pairs of vertices in potentially large graphs \cite{FGG20}. To enhance the solving power of low-depth QAOA, we introduce a classical pre-processing step that initializes \QAOA\ with a biased superposition of possible cuts in the graph, referred to as a {\it warm-start}. In particular, we initialize \QAOA\ with a solution to a low-rank semidefinite programming relaxation of the \mc\ problem. Our experimental results show that this variant of \QAOA, called \QAOAw, is able to outperform standard \QAOA\ on lower circuit depths in solution quality and training time. \fe{While this improvement is partly due to the classical warm-start, we find strong evidence of further improvement using QAOA circuit at small depth.} We provide experimental evidence of improved performance as well as theoretical properties of the proposed framework.

\end{abstract}

\maketitle

\section{Introduction}
There is growing interest in utilizing near-term quantum technology \cite{P18} to solve challenging problems in combinatorial optimization. Farhi et al.\ \cite{FGG14} recently introduced the Quantum Approximate Optimization Algorithm (\QAOA), designed specifically for combinatorial optimization problems. This is a hybrid quantum-classical algorithm, where the state of a quantum processor is controlled by  variational parameters $\gamma$ and $\beta$, which are optimized using a classical processor.

We consider the \mc\ problem, which is one of the most studied problems in combinatorial optimization. Given a simple weighted graph $G = (V,E)$ with vertex set $V = [n]$, edge set $E \subseteq \binom{V}{2}$, and weights $w: E \rightarrow \mathbb{R}$, the \mc\ problem is to find a partition of $V$ into two sets $S, V \setminus S \subseteq V$ such that the total weight of the edges that are cut by this partitioning, i.e., $\text{cut}(S) := \sum_{e \in E} w_e \cdot \mathbf{1}[e \in S \times (V \setminus S)]$, is maximized. The \tsout{state-of-the-art} \fe{best-possible (under Unique Games Conjecture\footnote{\fe{Under the Unique Games Conjecture (UGC), this 0.878-approximation ratio is the best \fe{factor we can hope to achieve, which simply means that assuming UGC there does not exist a polynomial-time classical algorithm for \mc{} with a $(0.878+\epsilon)$-approximation ratio, for any $\epsilon > 0$,}}  \cite{Khot02,MOO05,KKMO07}.}) polynomial-time}  approximation factor for solving the \mc\ problem is $0.878$ (for graphs with non-negative edge weights), and is given by the seminal Goemans-Williamson (GW) \cite{GW95} algorithm.  This method creates a random partition of the vertex set using the solution to a convex relaxation of \fe{\mc} using a  semi-definite program (SDP). \mc\ is not only NP-hard to compute but also is hard to classically approximate better than a multiplicative factor of $16/17$ for non-negative edge weights \cite{TSSW00,H01}. Graphs with both positive and negative edge weights seem harder to find approximate solutions for \tsout{: if the sum of the edge weights are nonnegative, then one can recast \mc\ as a MaxQP problem for which there exists a polynomial-time algorithm giving a  $O(1/\log n)$ approximation} \fe{(e.g., see \cite{CW04} for a related problem)}. Given a perfect noise-free quantum computer, on the other hand, the \QAOA\ algorithm is able to converge to the optimal solution as the number of QAOA stages $p$ increases. This is due to \QAOA's asymptotic equivalence to the Quantum Adiabatic Algorithm (QAA) as $p\to\infty$ \cite{FGG14}. The caveat is that increasing $p$ not only increases the number of parameters to be optimized, but makes the circuit more susceptible to quantum noise. Sufficiently deep circuits are effectively inaccessible due to the practical limitations of current and near-term quantum hardware. 

In this work, we study the impact of low-rank local optima for \tsout{Max-Cut} \fe{\mc{}} relaxations as initializations for \QAOA. We refer to these non-standard initializations of the \QAOA\ algorithm as {\it warm starts} (following the classical optimization literature \cite{bertsimas2016best, marcucci2020warm, ralphs2006duality}) and explore their impact on the performance of the hybrid variational method. Our warm-states are separable, and are based on a local optima of Burer-Monteiro's low-rank relaxation of \mc\ on a given graph \cite{BM03,BM05}. We use standard \QAOA\ mixers with this initialization and refer to this variant as \QAOAw. A key result of our study is that, our numerical simulations show \QAOAw\ typically outperforms standard \QAOA\ in quality of solution for low $p$-depth. In particular, we perform numerical simulations on \tsout{1148}\fe{1264} graph instances of up to \tsout{11}\fe{12} nodes. We find that \QAOAw\ achieves a higher average approximation ratio than standard \QAOA\ for \fe{96.8\%, 90.0\%, 72.8\%, and 53.6\%} of instances for circuit depths $p=1,2,4,8$ respectively. \fe{While this improvement is partly due to the classical warm-start, we find that the improvement due to the QAOA circuit on the warm-start is significant, e.g., at least 50\% improvement in the approximation ratio at $p=1$ compared to $p=0$ (warm-start initialization) on 74 instances, and at least 80\% improvement on 22 instances at depth $p=1$.}  We also explore the variational parameter space with and without initializations using warm-starts, and show interesting theoretical properties for warm-starts. For \QAOAw{} at $p=1$, our numerical  simulations indicate that the energy landscape frequently has a more ridge-like structure which can potentially be exploited in regards to optimization of the variational parameters. Additionally, with \QAOAw{}, our simulations show an overall improvement in the expected cut values across the landscape. 

For graphs with non-negative edge weights, it is known that standard \QAOA\ starts with a $0.5$-approximation guarantee at $p = 0$ \tsout{(i.e.,only quantum sampling is performed) due to the structure of the initial state $\ket{+}^{\otimes n}$.}\fe{(as measuring the initial state $\ket{+}^{\otimes n}$ yields the same result as if uniform classical sampling all possible cuts).} In contrast, given any graph with non-negative edge weights and a $\kappa$-approximate solution to the Burer-Monteiro relaxation of the problem in $2$ or $3$ dimensions, we prove our pre-processing stage allows \QAOAw\ to guarantee at least a $0.75\kappa$ or $0.66\kappa$ approximation (respectively) at any depth, in particular $p = 0$. This general bound augments the current literature where provable guarantees for standard \QAOA\ at low depth are only known for special cases, e.g., regular graphs. For $n$-node even cycles, depth-$p$ QAOA achieves an approximation ratio of at most $\frac{2p+1}{2p+2}$ whenever $n > 2p$ \cite{MFS19}. It is conjectured that the approximation ratio is exactly $\frac{2p+1}{2p+2}$ \cite{FGG14}. \fe{On the other hand}, for these even cycles, \fe{the warm-starts simply result in an optimal cut} due to the antipodal structure of locally optimal solutions to the utilized Burer-Monteiro relaxation. 

In order to give a fair comparison with \QAOA, we explore the limitations of \QAOAw\ at higher circuit depths. We prove that \QAOAw\ does not guarantee convergence to the optimal solution as $p \to \infty$ for certain 3-dimensional initializations (see Theorem \ref{thm:warmstart}). This is to be expected since standard \QAOA's equivalence to the Quantum Adiabatic Algorithm is dependent on the fact that the initial state $\ket{+}^{\otimes n}$ is a maximum energy eigenstate of the mixing Hamiltonian, but this will not typically be the case for the initial state of \QAOAw. In other words, for any graph $G$, there exists a graph-dependent depth $p$ such that standard \QAOA\ does at least as well as \QAOAw. 

\paragraph{Related Work.} There have been many different approaches to improving QAOA. Zhou et al.~\cite{ZWCPL19} proposed the INTERP and FOURIER heuristics to improve parameter optimization. These approaches bootstrap QAOA parameter initialization to the QAOA solver itself, and do not use any classical-side optimization. Zhu et al.~\cite{ZTBCVMBE20} introduced layer-dependent mixer operators that rely on an ansatz for the QAOA states. Sack and Serbyn~\cite{SS21} meanwhile focused on QAOA parameter optimization by connecting QAOA more closely to its quantum adiabatic origins.  Our approach meanwhile leverages the considerable body of work on classical solvers. B\"artschi et al. \cite{BE20} altered the mixing term to use a Grover-like circuit. However, their approach is not well suited to \mc\ as it does not have a subspace of preferred states. Unreachable states that are independent of initial conditions were explored in ~\cite{APMD19}, and a contrast was drawn between these states and the barren plateau problem, where poor initialization results in inefficient optimization. Our work connects the two cases, finding cases where initial states fail to mix properly and yield low-value approximate solutions. In a recent parallel study by Egger et al. \cite{egger2020warm}, the authors explore two warm-start techniques. In the former, they perform a continuous relaxation of variables for a Quadratic Unconstrained Binary Optimization\footnote{QUBO is in fact equivalent to \mc \cite{dunning2018}, and therefore all our results apply to QUBO as well.}  (QUBO) and modify the mixer in a way that ensures one achieves optimality as the circuit depth tends to infinity. In the latter, they initialize QAOA based on a \emph{single} cut that is classically obtained from a \mc\ instance, and then modify the mixer so that the value of that specific cut can be recovered at depth-1 QAOA. Our approach, on the other hand, is to use low-rank local optima for relaxations to \tsout{Max-Cut} \fe{\mc{}} (with rank greater than 1). Additionally, Egger et al. \cite{egger2020warm} modify the mixer so that the warm start is the lowest energy eigenstate, while we maintain the standard mixer in this work. Overall, since our approach allows more flexibility in the initialization of warm-starts, it ultimately translates to  improvements in performance, especially at low-circuit depths (as discussed in Appendix \ref{sec:appendixEgger}).

%\rte{For example, we find that $p=$ \rt{INSERT} is large enough for standard \QAOA\ to do at least as well \QAOAw\ on \rt{X\%} of the \rt{$n_1$} node graphs; when this same circuit depth is used for \rt{$n_2$} node graphs, standard \QAOA\ outperforms \QAOAw\ only \rt{Y\%} of the time.}

\paragraph{Outline.} We believe our study draws interesting connections between classical and quantum hybrid algorithms while positively impacting the performance of the \QAOA\ algorithm. In the rest of this paper, we review \QAOA\ in Section \ref{sec:QAOA} and the Goemans-Williamson algorithm as well as the low-rank Burer-Monteiro formulation for it in Section \ref{subsec:classicalAlgoPrelims}, we introduce our key ideas as a preprocessing stage in Section \ref{sec:pipeline}, present our computational and theoretical results in Sections \ref{sec:computational} and \ref{sec:theory} respectively, and conclude with a discussion and open questions in Section \ref{sec:discussion}.

\section{Quantum and Classical Optimization Algorithms}
\label{sec:prelims}
\fe{Before delving into the relevant algorithms in quantum and classical settings, we first define the notion of approximation ratio (AR) for \mc\ in general weighted graphs. Given a candidate solution $S' \subseteq V$ for a graph $G$, call $S'$ an $\alpha$-approximate solution if $\frac{\text{cut}(S') - \text{{\sc Min-Cut(G)}}}{\text{\mc}(G) -\text{{\sc Min-Cut(G)}}} \geq \alpha$ where $\text{\mc}(G) = \max_{S \subseteq V} \text{cut}(S)$ and $\text{{\sc Min-Cut}}(G) = \min_{S \subseteq V} \text{cut}(S)$ . Defined this way, the approximation ratio $\alpha$ will always lie in the interval $[0,1]$. An $\alpha$-approximate algorithm is defined to be an algorithm that always returns an $\alpha$-approximate solution (in expectation). Note that when $G$ only contains non-negative edge weights, then $\text{{\sc Min-Cut}}(G) = 0$, in which case, being an $\alpha$-approximate solution is equivalent to being at least an $\alpha$-fraction of the optimal solution.}

\subsection{The Quantum Approximate Optimization Algorithm}\label{sec:QAOA}

In this section, we review the hybrid quantum-classical algorithm of \QAOA\ for the \mc\ problem. \QAOA\ assigns a quantum spin to every binary output variable. In each of the $p$ layers of the algorithm, the problem Hamiltonian $H_C$ and a mixing Hamiltonian $H_B = \sum_{i \in [n]} \sigma_i^x$, where $\sigma_i^k$ is a Pauli matrix for qubit $i$ with $k=x,y,z$, are alternately applied to the initial quantum processor state $\ket{s_0}$, generating a variational wavefunction 
$$
\ket{\psi_p(\gamma, \beta)} \! = \! e^{-i\beta_p H_B} e^{-i\gamma_p H_C} \! \cdots e^{-i\beta_1 H_B} e^{-i\gamma_1 H_C} \! \ket{ s_0 },
$$
where $\ket{ s_0 } = \ket{ + }^{\otimes n}$ is the standard initial state. Sampling from the final variational state will yield a cut with an expected cut value of:
$$
F_p(\gamma,\beta) = \bra{ \psi_p(\gamma, \beta) } H_C \ket{\psi_p(\gamma, \beta)}.
$$

For the maximization problem \mc, the cost Hamiltonian for a graph $G=(V,E)$ (with weights $w: E\to \fe{\mathbb{R}}$) can be written as
$$
H_C = \frac{1}{2} \sum_{(i,j)\in E} w_{ij} (1 - \sigma_i^z \sigma_j^z)\,.
$$
%where $A$ is the  adjacency matrix of the input graph.\footnote{Since we focus on  undirected graphs in this work, we have $A_{i,j} = A_{j,i} = w_{i,j}$ for all $i,j \in V$ where $w_{i,j}$ is the weight of the edge connecting vertex $i$ and vertex $j$.}

The (near) optimal parameters of the algorithm, $\gamma, \beta$, are found by a classical algorithm to maximize the performance of the \QAOA\ algorithm, with $F_p(\gamma,\beta)$ viewed as a multi-dimensional non-convex function. We let $M_p$ denote the expected cut value with optimal choice of $\gamma,\beta$ parameters, i.e.,
$$M_p = \max_{\gamma,\beta} F_p(\gamma,\beta).$$ 
It is not difficult to see that $M_p$ is a non-decreasing function in $p$; moreover, as previously mentioned, $M_p \to \text{\mc}(G)$ as $p \to \infty$ \cite{FGG20}. \fe{For graphs with non-negative edge-weights, the ratio $M_p/\text{\mc}(G) \geq 0.5$ for all $p \geq 0$ \fe{due to the 0.5-approximation ratio achieved at $p=0$ for the standard intialization.}} \tsout{When the circuit has depth $p = 0$, the algorithm is equivalent to a quantum sampling of the initialization state $\ket{s_0}$. For the original \QAOA\ with $\ket{s_0} = \ket{+}^{\otimes n}$, at $p=0$ every edge is cut with probability $0.5$, which gives a $0.5$ approximation (in expectation) for \mc\ on graphs with non-negative edge weights; alternatively, one can interpret the quantum sampling of the $\ket{+}^{\otimes n}$ state as uniformly at random choosing one of the $2^n$ different possible cuts in the graph.}

To find the optimal variational parameters, one can simply perform a dense grid search for $\gamma, \beta \in [-\pi,\pi]^{2p}$, but this would be feasible only for small circuit depths. For scalability, one can instead treat $F_p(\gamma,\beta)$ as a black-box\footnote{For actual quantum devices, the value of $F_p(\gamma,\beta)$ and its gradients can be estimated by taking multiples measurements of $\psi_p(\gamma,\beta)$ in the computational basis.} and utilize a classical optimizer to (iteratively) update and find suitable values of $\gamma$ and $\beta$ in an effort to the maximize the expected cut value. For any classical optimization algorithm $\mathcal{A}$, it will eventually terminate at some $(\gamma,\beta) = (\hat{\gamma},\hat{\beta})$; we will then let $\text{QAOA}(G ; \mathcal{A})$ denote the expected value of the cut at $(\hat{\gamma},\hat{\beta})$, i.e.,
$$\text{QAOA}(G ; \mathcal{A}) = F_p(\hat{\gamma},\hat{\beta}).$$

%For scalability, we use classical optimization methods to find approximately optimal $\gamma,\beta$ values, given any finite circuit depth $p$. We refer to the process of optimizing the variational parameters as the \emph{training loop} and we provide a summary of our implementation of this optimization in Algorithm \ref{alg:trainingLoop}, where we use the {\sc Adam} algorithm that is integrated into the {\sc Tensorflow} library \cite{KB14}.

To optimize the variational parameters, we consider four choices of the optimizer: ADAM ~\cite{adamRef}, COBYLA ~\cite{cobylaRef},  Nelder-Mead ~\cite{neldermeadRef},  and BFGS ~\cite{bfgsRef}. Since $F_p(\gamma,\beta)$ is non-convex, classical optimizers are not guaranteed to stop at a globally optimal choice of $\gamma$ and $\beta$, i.e., the expected result of \QAOA\ will not always be equal to $M_p$ (i.e. the expected result of \QAOA\ had we initialized $\gamma$ and $\beta$ optimally). ADAM and BFGS operate with the first-order information (i.e., using gradient estimates\footnotetext{One can calculate (or approximate) the gradient using a variety of methods. Our implementation approximates the gradient using an analytic forward difference method implemented in {\sc Tensorflow Quantum} (with default parameters \texttt{error\_order=1} and \texttt{ grid\_spacing=0.001}). By \emph{analytic}, we mean that any expectations computed in the calculation are computed \emph{exactly} (instead of using a sampling-based approximation).}), whereas COBYLA and Nelder-Mead operate with the zeroth-order information (i.e., function value estimates). On quantum devices, gradients are estimated using multiple evaluations of the function $F_p(\gamma,\beta)$ at various $(\gamma,\beta)$; these function evaluations are noisy since $F_p(\gamma,\beta)$ itself is estimated by taking an average of multiple quantum measurements. For this reason, gradient-free optimizers are typically more robust against quantum noise and are recommended in practice over gradient-based methods \cite{LTMIJ21}. Application of machine learning techniques for optimizing the variational parameters (a technique known as meta-learning) has also shown promise in the noisy quantum setting \cite{WSR21}. Recent results regarding the concentration of the (standard) QAOA landscape can also be used to speed up optimization of the variational parameters \cite{BBFGN}. %Therefore, the number of calls to Adam and Nelder-Mead are typically smaller than the calls to COBYLA and BFGS.
%However, compared to the improvements in approximation ratio due to warm-starts,
 Even though the runtimes for various optimizers may significantly differ, we find that the choice of the optimizer has much smaller impact on the approximation ratio achieved for \QAOAw. (discussed in Section \ref{sec:computational}). %\sg{this is still not correct. We need to use the refs that Greg provided.}%\rt{Removed part involving number of iterations being smaller for gradient-based methods since that was not something I consistently observed.}
%Throughout this paper, when discussing ``running \QAOA'' or simply ``\QAOA,'' it is to be assumed that we are running Algorithm \ref{alg:trainingLoop} and utilizing the choice of $\gamma_{1, \hdots, p},\beta_{1, \hdots, p}$ found in the last training epoch.\footnote{In the context of Algorithm \ref{alg:trainingLoop}, an epoch is simply one iteration of the \textbf{repeat} loop.}

% \begin{algorithm}[t] 
%         \DontPrintSemicolon
% 		\KwIn{$G = (V,E),w:E \to \mathbb{R}^{\geq 0}, p,\ket{s_0},\gamma^{(0)}_{1,\hdots,p}=\mathbf{0},\beta^{(0)}_{1,\hdots,p}=\mathbf{0}$}
% 		Set $t:=0$.\;
% 		Set hyperparameters of {\sc Adam} algorithm \cite{KB14}: stepsize $\alpha = 0.001$, 1st exponential decay rates $\delta_1 = 0.9$,  2nd exponential decay rate $\delta_2 = 0.999$, numerical stability constant $\epsilon = 10^{-7}$.\;
% 		\Repeat{no improvement in $F_p(\gamma, \beta)$ by more than $10^{-6}\sum_{e \in E} |w_e|$ within 10 epochs}
% 		{
%     		% footnotes cannot directly be placed in the algorithm environment, so we have to place \footnotemark inside the environment and then later place \footnotetext outside of the environment
%     		Calculate\footnotemark   \ gradient: $g^{(t+1)} = -\nabla F_p(\gamma^{(t)},\beta^{(t)})$,\;
%     		Perform one step of the {\sc Adam} algorithm with $g^{(t+1)}$ to obtain $ \gamma^{(t+1)},\beta^{(t+1)}$,\;
%     		$t:=t+1$.\;
% 		}
% 	\caption{\QAOA\ Hybrid Training Loop}
% 	\label{alg:trainingLoop}
% \end{algorithm}

\subsection{Classical Optimization Algorithms}
\label{subsec:classicalAlgoPrelims}
In this section we review \tsout{the state-of-the-art}\fe{two} classical approximation algorithms for \mc. Recall that given a (weighted) graph $G=(V,E)$ with weights $w:E\to \mathbb{R}$, the \mc\ problem is to find a partitioning of the vertices into two subsets, $S$ and $T = V \setminus S$, that maximizes the number of cut edges, i.e., 
$$
\text{\mc}(G) = \frac{1}{2}\max_{S \subseteq V} \sum_{i,j \in V} \fe{w_{i,j}} \mathbf{1}[i \in S] \mathbf{1}[j \notin S].$$

\fe{Note that if  $(i,j) \notin E$, we can just take $w_{ij}$ to be zero. Instead of maximizing over subsets of $V$, one can rewrite the problem as maximizing over $\{-1,1\}^{|V|}$ instead. To do this, we associate} every vertex $i \in V$ \fe{with a decision variable $x_i$, where $x_i = +1$ indicates that vertex} $i \in S$ and $\fe{x_i =}-1$ \fe{indicates that $i \notin S$. Observe that for an edge $(i,j) \in E$, we have that the edge is cut if and only if $x_i \neq x_j$. Using the above fact, one can easily check that:}
 \fe{\begin{equation}
     \frac{1}{4}w_{i,j}(x_i - x_j)^2 = \begin{cases} w_{ij},& (i,j) \text{ is cut},\\
     0,& (i,j) \text{ is not cut}.\end{cases}
 \end{equation}
}

 \fe{By adding up the contribution of each edge and letting $n = |V|$, it becomes clear that one can reformulate the \mc{} problem as the following maximization problem:}
\begin{align}
	\max_{x \in \set{\pm 1}^n} \text{cut}(x) =&
	\max_{x \in \set{\pm 1}^n} \frac{1}{\fe{4}} \sum_{\fe{i<j}} w_{i,j} (x_i - x_j)^2 \label{eq:mc-pm1}\\
	=& \max_{x \in \set{\pm 1}^n} \frac{1}{4} \sum_{i,j}\fe{w_{i,j}} (1 - x_i x_j) \nonumber\\
	=& ~\frac{1}{2} {W} + \max_{x \in \set{\pm 1}^n} \frac{1}{4} \fe{\langle -A, x x^T \rangle}, \label{eq:mc-matrix}
\end{align}
where \fe{$A$ is the adjacency matrix of $G$}, $\langle \cdot, \cdot \rangle$ denotes the \emph{Frobenius} product of two matrices\footnote{Not to be confused with the \emph{bra-ket} notation, the \emph{Frobenius} product of two same-sized matrices $A$ and $B$, denoted by $\langle A,B \rangle$, is equal to $\text{Tr}(A^\dagger B)$ where $\text{Tr}(\cdot)$ denotes the \emph{trace} of a matrix and $(\cdot)^\dagger$ denotes conjugate transposition.}, and $W = \sum_{(i,j) \in E} w_{ij}$.

\paragraph{Goemans-Williamson (GW) Algorithm.} In the seminal work of Goemans and Williamson \cite{GW95} in 1995, the authors pioneered the use of semi-definite programs for solving combinatorial problems. Considering $Y = x x^T \succcurlyeq 0$ from equation \eqref{eq:mc-matrix}, \mc\ becomes equivalent to maximizing $\langle -A, Y \rangle$ by matrix $Y$ from the positive semidefinite cone, subject to having a unit diagonal, in addition to being rank-$1$.\footnote{We use ``$A \succcurlyeq 0$'' to mean that $A$ is a positive semidefinite matrix, i.e., $A$ is a symmetric matrix with real, nonnegative eigenvalues.} Relaxing the last constraint gives us a semidefinite program as follows:
\begin{equation}
\label{eqn:SDPRelaxation}
	\begin{array}{lll@{}ll}
		&\text{maximize}  &  & \langle -A, Y \rangle &\\
		&\text{subject to}&
		& \langle Y, e_i e_i^T \rangle = 1,  & \forall i \in [n],\\
		& &   & Y \in \mathbb{S}^n_+\,, &
	\end{array}
\end{equation}
where $n = |V|$ and $\mathbb{S}^n_+$ is the set of all $n \times n$ positive semidefinite matrices. The value given by the relaxation above was first considered in 1993 by Delorme and Poljak \cite{DP93} in the form of an eigenvalue maximization problem with the equivalence shown shortly after by Poljak and Rendl \cite{PR95} in 1995. The above relaxation is in the form of a semi-definite program and hence since it is a convex program it can be solved in polynomial time up to arbitrary precision, e.g., by using interior point methods \cite{NN94}.

For a Cholesky decomposition of $Y = X^T X$ (with $X \in \mathbb{R}^{n \times n}$), one can think of the solution to the above SDP as an embedding which maps vertex $i$ to $X_{:i}$, i.e., the $i$th column of $X$. This embedding can be viewed as a maximizer of a relaxation of equation (\ref{eq:mc-pm1}) where $x_i$ still has unit distance from the origin, but now in $\mathbb{R}^n$, i.e., $x_i$ lies on the $(n-1)$-sphere.\footnote{The $k$-sphere, denoted $S^k$, is defined as $S^k = \{x \in \mathbb{R}^{k+1} : \Vert x \Vert = 1\}$.} To map this high dimensional solution to a cut in the graph, the GW algorithm considers a random hyperplane through the origin to partition the vertices into two sets according to which side of the hyperplane they lie on; Goemans and Williamson \cite{GW95} showed that this choice of rounding yields a 0.878-approximation to \mc\ in expectation, when the edge weights are non-negative.

%Consider endpoints of an edge $(i,j)$ and the corresponding points $x_i$ and $x_j$ in the SDP solution, and let $\theta$ be the angle between these two points. By definition, $1 - x_i \cdot x_j = 1-\cos \theta$ so the contribution of this edge to the SDP objective is simply $(1-\cos \theta)/2$. On the other hand, the expected contribution of this edge in the rounded solution is equal to probability that these two vertices are cut by the random hyperplane, i.e., $\theta/\pi$. Thus, the expected approximation ratio is at least $2\theta / (\pi - \pi \cos \theta)$. The minimum value for this ratio is thus a lower bound on the approximation factor of the algorithm, i.e., GW is an $\omega$-approximate algorithm where $\omega = \min_{\theta \in [0, \pi]} 2\theta / (\pi - \pi \cos \theta) > 0.878$.

\paragraph{Burer-Monteiro (BM) Method.}
Observe that changing variables as $Y = X^TX$ (with $X \in \mathbb{R}^{n \times n}$), one can eliminate the positive semi-definite constraint in (\ref{eqn:SDPRelaxation}) and obtain the following equivalent reformulation:
\begin{align}
	\text{maximize} & \quad \langle -A, X^TX  \rangle  \nonumber\\
	\text{subject to} &\quad  \Vert x_i \Vert_2 = 1,&& \hspace{-3.5cm} \forall i \in [n], \label{eq:BMMC} \\
	&\quad x_i \in \mathbb{R}^{n},&& \hspace{-3.5cm} \forall i \in [n]\, \label{eq:rank-k}, 
\end{align}
where $x_i$ denotes the $i$th column of $X$. Burer and Monteiro \cite{BM03} proposed relaxing $x_i$ for each vertex to $\mathbb{R}^k$ instead of $\mathbb{R}^n$ in \eqref{eq:rank-k}, i.e., use $x_i \in \mathbb{R}^k$. Unlike the relaxation used in the Goemans-Williamson relaxation, this modification yields a non-convex optimization problem. We refer to this modification as the rank-$k$ Burer-Monteiro \tsout{Max-Cut} \fe{\mc{}} (BM-MC$_k$) relaxation.

Not only is optimizing a non-convex (non-concave) optimization problem difficult, but even finding a local optimum to a non-convex optimization problem can be challenging due to saddle-points. Nevertheless, first and second-order optimization methods have showed promising performance in converging to high quality local optima for low-rank BM formulation of \mc\ (and many other combinatorial optimization problems). Burer and Monteiro invented this heuristic method, motivated by \emph{existence} of a low rank optimal solutions to the original ($n$ dimensional) SDP whenever $\binom{k}{2}$ is no less than the number of constraints of the SDP, known as the Barvinok-Pataki bound \cite{Barvinok95,Pataki98}. Their method has showed promising performance in practice, even in constant dimensions and is an active area of research in non-convex optimization theory \cite{BVB18,BVB19,BMZ01}. Recently, Mei et al.\, \cite{MMMO17} showed that, for \mc\ SDPs \fe{corresponding to graphs with non-negative edge-weights}, any second-order local optimum for the BM formulation is approximately optimal with respect to the original SDP.

\begin{theorem}[Mei et\ al.\ \cite{MMMO17}]
    \label{thm:mei}
	\fe{For graphs with non-negative edge weights,} the objective at a locally optimal solution, for the above non-convex, rank-$k$ SDP formulation, is within a factor $1-\frac{1}{k-1}$ of that of the rank-$n$ SDP.
\end{theorem}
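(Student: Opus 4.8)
The plan is to read the suboptimality of a local optimum directly off the first- and second-order KKT conditions of the rank-$k$ problem, and then to certify near-optimality by exhibiting an almost-feasible dual solution. Write the Burer--Monteiro objective as $q(\sigma)=c_0+\langle C,Y\rangle$ with $\sigma=[x_1,\dots,x_n]$, $Y=\sigma^\top\sigma$, $C=-\tfrac14 A$, and $c_0=\tfrac14\langle A,J\rangle=\tfrac12|E|$, where $J$ is the all-ones matrix, and let $\mathcal E_n=\{W\succcurlyeq 0: W_{ii}=1\ \forall i\}$ be the feasible set of the rank-$n$ SDP. First I would form the Lagrangian for the constraints $\|x_i\|^2=1$; stationarity yields a diagonal multiplier matrix $\Lambda=\mathrm{diag}(\lambda_i)$ and the \emph{slack} $S:=\Lambda-C$ with $S\sigma^\top=0$ (the first-order condition). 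Two short computations then give the identities $\langle S,Y\rangle=0$ and $\langle S,J\rangle=\mathrm{Tr}\,\Lambda+c_0=q(\sigma)$, and, since $\langle C,W\rangle=\mathrm{Tr}\,\Lambda-\langle S,W\rangle$ for every $W\in\mathcal E_n$, the exact gap formula
\begin{equation*}
\text{SDP}-q(\sigma)=-\min_{W\in\mathcal E_n}\langle S,W\rangle\ \ (\ge 0),
\end{equation*}
attained at the SDP optimizer $W^\star$. In particular $S\succcurlyeq 0$ would force global optimality, so the content of the theorem is that second-order optimality makes $S$ \emph{almost} PSD on the spectrahedron.

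Next I would encode second-order optimality. Along tangent directions $\dot\sigma$ (i.e.\ $\dot\sigma_i\perp x_i$ for all $i$) the Hessian at a local maximum is negative semidefinite, which simplifies to $\langle S,\dot\sigma^\top\dot\sigma\rangle\ge 0$, equivalently $\sum_{ij}S_{ij}P_iP_j\succcurlyeq 0$ with $P_i=I-x_ix_i^\top$. To turn this tangent-space certificate into control of $\langle S,W\rangle$ for a target $W=V^\top V\in\mathcal E_n$ (columns $v_i$ unit), I would feed a \emph{randomly projected} copy of $V$ into the Hessian: pick a Gaussian $R\in\mathbb R^{k\times \mathrm{rank}(W)}$ and set the tangent perturbation $\dot\sigma_i=P_iRv_i$. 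Taking $\mathbb E_R$ and using $\mathbb E_R[R^\top N R]=\mathrm{Tr}(N)\,I$ collapses the $P_iP_j$ factors via $\mathrm{Tr}(P_iP_j)=k-2+Y_{ij}^2$, giving
\begin{equation*}
(k-2)\,\langle S,W\rangle+\langle S,\,Y\circ Y\circ W\rangle\ \ge\ 0\qquad\forall\,W\in\mathcal E_n,
\end{equation*}
where $\circ$ is the Hadamard product and $Y\circ Y\circ W\in\mathcal E_n$ again by the Schur product theorem (it is PSD with unit diagonal).

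Finally I would combine these relations: evaluating the displayed inequality at $W=W^\star$ and using $\langle S,J\rangle=q(\sigma)$, the goal is to deduce $\langle S,W^\star\rangle\ge-\tfrac{1}{k-2}\,q(\sigma)$, whence the gap formula yields $\text{SDP}-q(\sigma)\le\tfrac1{k-2}\,q(\sigma)$ and therefore $q(\sigma)\ge\big(1-\tfrac1{k-1}\big)\text{SDP}$. The hard part is exactly this last step. The \emph{trace} (isotropically averaged) form of the second-order condition is too weak on its own: feasibility of $J$, $Y\circ Y$, and $Y\circ Y\circ W^\star$ only produces \emph{lower} bounds on the various $\langle S,\cdot\rangle$, all pointing the same direction, and never upper-bounds the Hadamard cross-term $\langle S,\,Y\circ Y\circ W^\star\rangle$ one needs to eliminate. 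Closing it requires the \emph{full} matrix inequality $\sum_{ij}S_{ij}P_iP_j\succcurlyeq 0$ together with a rank-$k$ Grothendieck-type comparison that upgrades it to a uniform bound $\langle S,W\rangle\ge-\tfrac1{k-2}\langle S,J\rangle$ over all of $\mathcal E_n$; this is where the dependence on $k$ enters, and where one obtains the sharp constant $1/(k-1)$ rather than the softer $1/k$ that follows from a crude $\lambda_{\min}(S)\ge -q(\sigma)/\big(n(k-1)\big)$ estimate. I would therefore expect the bulk of the work, following Mei et al., to lie in establishing that Grothendieck-type inequality for embeddings constrained to $S^{k-1}$.
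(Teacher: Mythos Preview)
The paper does not contain a proof of this theorem. It is stated and attributed to Mei et al.\ \cite{MMMO17} as a background result in Section~\ref{subsec:classicalAlgoPrelims}, and no argument for it appears anywhere in the paper or its appendix. The proofs collected in Appendix~\ref{sec:appendixproofs} are for Lemma~\ref{lem:halfApproximation}, Theorem~\ref{thm:approx}, Theorem~\ref{thm:cycle}, Observation~\ref{lem:landscape}, Theorem~\ref{thm:warmstart}, and Theorem~\ref{thm:twoedges}; the Mei et al.\ result is simply quoted. So there is nothing in the paper for your proposal to be compared against.

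As for the proposal itself: the first-order reduction to the slack matrix $S=\Lambda-C$, the identities $\langle S,Y\rangle=0$ and $\langle S,J\rangle=q(\sigma)$, the gap formula, the second-order condition $\langle S,\dot\sigma^\top\dot\sigma\rangle\ge 0$ on the tangent space, and the Gaussian test direction $\dot\sigma_i=P_iRv_i$ leading to $(k-2)\langle S,W\rangle+\langle S,Y\circ Y\circ W\rangle\ge 0$ are all correct and are indeed the skeleton of the argument in \cite{MMMO17}. But you have not proved the theorem: as you yourself say, the entire difficulty is the last step, and you only \emph{describe} what is needed (an upper bound on the Hadamard cross-term, or equivalently a rank-$k$ Grothendieck-type comparison) rather than supply it. Your proposal is an accurate roadmap, not a proof; if this were a statement the paper actually proved, the missing step would be a genuine gap.
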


The above theorem highlights the fact that increasing $k$ improves performance of the BM formulation; however, for the purposes of this work (and simple mapping to the Bloch sphere), we restrict our attention to rank-2 and rank-3 solutions. We next discuss our key ideas on bringing in warm starts from classical optimization to quantum algorithms.

\section{Preprocessing Stage for QAOA-Warm}
\label{sec:pipeline}
In this section, we discuss our classical \emph{preprocessing stage} for warm-starts in \QAOA, which are obtained through the Burer-Monteiro \mc\ (BM-MC$_k$) relaxation in rank $k$ (for $k=2,3$). \fe{Given a classical solution $x_i \in S_k$ (for $i\in V$ for graph $G=(V,E)$), our warm-starts comprise a separable product state $\ket{q_1}\otimes \ket{q_2} \otimes \cdots \otimes \ket{q_n}$, wherein the pure state of each qubit $\ket{q_i}$ can be represented on its own Bloch sphere at the location of the corresponding vertex $x_i \in S_k$. These initial qubit positions are obtained using a classical Burer-Monteiro algorithm in rank-2 (BM-MC$_2$) and rank-3 (BM-MC$_3$).} 

\fe{To motivate such an approach for creating warm-starts for QAOA, we highlight two key observations. First, since the objective of BM-MC$_k$ can be written as $\max_{x_i, x_j \in S_{k-1}} \sum_{(i,j)\in E} w_{ij}\Vert x_i - x_j\Vert_2^2$, the classical solutions are incentivized to move the adjacent vertices as far apart as possible, ideally, to opposite ends of the sphere. This helps increase the probability of an edge being in a cut obtained not only by hyperplane rounding but also quantum sampling (as long as the corresponding qubits are aligned with the measurement axis as \fe{much as} possible, i.e. at the north and south poles of the Bloch sphere). In general, if there is a cluster of vertices at both the poles of the sphere, then the probability of capturing the weight of the edges that go across these clusters is increased for both classical and quantum approaches.}

\fe{Next, we find a reduction to the quantum sampling objective from the BM-MC objective for an edge. Consider an edge $e$, such that one of the vertices is located at the top of Bloch sphere. Then the probability of that edge being cut via quantum sampling and the contribution that edge makes to the BM-MC$_3$ objective coincides. Consider an edge $e = (i,j)$ such that $x_i = (0,0,1)^T$, and $x_j =  (\sin \theta \cos \phi,  \sin\theta \sin \phi, \cos \theta)^T$ (where $\theta$ and $\phi$ are the polar and azimuthal angle respectively). The expected contribution of $e$ to the \mc{} from quantum sampling is equal to $w_{i,j}$ multiplied by the probability that the edge $e$ is cut, i.e., $w_{i,j}\sin^2(\theta/2)$. The contribution to the BM-MC$_3$ objective from edge $e$ can be written as $\frac{1}{2}w_{i,j}(1-x_i \cdot x_j)$. By definition, $\cos(\theta) = x_i \cdot x_j$, and thus, the contribution to the BM-MC$_3$ objective is $\frac{1}{2}w_{i,j}(1-\cos(\theta)) = w_{i,j}\sin^2(\theta/2),$
    which is equivalent to the expected contribution of $e$ from quantum sampling\footnote{This may not be true in general for the \mc over the entire graph, due to alignment with the measurement axis. }.}

%Consider rank-3 initializations, where classically the constraint is $x_i \in \mathbb{R}^k, \|x_i\|=1$ to force the relaxation on the 2-dimensional sphere. One can also interpret these as initial states for qubits for each vertex in the graph, thus producing a separable warm-start state. 

%\fe{In general, the probability of an edge being in the sampled cut (whether by hyperplane rounding or quantum measurement) increases as the two vertices approach opposite poles of the sphere (or similarly for Burer-Monteiro's relaxation, if there is a cluster of vertices at both the poles of the sphere).}

%\ch{How did these figures get so far away from when it's discussed in the text?}

\tsout{In general, the probability of an edge being in the sampled cut (whether by hyperplane rounding or quantum measurement) increases as the two vertices approach opposite poles of the sphere.} %The above observation can also be modified for BM-MC$_2$ with the correct embedding of the BM-MC$_2$ solution (see Section \ref{sec:mapToQuantum}).}  
\tsout{One may also be interested in the relationship between the classical hyperplane rounding and the quantum sampling as well.} \tsout{If we  fix one of the vertices to the top of the sphere, one observes that the classical and quantum probabilities are not too far apart as demonstrated in Figure \ref{fig:measurement}. }

%\rt{The above observation suggests that classical solutions obtained via BM-MC$_k$ can be suitable candidates for constructing initial quantum states. }\rt{Of course, for any pair of vertices, it will not always be the case that one of them is at a pole of the Bloch sphere. However, in ideal cases,}\ch{What does ``ideal cases'' mean? Isn't this antipodal clustering only for graphs which BM-MC solves easily?}\rt{the BM-MC$_k$ solutions will roughly resemble two antipodal clusters of vertices corresponding to the \mc, in which case, with a suitable rotation of the solution (i.e. vertex-at-top rotations as described in Section \ref{sec:randomRotations}), classical hyperplane rounding and quantum sampling approximate one another.}

\begin{figure}
	\centering
	    
	\begin{tabular}{ccc}
	& BM-MC$_2$ & BM-MC$_3$\\
    \rotatebox{90}{\hspace{0.5cm}Vertex-At-Top}&\includegraphics[scale=0.4]{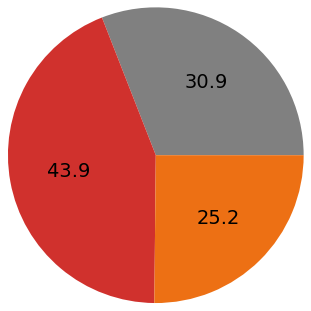} & \includegraphics[scale=0.4]{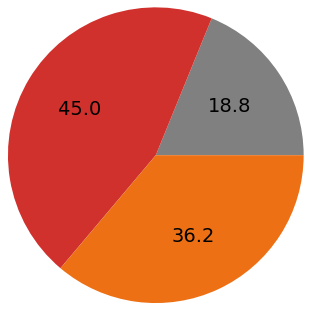}\\
   % \rotatebox{90}{\hspace{0.40cm}Random Uniform}&\includegraphics[scale=0.4]{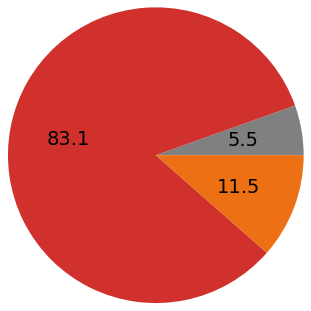} & \includegraphics[scale=0.4]{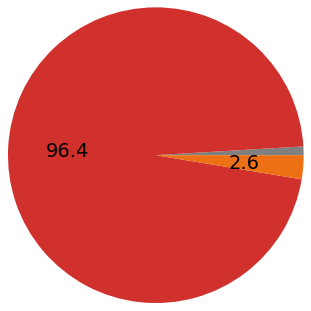}
    \end{tabular}\raisebox{-1.2cm}{\includegraphics[scale=0.35]{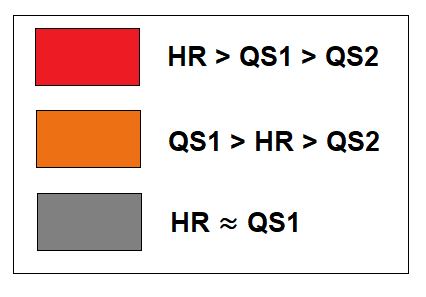}}

	    \caption{\fe{Pie charts representing best expected cut value obtained by using (i) hyperplane rounding of the BM-MC$_k$ solution (HR), (ii) quantum sampling of the BM-MC$_k$ solution (QS1), and quantum sampling of the initial state of standard QAOA (QS2). For every instance, QS2 always yielded the worst result of the three, and for majority of the instances QS1 $\geq$ HR. For HR and QS1, the best of 5 (in terms of SDP objective) locally optimal BM-MC$_k$ solutions are used; for that solution, the best of 5 rotations is used for QS1. The regions marked in gray indicate instances for which QS1 and HR had a tie (difference in approximation ratio of at most 0.001).}}
	    \label{fig:pieChartClassicalVsQuantumSampling}
	\end{figure}
	
\fe{A natural question at this point is if we are worsening the quality of cuts that the QAOA algorithm is initialized with using warm-starts, and if quantum sampling of a classical solution is even competitive compared to a hyperplane rounding of the same. We show in Figure \ref{fig:pieChartClassicalVsQuantumSampling} that quantum sampling of the warm-start (QS1) initialization outperforms the expected cut obtained using the standard initial state for QAOA (QS2). Moreover, with an appropriate initial rotation of the warm-start (Section \ref{sec:randomRotations}), QS1 outperforms hyperplane rounding (HR) for the majority of instances.} 
	
%Our warm-starts comprise a separable product state, wherein the pure state of each qubit can be represented on its own Bloch sphere. 

\fe{We explain next the pipeline of constructing  warm-starts using appropriate initial rotations}. 

\tsout{Our warm-starts comprise a separable product state, wherein the pure state of each qubit can be represented on its own Bloch sphere. The first step of our algorithm is to solve the classical BM-MC$_k$ problem, leading to a feasible solution $\mathbf{x}$, interpreted as embedding of vertices to $n$ points over the ($k-1$)-sphere (one for each vertex). Since, the objective to BM-MC$_k$ is invariant under any global spherical rotation, we realign the solution to the quantum measurement axis by considering a rotation: uniformly at random rotations or a random ``vertex-at-top" rotation. Then, each of these 3-dimensional representations are mapped to the Bloch sphere to obtain a suitable separable state to start \QAOAw\ with. The preprocessing stage is explained in detail as follows.  }

%\sg{Do we need the three subsections? Or can we say this concisely? I've moved some details about rotations to the appendix.}
%\ch{3.1 is only providing a small amount of new info (how BM-MC sol'n is found, that we're taking best of 5. 3.2 provides almost no new information--only notation. 3.3 says how the mapping to the Bloch sphere is done and seems necessary. While this way of dividing into subsections is a bit repetitive, it does have the benefit of giving a bold sub-section title to look for a reader looking for more detail on any of these preprocessing steps. Probably OK to leave it unless someone has extra time to rework... not a high priority.}

% \begin{figure}
%     \centering
%     \begin{tikzpicture}
%     \node[inner sep=0] (image) at (0,0) {\includegraphics[scale=0.5]{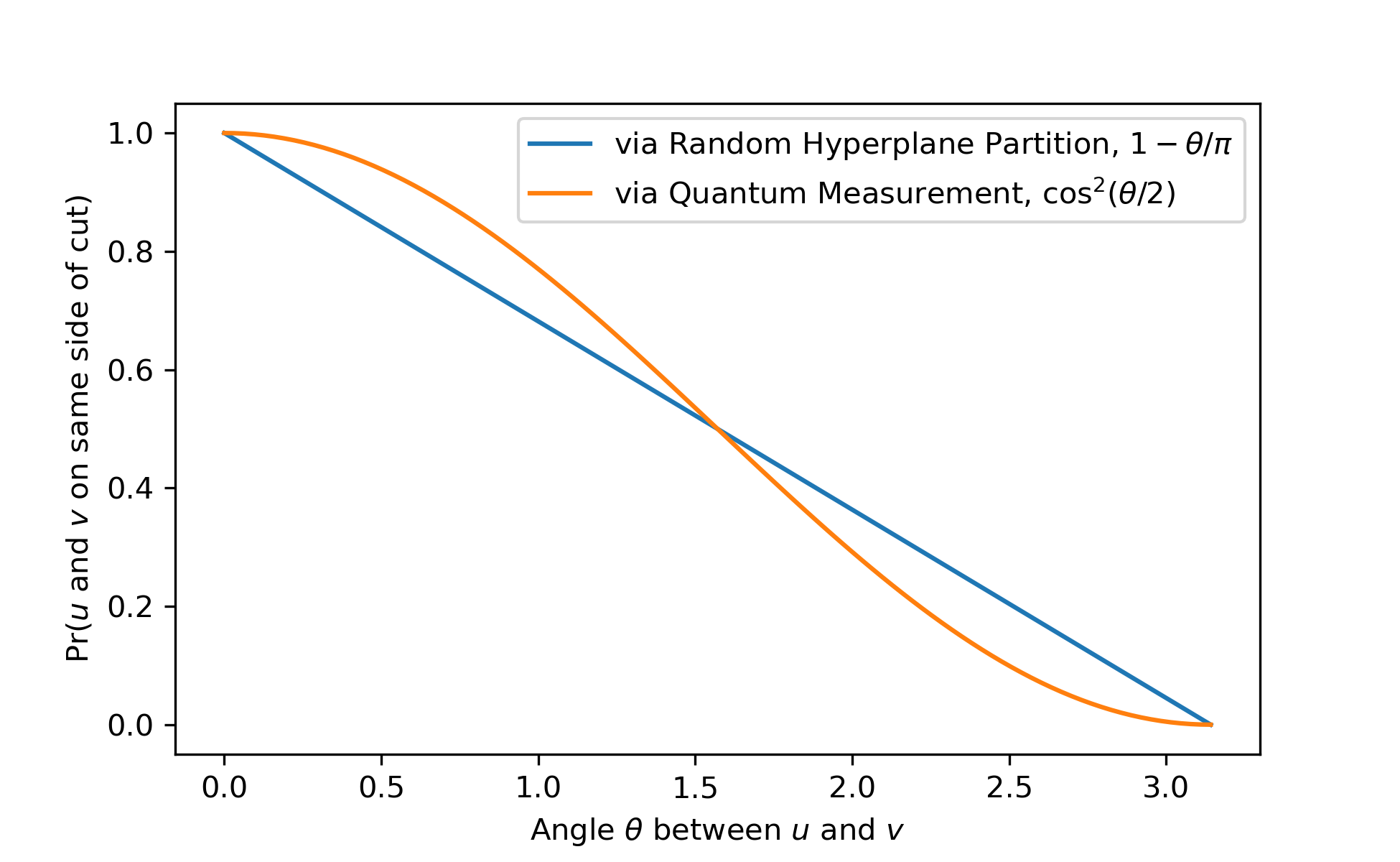}};
%     \draw[red,ultra thick] (image.south east) -- (image.north west);
%     \draw[red,ultra thick] (image.north east) -- (image.south west);
%     \draw[red,ultra thick] (image.south west) rectangle (image.north east);
% \end{tikzpicture}
%     \caption{\sout{A plot showing that the probabilities of two vertices, $u$ and $v$, being separated (with one vertex $(0,0,1)$) are similar regardless of if we are performing a classical hyperplane cut or performing a quantum measurement.}}
%     \label{fig:measurement}
% \end{figure}

	\subsection{Solving BM-MC}\label{sec:solvingBMMC} We use the Burer-Monteiro algorithm in $k$ dimensions for $k=2,3$, for finding approximate solutions to \mc. In each dimension, we begin with $n$ points chosen uniformly at random on the unit circle (for $k=2$) or unit sphere (for $k=3$). We represent these points in polar coordinates (for $k=2$) or spherical coordinates (for $k=3$); that is, we keep track of the polar ($\theta$) angles (for $k=2,3$) and azimuthal ($\phi$) angles (for $k=3$) of each point. To find locally optimal solutions, we perform stochastic coordinate ascent\footnote{\emph{Stochastic coordinate ascent} works well in practice in finding a local optimum, see e.g., \cite{MMMO17}. Nevertheless, for guaranteed convergence one can use other methods such as (fast) Riemannian Trust-Region methods.} by making small random perturbations to these angles (thus maintaining feasibility) and update our solution if the objective increases; see Appendix \ref{sec:appendixAlgos} for more detail. We find 5 local optima and take the best solution. Let $\mathbf{x}^*: V \rightarrow S^k$ be a solution to $\text{BM-MC}_k$, i.e.,\, a Burer-Monteiro relaxation of \tsout{Max-Cut} \fe{\mc{}} in the $k$-dimensional space.

	\begin{figure}
    \centering
    \includegraphics[scale=0.4]{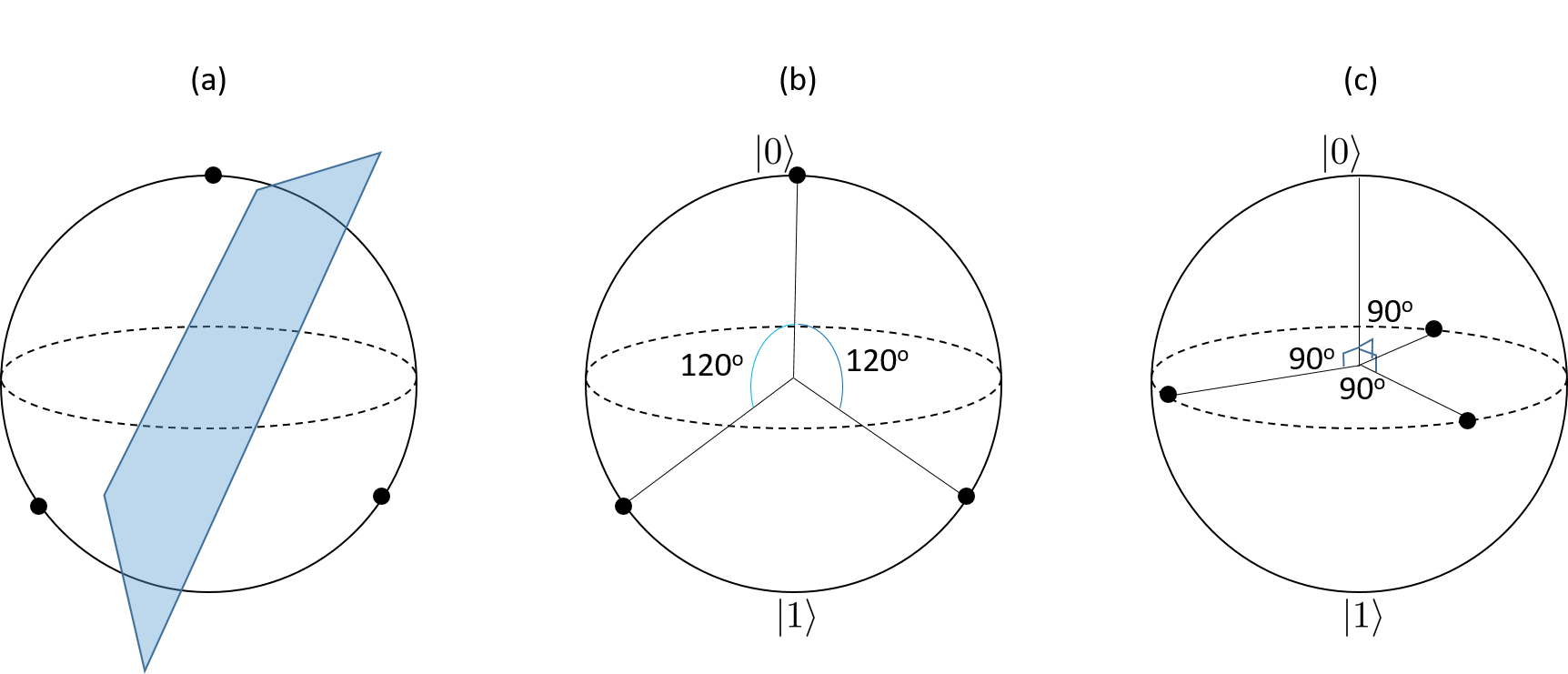}
    \caption{\fe{Comparison of the hyperplane rounding and quantum sampling for a 3-cycle (\mc{}=2): figure (a) shows a local optimal BM-MC$_3$ solution, where any random hyperplane will give a cut of size 2. Both (b) and (c) show two different embeddings of the BM-MC$_3$ solution (from (a)) onto the Bloch sphere. In (b), the qubits lie on $x=0$ plane and quantum sampling results in a expected cut of 1.875. In (c), all qubits lie on the equator of the Bloch sphere (similar to the standard start of QAOA), so each edge has a probability of $1/2$ of being cut, yielding a total expected cut of $1.5$. Both (b) and (c) demonstrate that the orientation of the rotated BM-MC$_3$ solution is important when embedding it into the Bloch sphere and can result in different expected cuts.}}
    %two edges have a probability of $\sin^2(120^\circ/2) = \frac{3}{4}$ of being cut (via quantum sampling) and the remaining edge has a cut probability of $2\cos^2(120^\circ/2)\sin^2(120^\circ/2) = \frac{3}{8}$, 
    \label{fig:diagramClassicalVsQuantum}
    \end{figure}

	\subsection{Random Rotations}\label{sec:randomRotations} Classical hyperplane rounding for BM-MC$_k$ is invariant under a global rotation of the entire solution, however quantum sampling is not. \fe{For example, in Figure \ref{fig:diagramClassicalVsQuantum} we consider 3 rotations of a particular BM-MC initialization of 3 qubits on the Bloch sphere, and though hyperplane rounding is agnostic to a rotation of the Bloch sphere, quantum sampling depends on the choice of the measurement axis. The difference in approximation attained by quantum sampling in two different orientations of the same solution on Bloch sphere demonstrates the importance of choosing a suitable rotation when embedding the BM-MC$_k$ solution to the Bloch sphere.} Thus, before mapping the rank-$k$ approximate solutions from BM-MC$_k$, \fe{a rotation is performed} to mitigate unfavorable orientations due to warm-starts. 
	
	We consider two types of random rotation schemes:
	uniform rotation in $\mathbb{R}^k$ (for all the vertices), and random ``vertex-at-top" rotations where a vertex is sampled uniformly and mapped to the $(0,0,1)^T$ vector for rank-$3$ and $(1,0)^T$ in rank-2 solutions. Uniform random rotations can provably recover a significant fraction of the BM-MC$_k$ objective (see Section \ref{sec:theory}) whereas vertex-at-top rotations serve as a useful heuristic. We describe both of these rotations in Appendix \ref{sec:appendixrotations}. We use the shorthand $R_V(\mathbf{x}^*)$ and $R_U(\mathbf{x}^*)$ to denote the rotations of the approximate solution $\mathbf{x}^*$ by a random vertex-at-top rotation $R_V$ and a random uniform rotation $R_U$ respectively.
	
    \subsection{Mapping to the Bloch Sphere}
    \label{sec:mapToQuantum}
	To map the rotated solutions $R(\mathbf{x}^*) = ((\theta_1,\varphi_1),\dots, (\theta_n, \varphi_n))$ (with $R \in \{R_U,R_V\}$), we can simply map the rank-3 solutions for each vertex to the Bloch sphere (see Figure \ref{fig:transformToInitialState}) using a tensorizable state for each qubit, i.e., the ``quantum mapping" $Q$ is given by:
	$$Q(\mathbf{x}^*) = \fe{Q_3}(\theta_1,\varphi_1) \otimes \dots \otimes{\fe{Q_3}}(\theta_n,\varphi_n),$$
	where
	$${\fe{Q_3}}(\theta,\varphi) = \cos(\theta/2) \ket{0} + e^{i \phi}\sin(\theta/2)\ket{1}.$$
	
	For rank-2 solutions, let $R(\mathbf{x}^*) = (\theta_1,\dots, \theta_n)$ be the rotated approximate solution in polar coordinates where $\theta_i \in [0,2\pi)$ for $i=1,\dots,n$. We embed the solution into the $yz$-plane of the Bloch sphere with the following quantum mapping:
	$$Q(\mathbf{x}^*) = \fe{Q_2}(\theta_1) \otimes \dots \otimes \fe{Q_2}(\theta_n),$$
	where $\fe{Q_2}(\theta)$ is given by:
	$${\fe{Q_3}}(\theta) = \begin{cases} \cos\left(\frac{\theta}{2}\right) \ket{0} + e^{-i \pi/2}\sin\left(\frac{\theta}{2}\right)\ket{1}, &\!\!\!\! \theta \in [0,\pi),\\
	\cos\left(\pi-\frac{\theta}{2}\right) \ket{0} + e^{i\pi/2}\sin\left(\pi-\frac{\theta}{2}\right)\ket{1}, &\!\!\!\! \theta \in [\pi,2\pi).
	\end{cases}$$
	
	The quantum mapping for rank-2 solutions is motivated by the fact that for rank-3 solutions, certain initializations along the $x$-axis cause \QAOAw\ to perform poorly (see Section \ref{subsec:nonoptimalityQAOAw})); mapping to the $yz$-plane of the Bloch sphere allows us to avoid these problematic states. %Precise details of constructing these initial states starting with $\ket{0}^{\otimes n}$ using elementary quantum gates are included in Appendix \ref{sec:mapToQuantum}.

	\begin{figure*}
		\centering
		\includegraphics[scale=0.4]{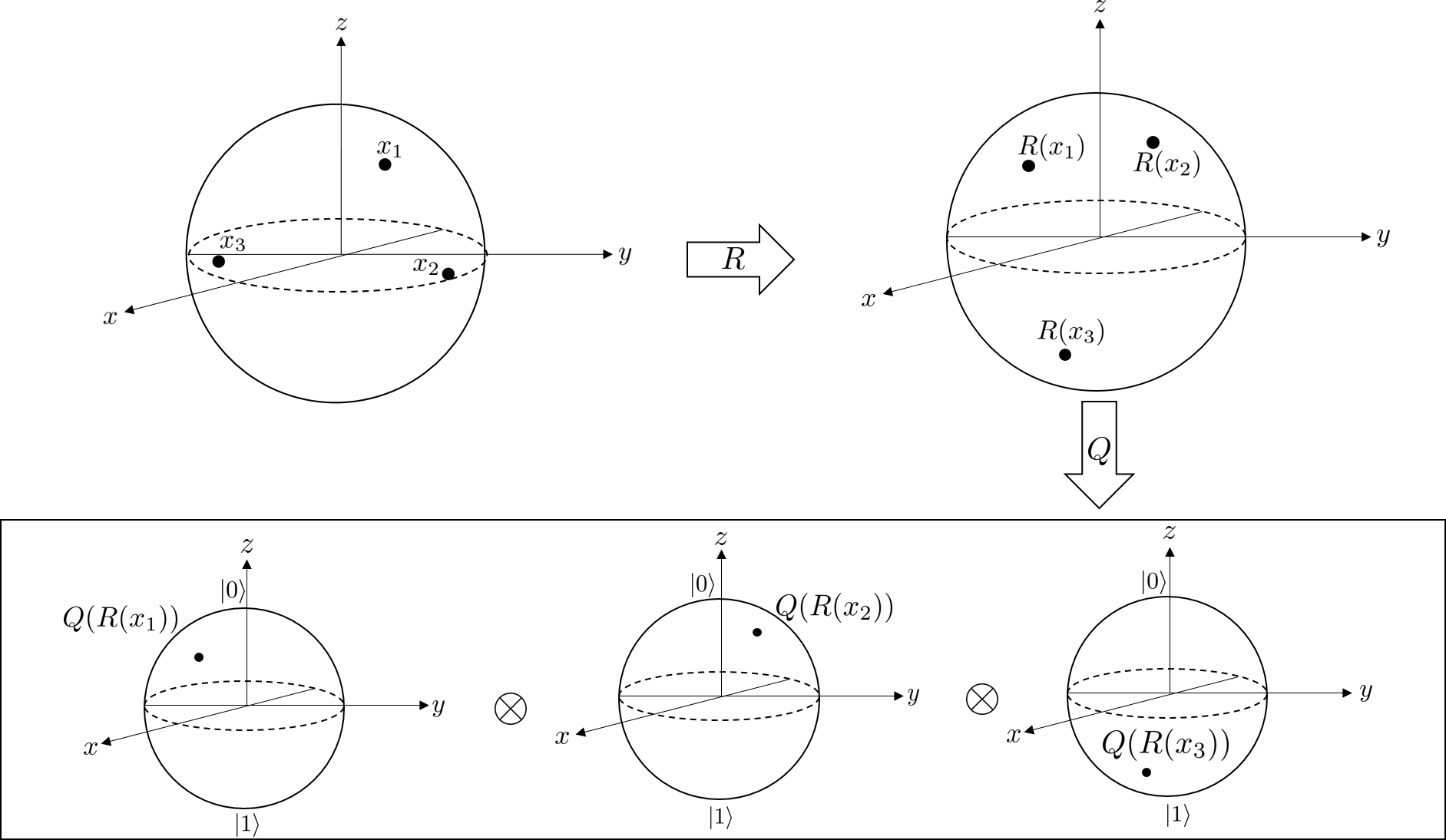}
		\caption{We begin with a locally optimal solution from BM (top-left). We then apply a rotation $R \in \{R_U,R_V\}$; here we show $R_U$ being applied (top-right). Lastly, we use $Q$ to map this rotated solution to a separable quantum state.}
		\label{fig:transformToInitialState}
	\end{figure*}
	\subsection{Performing a biased \QAOA}\label{sec:performingBiasedQAOA} Now we perform \QAOA, as described in Section \ref{sec:QAOA}, while redefining initial state $\ket{s_0}$ as the tensor product of the qubit-states due to the previous step, i.e., 
	$$
	\ket{s_0} = \otimes_{v \in V} Q\fe{_k}(R(\mathbf{x}^*(v)))\,,
	$$
	and run \QAOA\ with the chosen depth and optimize over the $2p$ parameters $\gamma=(\gamma_1,\dots,\gamma_p), \beta=(\beta_1,\dots,\beta_p)$ in order to maximize
	$
	{F_p(\gamma,\beta)} = \bra{\psi_p(\gamma,\beta)} H_C \ket{\psi_p(\gamma,\beta)}$. We initialize $\gamma,\beta$ close to $\mathbf{0}$ which allows us to start with a solution quality close to what would be obtained by just doing quantum sampling.\footnote{Initializing the parameters to \emph{exactly} $\mathbf{0}$ is not advised due issues regarding saddle points. In particular, standard \QAOA\ at $p=1$ has a saddle point at the origin and thus terminates immediately. We instead initialize $\gamma_k,\beta_k$ by sampling uniformly at random from the interval $[-0.0001,0.0001]$ for $k=1,\dots,p$. In the case of standard \QAOA, if we still get stuck after a few epochs (due to the saddle point), we discard the run and retry with new randomized initial parameter values.} Moreover, the ridge-like geometry of the \emph{parameter landscape} (seen in Sections \ref{subsec:landscapes} and \ref{subsec:approxBounds}) also suggests that points near the origin are suitable for initialization of the variational parameters. 

\begin{algorithm}[t]\caption{\QAOAw\ using BM-MC$_k$}
	\label{pseudo-code}
	    \DontPrintSemicolon
		\KwIn{$G = (V, E), w: E \rightarrow \mathbb{R}, p,k, R \in \{R_U,R_V\}$}
		$x \gets \text{BM-MC$_k(G,w)$}$ \tcp*[r]{approximate solution}
		$\ket{s_0} \gets \otimes_{i \in V} Q\fe{_k}(R(x_i))$, for $i \in [n]$, for random $R$\;
		\Return{ \QAOA$(G,w,\ket{s_0},p)$}
\end{algorithm}

We summarize \QAOAw\ in pseudocode in Algorithm \ref{pseudo-code}. In the next section we present experimental results with warm-starts, followed by theoretical development of properties of standard \QAOA\ and \QAOAw.

\section{Results}\label{sec:computational}

In this section, we discuss the results of our numerical simulations of \QAOAw. \fe{We first discuss the details of the preprocessing pipeline and the graph instances used in Section \ref{subsec:experimentalSetup}. In order to compare \QAOAw{} to other \mc{} algorithms, one can use different black-box optimizers, such as ADAM, COBYLA, Nelder-Mead and BFGS. We first run computations to pick a single optimizer, then to pick the rank of the initialization, and the rotation scheme to work with in Sections \ref{sec:optimizerChoice} and \ref{sec:rankRotationChoice}. In Section  \ref{sec:aggregate}, we next provide aggregate results for \QAOAw{} including (i) a comparison against other \mc{} algorithms, (ii) improvement in approximation ratio with increased $p$ depth, and (iii) trends in (median) approximation ratio with varying $p$-depth and graph size. Lastly, to understand the behavior of \QAOAw{}, we discuss the qualitative shape and numerical properties of the parameter landscape of \QAOAw{} (and standard \QAOA) in Section \ref{subsec:landscapes}.}

\tsout{After discussing the setup for the simulations (Section \ref{subsec:experimentalSetup}), we compare the performance of various optimizers (Section \ref{sec:optimizerChoice}). Next, we compare performance dependent on rank of the warm-starts and rotation schemes (Section \ref{sec:rankRotationChoice}). In Section \ref{sec:aggregate}, we compare the experimental performance of \QAOAw, standard \QAOA, \tsout{and} GW, and BM-MC$_2$ and discuss trends (in $n$ and $p$) for \QAOAw\ and standard \QAOA. Finally, we consider the parameter landscapes for both standard \QAOA\ and \QAOAw\  in Section \ref{subsec:landscapes}, and show the impact of warm-starts on achievable approximation ratios.}

%\sg{This summary needs to be revised based on the final order of the sections. Currently, 4.1 is setup, 4.2 is optimizer choice, 4.3 is rank and rotation, 4.4 is aggregate results, 4.5 is parameter landscapes and trajectories.}

%\sg{can you put in the subsection in brackets after each description?}\rt{Done}

%In this section, we first present aggregate results of the experiments to demonstrate the strengths of \QAOAw\ (compared to standard \QAOA) on low-depth circuits. Additionally, we show that \QAOAw\ initialized with rank-2 solutions yields better (experimental) results compared to \QAOAw\ initialized with rank-3 solutions; similarly, \QAOAw\ with vertex-at-top rotations (experimentally) outperform \QAOAw\ with random rotations. Next, we consider how the solution quality changes with circuit depth $p$ (for both standard \QAOA\ and \QAOAw) for particular graphs. As one would expect, there is some variability in the results between different graph instances; however, there are important common trends which aid in explaining the phenomena observed in the aggregate results. Finally, we explore the behaviour of standard \QAOA\ and \QAOAw\ using parameter landscapes for the variational parameters $\gamma$ and $\beta$. %We also investigate the importance of the initial starting quantum state $\ket{s_0}$ and its effect on the geometry of the landscape.

\subsection{Experimental Setup}
\label{subsec:experimentalSetup}
\paragraph{Graph Instances.} We consider a collection of \tsout{1148}\fe{1264} graphs, $\mathcal{G}$, generated as follows. We first generated a set of unweighted graphs, which includes all non-isomorphic graphs for $n=2$ to $n=6$ vertices (142 instances), and 29 random graphs for each size $n=7$ to $n=\tsout{11}\fe{12}$ sampled from different random graph generators in {\sc Python}'s {\sc NetworkX} \cite{networkx} package. These random graph generators include Erd\"{o}s-Reny\'{i}, Barabasi Albert, Dual of Barabasi-Albert, Watts-Strogatz, and Newman-Watts-Strogatz models (detailed in Appendix \ref{sec:graphInstancesAppendix}). Many experimental studies in the current QAOA literature only consider graphs from a single random graph model (e.g. Erd\"os-Reny\'i); however graphs from such models can have predictable behavior when it comes to \tsout{Max-Cut} \fe{\mc}\footnote{For example, when using the Erd\"os-R\'enyi graph model, if each edge appears independently with probability $q$, and if we take a random cut with $k$ vertices on one side and $n-k$ vertices on the other side, then one would observe $qk(n-k)$ edges across the cut (in expectation).} which could potentially have a large influence on the performance of QAOA. For this reason, we construct an ensemble of graphs $\mathcal{G}$ using a variety of random graph generators.\tsout{demonstrates how varied our ensemble is with respect to two important graph metrics dependent on eigenvalues of the normalized Laplacian \cite{lyons}. }

Next, we create three weighted versions of each of the \tsout{287}\fe{316} unit-weighted instances constructed above, by considering independent edge-weightings drawn from (i) uniform distribution on $\{-10,-9,\cdots,9,10\}\setminus \{0\}$, (ii) uniform distribution on $\{1,2,\dots,10\}$, and (iii) weights of form $\pm2^k$ with $\Pr[w_e = 2^k] = \Pr[w_e = -2^k] = 2^{-k-2}$ for all non-negative integers $k$. The weighted and unweighted instances together give us a total of \tsout{1148}\fe{1264} instances. The last family of weighted instances is constructed due to high variation of performance of classical heuristics on similar instances, observed in a previous study by Dunning et al \cite{dunning2018}. Note that the last two ways of sampling edge-weights results in only positive edge-weight graphs. We will often present results for mixed-weight graphs (positive and negative weights), and positive-only separately.

% We ran our experiments with two graph collections. The first collection, $\mathcal{G}_1$, is the set of all connected unweighted non-isomorphic graphs with five vertices. The collection $\mathcal{G}_1$ was generated in \texttt{SageMath} using a library function that enumerates all non-isomorphic graphs on $n$ vertices (in this case, $n=5$). The second collection, $\mathcal{G}_2$, is a collection of 12-node random graphs given by $$\mathcal{G}_2 = \bigcup_{\xi \in P} \mathcal{G}_{2,\xi}\,,$$ where $P = \{0.1,0.2,0.4,0.6,0.8\}$ and $\mathcal{G}_{2,\xi}$ is a sub-collection of 50 connected random Er\"ods-R\'enyi  graphs where each of the possible $\binom{n}{2}$ edges is included independently with probability $\xi$.\footnote{For each $\xi \in P$, as we generated the random graphs, we would discard any disconnected graphs that were generated. We kept generating graphs in this fashion until we had 50 graphs (for each $\xi \in P$).} In the subsections below, we will use $\mathcal{G} = \mathcal{G}_1 \cup \mathcal{G}_2$ to refer to both sets of graphs.

\paragraph{Running the Preprocessing Stage}
We computed \QAOAw, Goemans-Williamson and standard \QAOA\ solutions for each of the weighted graph instances $G \in \mathcal{G}$. Both standard \QAOA\ and \QAOAw\ were run for circuit depths $p \in \{1,2,4,8\}$, for each optimizer considered, and \QAOAw\ for each considered rank of BM-MC$_k$ ($k=2,3$) and for each rotation type (vertex-at-top and uniform random). We consider the best of 5 warm-starts (in objective value) when selecting BM-MC$_k$ warm-starts, and subsequently the best of 5 random rotations, i.e., the rotation that yields the highest expected cut value at the end of the hybrid-optimization loop. For any two runs of standard \QAOA\ or \QAOAw\ that differ only in choice of optimizer, the initial parameter values used are the same (with $\gamma_i$ and $\beta_i$ sampled uniformly from the interval $[-0.0001,0.0001]$ for all $i=1,\dots,p$). Our implementation of standard \QAOA\ and \QAOAw\ utilizes Google's {\sc Tensorflow Quantum} library and IBM's Qiskit library. The state $\ket{+}^{\otimes n}$ is initialized by applying a Hadamard gate to each qubit in $\ket{0}^{\otimes n}$. For states initialized based on low-rank approximate solutions, we generate the initial state as discussed in Section \ref{sec:pipeline}, which is easily implemented using standard rotation gates. For each epoch of each run of standard \QAOA\ or \QAOAw, our implementation records the values of the variational parameters, the expected cut value at those parameters, and the probability distribution of all $2^n$ cuts. Each run of standard \QAOA\ and \QAOAw\ terminated when the difference in successive values of $F_p(\gamma,\beta)$ was less than $\bar{W}*10^{-6}$, where $\bar{W}$ is the sum of the absolute values of the edge weights. We next summarize the results from these numerical simulations.

\subsection{Optimizer Choice}
\label{sec:optimizerChoice}

\begin{figure}
    \centering
    \includegraphics[scale=0.31]{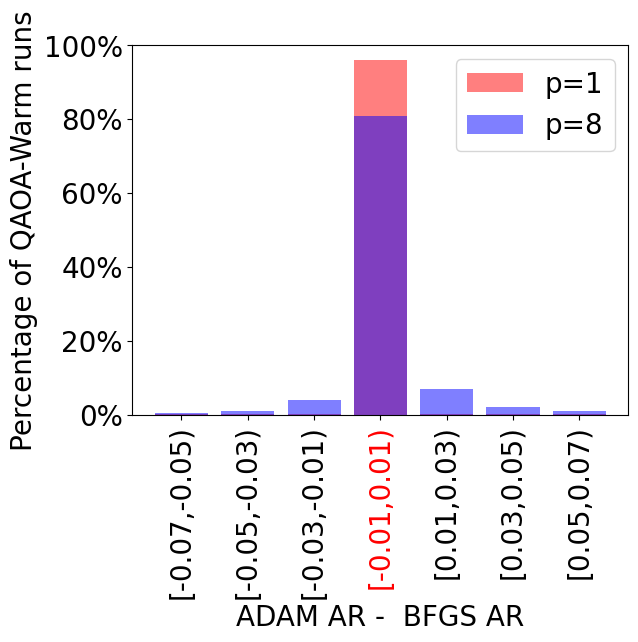}\includegraphics[scale=0.31]{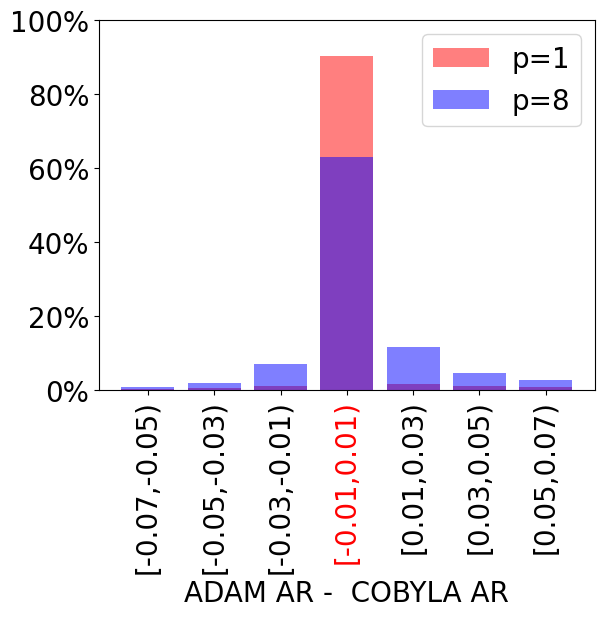}\includegraphics[scale=0.31]{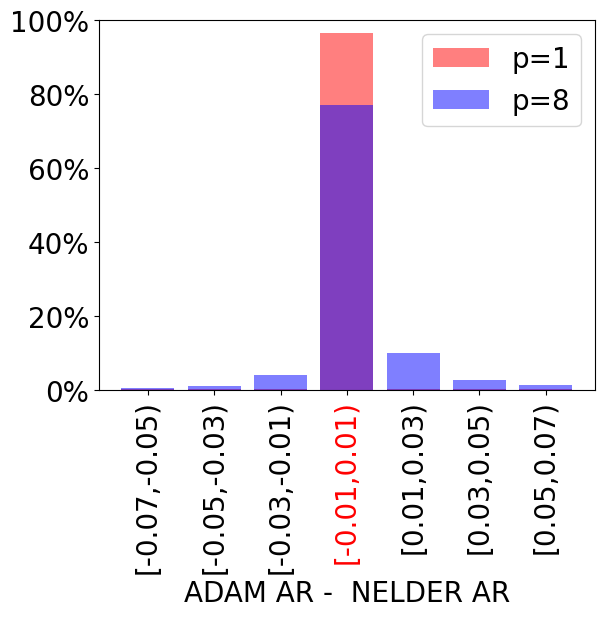}\\
    \includegraphics[scale=0.31]{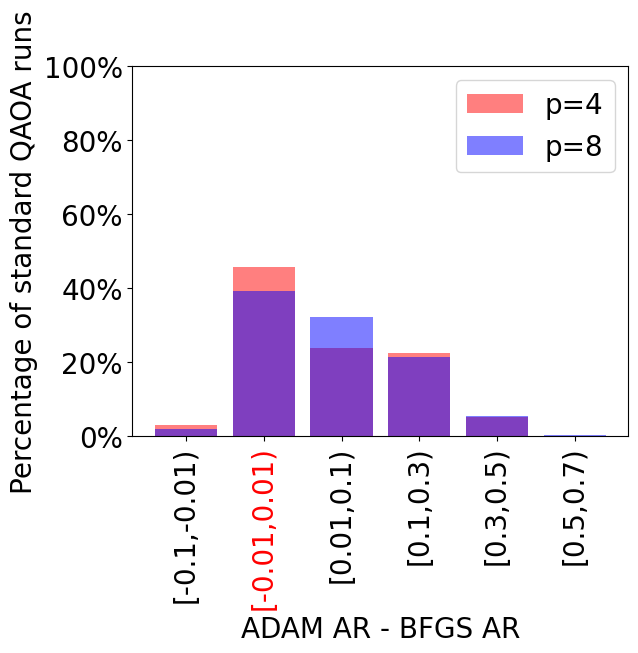}\includegraphics[scale=0.31]{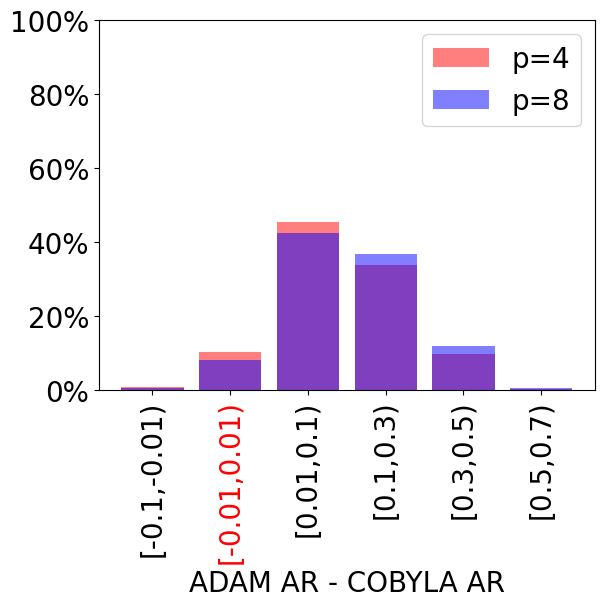}\includegraphics[scale=0.31]{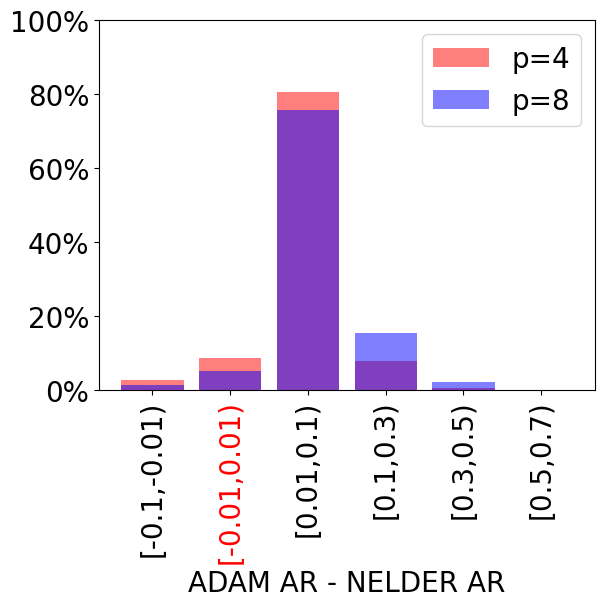}
    \caption{This figure compares runs of \QAOAw\ and standard \QAOA\ that only differ in choice of optimizer. We compare each optimizer to ADAM; more specifically, we plot $r_A-r_O$ where $r_A$ is the approximation achieved by ADAM and $r_O$ is the approximation achieved by the other optimizer considered. The top row shows comparisons of \QAOAw\ runs for circuit depths $p=1$ (red) and $p=8$ (blue) whereas the bottom row shows comparisons for standard \QAOA\ for circuit depths $p=4$ (red) and $p=8$ (blue). Overlapping regions of the histograms are in purple. We give a red label to the bin $[-0.01,0.01]$ to highlight the cases where the optimizers perform similarly (to ADAM).}
    \label{fig:optimizerComparison}
\end{figure}

We consider four different optimizers to optimize the $2p$ variational parameters: ADAM, BFGS, Nelder-Mead, and COBYLA and present comparisons between these set of optimizers. As demonstrated in Figure \ref{fig:optimizerComparison}, when ADAM is compared to the other three optimizers, the expected cut values obtained for \QAOAw\ are similar (i.e. within 0.01 difference in approximation ratio) for at least 90\% of the runs at $p=1$; this percentage decreases at $p=8$ but the approximation ratios are still relatively similar for the majority of the instances. This suggests that the approximation ratios achieved for \QAOAw\ are largely independent of the optimizer used; for this reason, all remaining results involving approximation ratios for \QAOAw\ will be in terms of runs using the ADAM optimizer. It should be noted that all of the optimizers considered vary in regards to runtime, e.g., the cost per iteration and the number of iterations required to train the variational parameters (we discuss this further in Appendix \ref{sec:appendixPreProcess})

Even though the choice of the optimizer had almost no impact on the \QAOAw\ in terms of approximation factors obtained, Figure \ref{fig:optimizerComparison} illustrates a noticeable effect on approximation ratios achieved for standard \QAOA\ however, especially at the higher circuit depths that we tested ($p=4$ and $p=8$).\footnote{We suspect this is an artefact of the parameter landscapes becoming flatter with the warm-starts.} In particular, we find that runs using the ADAM optimizer tend to have better performance for QAOA. For this reason, the remaining results in this paper regarding standard \QAOA\ will only include runs that utilize the ADAM optimizer in order to obtain a more simple, direct, and fair comparison with \QAOAw. 

\begin{table}
    \centering
    \begin{tabular}{ccc}
    & depth $p=1$ & depth $p=8$\\\\
    all graphs & 
    \begin{tabular}{c|cc}
    &  vert.  & uniform\\\hline
    rank-2 & {\bf \fe{0.9581}} & \fe{0.9581} \\
    rank-3 & \fe{0.9576} & \fe{0.9440}
    \end{tabular}
    &
    \begin{tabular}{c|cc}
    &  vert.  & uniform\\\hline
    rank-2 & {\bf \fe{0.9726}} & \fe{0.9718}\\
    rank-3 & \fe{0.9688} & \fe{0.9560}
    \end{tabular}
    \\\\
    \begin{tabular}{c}positive-weight \\ graphs\end{tabular}
    &
    \begin{tabular}{c|cc}
    &  vert.  & uniform\\\hline
    rank-2 & \fe{0.9569} & {\bf \fe{0.9569}} \\
    rank-3 & \fe{0.9556} & \fe{0.9441}
    \end{tabular}
    &
    \begin{tabular}{c|cc}
    &  vert.  & uniform\\\hline
    rank-2 &  {\bf \fe{0.9704}} &  \fe{0.9697} \\
    rank-3 &  \fe{0.9659} &  \fe{0.9548}
    \end{tabular}
    \end{tabular}
    \caption{Multiple tables comparing the average approximation ratio achieved during \QAOAw\ when utilizing different combinations of ranks and rotations during the preprocessing stage. For the top row of tables, these averages were computed using all the graphs in our graph library $\mathcal{G}$ (see Section \ref{subsec:experimentalSetup}) whereas for the bottom row, we restrict our attention to only those graphs in $\mathcal{G}$ with positive edge weights. Each run of standard \QAOA\ and \QAOAw\ terminates when the difference in successive values of $F_p(\gamma,\beta)$ is less than $10^{-6}\bar{W}$ where $\bar{W}$ is the sum of the absolute values of the edge weights.}
    \label{fig:comparingRanksAndRotation}
\end{table}

\subsection{Choice of Rank and Rotations}
\label{sec:rankRotationChoice}
%\sg{These results are for "best of 5" vertex at top rotations. Not for "median vertex at top rotations". Am I correct?} 
\fe{To compare \QAOAw{} against standard-QAOA, GW, and hyperplane rounding of BM-MC$_2$ , we need to narrow in to the choice of the BM-MC$_k$ rank (2 or 3) and the type of rotation (vertex-at-top or uniform random) to use. We explore these two choices in this subsection.}%\tsout{We considered} two possible choices for both the rank of the initialization (rank 2 or rank 3) and the type of rotation (uniform or vertex-at-top) to construct the initial quantum state\rt{we desire to pick the combination that results in the highest approximation ratio}.

Recall that we consider the best-of-5 warm-starts for each type of rotation\footnote{We found that restarting \fe{standard} \QAOA\ multiple times did not impact the results significantly.}. Over the \tsout{1148}\fe{1264} graph instances, for rank-3 initializations, we find that the vertex-at-top rotations \fe{typically} have a \fe{slight} increase in performance over random uniform rotations, \fe{especially when rank-3 solutions are used}  (\fe{e.g., at depth $p=1$, rank-3} vertex-at-top rotations obtain \fe{0.9576}\tsout{0.9228} approximation ratio on average, whereas \fe{rank-3} uniform rotations obtain \fe{0.9440}\tsout{0.8958}). These results seem reasonable since vertex-at-top rotations rarely end up in states that plateau for warm-starts (see Section \ref{subsec:nonoptimalityQAOAw} for an example of such a warm-start). We include a summary of average approximation ratios observed across the four choices of rank and rotations in  Table \ref{fig:comparingRanksAndRotation}. 

On the other hand, when using rank-2 initializations, there is virtually no difference between the two rotation approaches, as rank-2 solutions were specifically designed to avoid bad states for warm-starts. For the ease of presentation, the remainder of the results in this paper will utilize rank-2 initializations with a vertex-at-top rotation scheme as this appears to be one of the most promising combinations for \QAOAw.

\begin{table}
    \centering
\begin{tabular}{|c|cc|cc|cc|cc|}\hline
 & \multicolumn{2}{c|}{p=1} & \multicolumn{2}{c|}{p=2} & \multicolumn{2}{c|}{p=4} & \multicolumn{2}{c|}{p=8} \\ \hline
 & all & positive & all & positive & all & positive & all & positive\\ \hline
 WBGS & \textbf{25.08}\% & \textbf{25.08}\% & \textbf{26.42}\% & \textbf{26.42}\% & \textbf{23.97}\% & \textbf{23.97}\% & \textbf{16.61}\% & \textbf{16.61}\%\\ \hline
WBSG & 0.00\% & 0.00\% & 0.16\% & 0.31\% & 0.32\% & 0.31\% & 0.71\% & 0.79\%\\ \hline
WGBS & \textbf{30.93}\% & \textbf{30.93}\% & \textbf{32.28}\% & \textbf{32.28}\% & \textbf{29.98}\% & \textbf{29.98}\% & \textbf{21.84}\% & \textbf{21.84}\%\\ \hline
WGSB & 0.00\% & 0.00\% & 0.24\% & 0.31\% & 0.32\% & 0.47\% & 2.06\% & 3.30\%\\ \hline
WSBG & 0.00\% & 0.00\% & 0.16\% & 0.16\% & 2.61\% & 2.52\% & 4.27\% & 4.56\%\\ \hline
WSGB & 0.08\% & 0.16\% & 0.00\% & 0.00\% & 2.29\% & 2.04\% & 4.27\% & 3.62\%\\ \hline
BWGS & 0.95\% & 1.89\% & 0.71\% & 1.10\% & 0.16\% & 0.16\% & 0.00\% & 0.00\%\\ \hline
BWSG & 0.08\% & 0.16\% & 0.24\% & 0.47\% & 0.08\% & 0.00\% & 0.00\% & 0.00\%\\ \hline
BGWS & \textbf{22.39}\% & \textbf{22.39}\% & \textbf{17.01}\% & \textbf{17.01}\% & \textbf{6.25}\% & 6.60\% & 1.34\% & 0.47\%\\ \hline
BGSW & 1.50\% & 2.36\% & 4.83\% & 5.50\% & \textbf{10.44}\% & \textbf{10.44}\% & 4.03\% & 6.29\%\\ \hline
BSWG & 0.24\% & 0.47\% & 0.32\% & 0.63\% & 0.71\% & 1.26\% & 0.63\% & 1.26\%\\ \hline
BSGW & 0.32\% & 0.63\% & 0.24\% & 0.47\% & 1.34\% & 1.89\% & 1.11\% & 2.04\%\\ \hline
GWBS & 1.50\% & 1.89\% & 1.50\% & 1.89\% & 1.27\% & 1.57\% & 0.47\% & 0.31\%\\ \hline
GWSB & 0.08\% & 0.16\% & 0.08\% & 0.16\% & 0.08\% & 0.16\% & 0.55\% & 0.94\%\\ \hline
GBWS & \textbf{15.66}\% & \textbf{15.66}\% & \textbf{10.92}\% & \textbf{10.92}\% & 4.91\% & 4.87\% & 1.11\% & 0.63\%\\ \hline
GBSW & 0.71\% & 0.63\% & 3.72\% & 3.62\% & 6.09\% & \textbf{6.09}\% & 2.69\% & 4.72\%\\ \hline
GSWB & 0.00\% & 0.00\% & 0.00\% & 0.00\% & 0.08\% & 0.16\% & 0.40\% & 0.63\%\\ \hline
GSBW & 0.00\% & 0.00\% & 0.00\% & 0.00\% & 0.08\% & 0.16\% & 0.24\% & 0.31\%\\ \hline
SWBG & 0.00\% & 0.00\% & 0.00\% & 0.00\% & 1.27\% & 1.57\% & \textbf{8.86}\% & \textbf{8.86}\%\\ \hline
SWGB & 0.16\% & 0.31\% & 0.32\% & 0.47\% & 1.74\% & 1.73\% & 8.23\% & 7.39\%\\ \hline
SBWG & 0.00\% & 0.00\% & 0.40\% & 0.79\% & 0.87\% & 1.57\% & 1.42\% & 2.20\%\\ \hline
SBGW & 0.16\% & 0.31\% & 0.24\% & 0.31\% & 2.69\% & 2.67\% & \textbf{11.71}\% & \textbf{11.71}\%\\ \hline
SGWB & 0.16\% & 0.31\% & 0.16\% & 0.31\% & 0.08\% & 0.16\% & 0.08\% & 0.16\%\\ \hline
SGBW & 0.00\% & 0.00\% & 0.08\% & 0.00\% & 2.37\% & 1.73\% & 7.36\% & 8.33\%\\ \hline\hline
Total & 100\% & 100\% & 100\% & 100\% & 100\% & 100\% & 100\% & 100\% \\ \hline\end{tabular}
    \caption{\fe{We consider 4 algorithms: Goemans-Williamson (G), rank-2 Burer-Monteiro with hyperplane rounding (B), \QAOAw{} (W), and standard \QAOA{} (S). There is a row for each of the $4! = 24$ ways the algorithms can perform relative to one another with the cell value indicating the percentage of instances for which that ordering occurs. As an example, the top-leftmost value indicates that for 25.08\% of instances, $W \geq B \geq G \geq S$ in terms of expected AR with $W$ and $S$ being depth-1. The four largest entries in each column are bolded for emphasis. To account for numerical error for nearly solved instances, we declare \QAOAw{} (W) as the best as long as it is within 0.001 AR of the best algorithm. We include columns corresponding to the entire graph library $\mathcal{G}$ as well as the subset of $\mathcal{G}$ that have positive-weighted edges.}}
    \label{tab:comparingAlgorithms}
\end{table}

\begin{figure}
    \centering
    \includegraphics[scale=0.4]{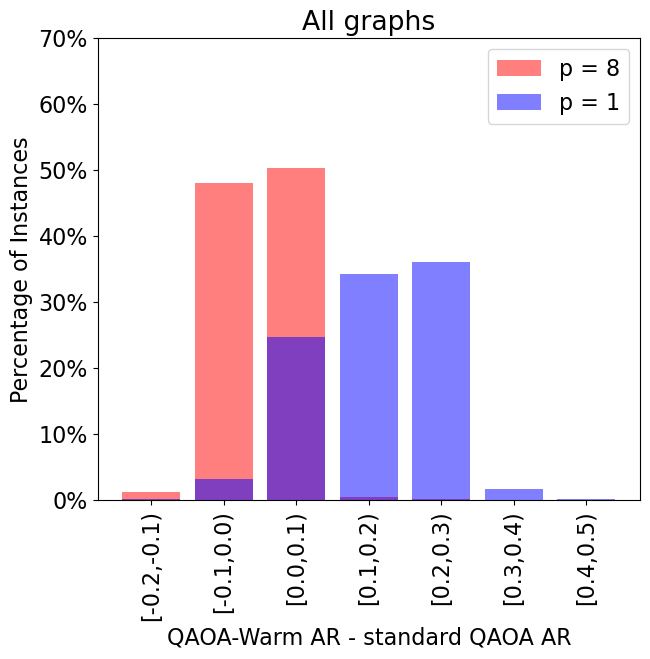}\includegraphics[scale=0.4]{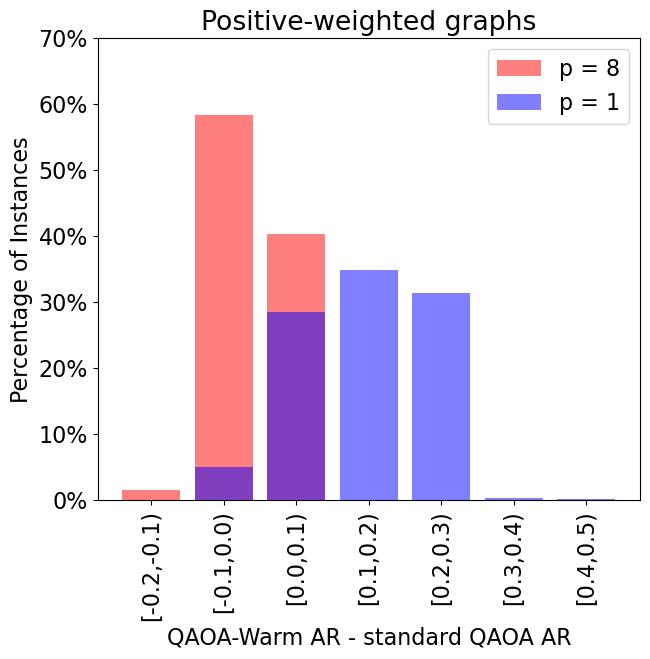}\\
    \caption{Histograms comparing the performance in (depth-$p$) \QAOAw\ (with best of 5 rotations) and (depth-$p$) standard \QAOA\ for both $p=1$ (blue) and $p=8$ (red). Overlapping portions of the histogram are in purple. The \fe{left} plot is generated using the graphs in our graph library $\mathcal{G}$ (see Section \ref{subsec:experimentalSetup}) whereas for the \fe{right} plot, we restrict our attention to only those graphs in $\mathcal{G}$ with positive edge weights. \tsout{Larger bin sizes are used for the far right portions of the histogram; we color the bin label to emphasize this.}} %\ch{I think it's interesting to compare them which would be hard if one is in the appendix.}
    \label{fig:histogramWarmVsStandard}
\end{figure}

% \begin{figure}
%     \centering
%     \includegraphics[scale=0.35]{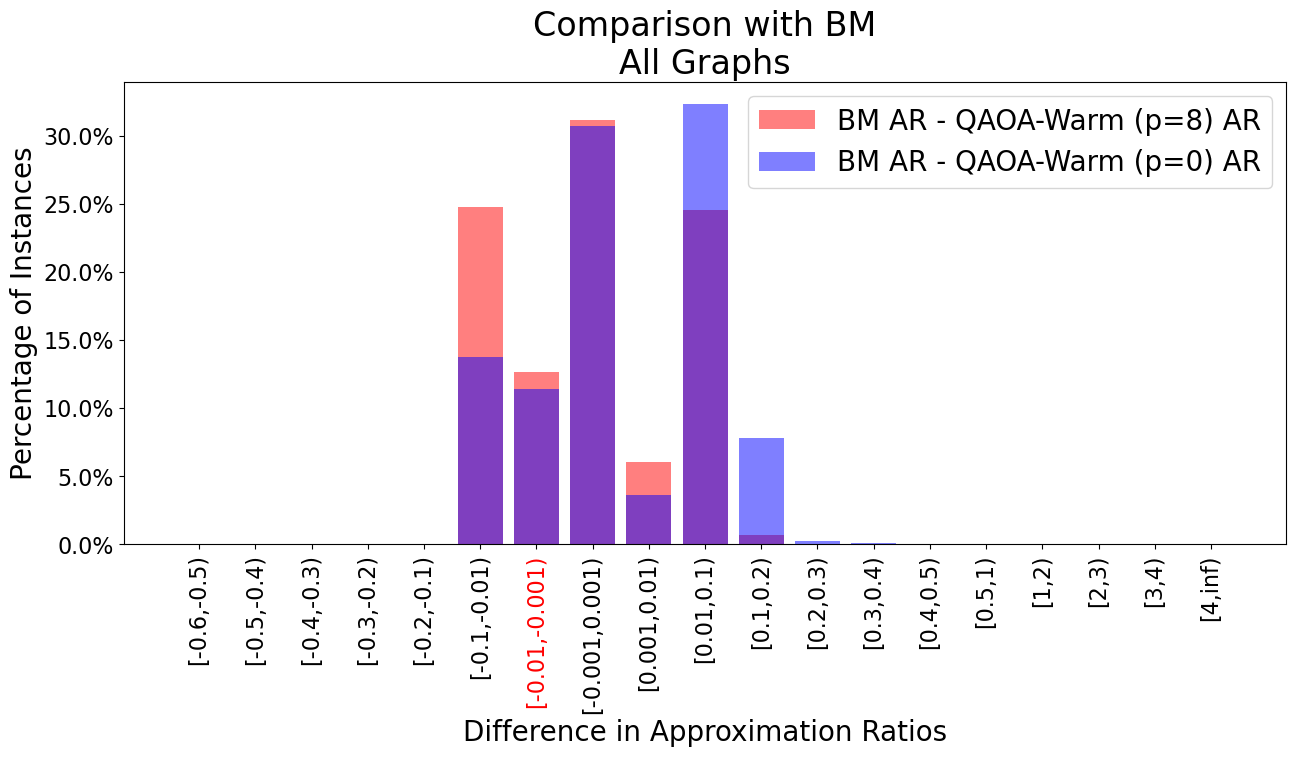}\\
%     \includegraphics[scale=0.35]{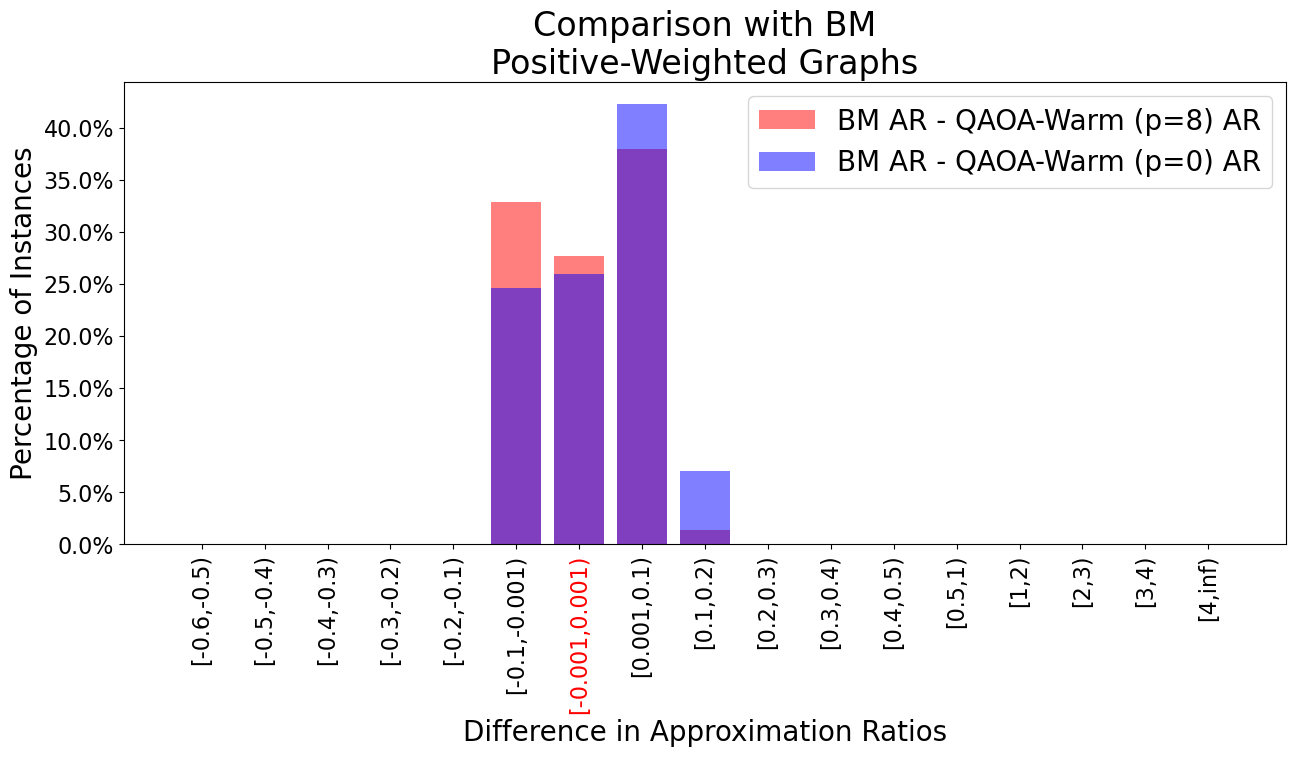}\\
%     \caption{Histograms comparing the (expected) approximation ratio achieved via classical hyperplane rounding (of the BM-MC$_2$ solution) with the (expected) approximation ratio achieved by both depth-0 (blue) and depth-8 (red) QAOA-Warm (with best of 5 vertex-at-top rotations). Overlapping portions of the histogram are in purple. The top plot is generated using the graphs in our graph library $\mathcal{G}$ (see Section \ref{subsec:experimentalSetup}) whereas for the bottom plot, we restrict our attention to only those graphs in $\mathcal{G}$ with positive edge weights. \rt{Larger bin sizes are used for the far right portions of the histogram; we color the bin label to emphasize this.}} 
%     \label{fig:histogramCompareWithBM}
% \end{figure}

\subsection{Aggregate Results}\label{sec:aggregate}
\fe{Here we use aggregated results of \QAOAw{} in order to answer three key questions: (Q1) How does \QAOAw{} fare compare to standard \QAOA{} and classical \mc{} algorithms (BM-MC and Goemans-Williamson), (Q2) How much of \QAOAw's approximation ratio can be attributed to the warm-start itself v/s what is done by the quantum circuit, and (Q3) What are the trends in \QAOAw's approximation ratio with varying depth and graph size and how does this compare with standard \QAOA?}

%\sg{Go over again. Check for flow. Add broader picture statements.} 

\tsout{We next consider the aggregate performance in terms of the approximation ratio obtained by \QAOAw\ (using the best of five vertex-at-top rotations). We begin by seeing how the performance of \QAOAw\ compares to both standard \QAOA\ as well as GW. In Figure \ref{fig:pies} we plot a pie chart with the percentage of instances for which different algorithms did best. Starting at circuit depth $p=1$, we see that \QAOAw\ is the best of the three algorithms for 55.2\% of the instances (where initially \QAOA\ was unable to outperform GW). As the circuit depth increases, standard \QAOA\ outperforms the other two approaches at higher $p$ depth (going from less than 2\% to 40\% of the pie as we go from $p=1$ to $p=8$). We observe a similar phenomenon when considering only positive-weighted graphs with standard \QAOA\ taking up even more of the pie (46\%).}

\noindent 
{\bf (Q1).}\fe{To answer the first question, we compare standard \QAOA, \QAOAw, GW, and hyperplane rounding of the BM-MC$_2$ solutions in Table \ref{tab:comparingAlgorithms}. At depth-1, \QAOAw{} is at least as good as the other three algorithms for 56.1\% of the instances meanwhile standard \QAOA{} is the best for less than 1\% of the instances. However, as the circuit depth increases, standard \QAOA{} is the best algorithm for a larger proportion of instances (37.66\% of instances at depth $p=8$); meanwhile, \QAOAw{} is still at least as good as the other algorithms for 49.8\%, nearly half, of the instances.} These results support our claim that warm-starts show improvements in performance of \QAOA\ at low circuit depths. Since standard \QAOA\ achieves the optimal cut in the limit as the circuit depth increases and thus, for any particular graph, there exists some (instance-dependent) circuit depth $p$ for which standard \QAOA\ beats GW \cite{FGG14}. %On the other hand, we show in Section \ref{subsec:nonoptimalityQAOAw} that there exists situations in which \QAOAw\  ``gets stuck" and fails to improve as the circuit depth increases. 
Current and near-term quantum devices are only able to reliably run \QAOA\ for low circuit depth (due to the presense of quantum noise), and therefore we propose that \QAOAw\ can be of significant use in this regime. Although our current implementation of \QAOAw\ does not perform as well at higher circuit depths (compared to standard \QAOA), it may be possible to extend \QAOAw\ in order to see continued improvement with increased circuit depth by changing the mixers; we discuss this more in Section \ref{sec:discussion}.

\fe{We next consider the difference in approximation ratios obtained by \QAOAw\  and standard \QAOA. In Figure \ref{fig:histogramWarmVsStandard}, we provide a detailed comparison between approximation ratios attained by \QAOAw\ and standard \QAOA\ in the form of a histogram.} We see improvements in the approximation ratio ranging from 0.1 to 0.5 when using warm-starts, especially at low circuit depth. These results are consistent with those depicted in \tsout{the pie charts in Figure \ref{fig:pies}} \fe{Table \ref{tab:comparingAlgorithms}}. %moreover, we find that the gap in approximation ratio for \QAOAw\ and standard \QAOA\ is large \rt{over 0.3} for some instances \rt{at depth $p=1$}\sout{(more than 3.0 when considering all graphs in $\mathcal{G}$ and over 0.3 when considering only those graphs in $\mathcal{G}$ with positive edge weights.)}. 
\fe{We note that in this figure, as in the others, we take the best of 5 vertex-at-top rotations for \QAOAw{}; and in Appendix \ref{sec:QAOAWarmWithMedianAndWorstRotations}, we include results in the case where the median and worst (of 5) vertex-at-top rotations are used instead.}

\noindent 
{\bf (Q2.) }\fe{We now address the second key question regarding how much of the performance of \QAOAw{} can be attributed to the warm-start itself. This is an important question to address because if} the improvement generated by \QAOAw\ \tsout{could be} is due only to the initial quantum state at $p=0$ having higher overlap with good solutions, \fe{then there would be no point in running the quantum device}. To test this, we compare, in Figure \ref{fig:histogramP0vsP8}, the \fe{improvement in approximation ratio  from depth-0 \QAOAw\ (i.e. just measuring the initial state obtained from the preprocessing stage) to depth-1 \QAOAw, as well as the improvement when we change the depth from 1 to 8. For 74 instances, we observed that the approximation ratio from \QAOAw{} improved by at least 50\% when going from $p=0$ to $p=1$ and by at least 80\% for 22 instances. This shows the promise of using QAOA on top of the warm-starts. On the other hand, the increase in approximation ratio from depth-1 \QAOAw{} to depth-8 \QAOAw{} is milder, ranging upto 10\% for positive-weighted instances and upto 22.3\% for general graphs.}
%\QAOAw (compared to 10.5\% of instances when going from $p=1$ to $p=8$). Some instances show a large improvement: for 5.9\% of instances, the improvement from depth-0 \QAOAw{} to depth-1 \QAOAw{} is over 50\% (whereas the maximum improvement seen from $p=1$ to $p=8$ is 22.3\%). 
\fe{These results show that running \QAOAw{} does yield an increase in approximation ratio beyond simply sampling the initial warm-start state; however, the returns diminish with higher circuit depths (this is expected because \QAOAw can plateau for some instances, Section \ref{subsec:nonoptimalityQAOAw}).} \tsout{We find that, although the difference is modest (less than 0.1) in the majority of instances, there are many instances where running \QAOA\ greatly improves the solution quality over only sampling the initial warm-start state.} 

\noindent 
{\bf (Q3).} Lastly, \fe{to address the third question,} we consider how the performance of \QAOAw\ varies across $n$ (number of nodes) and $p$ (circuit depth), which we illustrate for our graph library in Figure \ref{fig:performanceAcrossNandP}. While there is a significant improvement in performance for standard \QAOA\ with increasing circuit depth, we find that \QAOAw{} consistently outperforms standard \QAOA\ (on average), except at $p=8$. \fe{We also see that at fixed depth, the performance of both standard \QAOA\ and \QAOAw\ degrades as the number of nodes increases, while the degradation of \QAOAw\ is much flatter compared to standard \QAOA.} \tsout{When considering only positive-weighted graphs weights, these observations persist, although we see better approximation ratios across the board (for both standard \QAOA\ and \QAOAw).} We further discuss pre-processing and parameter search time for \QAOAw\ in Appendix \ref{sec:appendixPreProcess}.

% \sg{For 15\% instances the approximation ratio increases by at least 0.02, and for 10\% instances we see an improvement of around 0.1. OR MORE DRAMATIC: We find that the approximation ratios improve by a factor of 40\% on an average when the QAOA circuit is applied to the warm-started approximation. TRY TO make a cdf-style plot for the improvement (0-1).}

\begin{figure}
    \centering
    \includegraphics[scale=0.35]{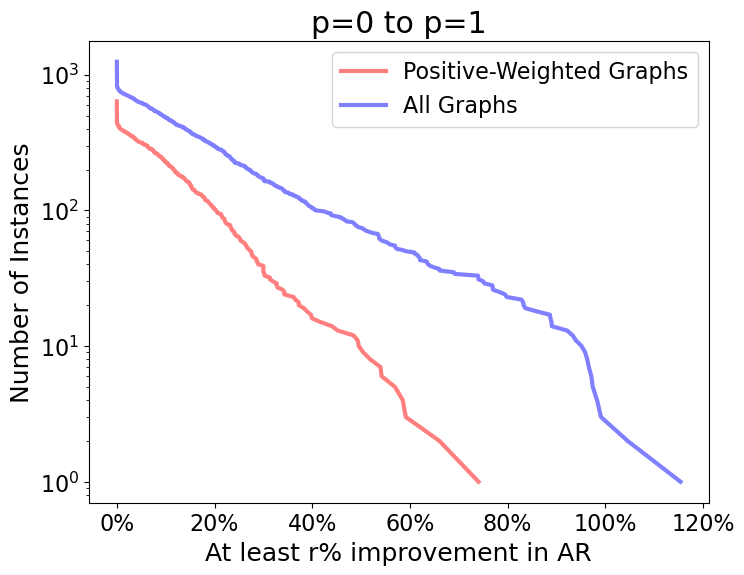}\includegraphics[scale=0.35]{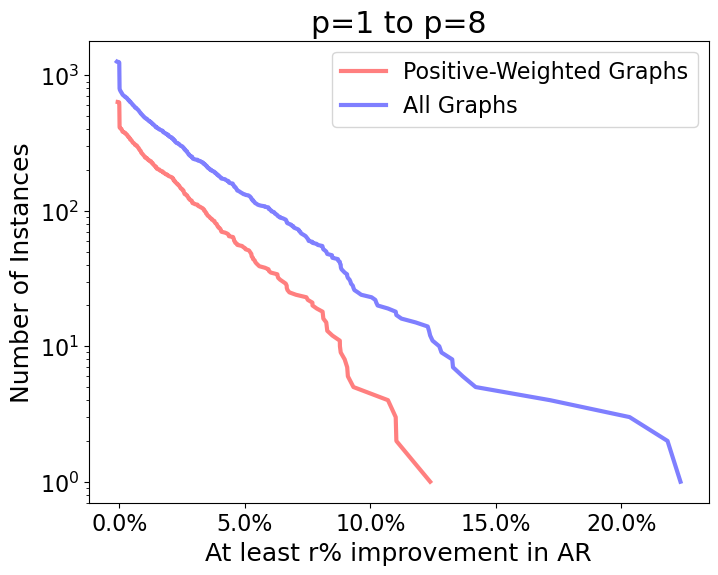}
    \caption{\fe{The number of instances for which \QAOAw{} obtained at least an $r\%$ improvement in expected AR as the circuit depth increases from $p=0$ to $p=1$ (left) and from $p=1$ to $p=8$ (right). For each instance, the best percent improvement (across all five vertex-at-top rotations) is used. Note that \% improvements in approximation ratios go up to 80-120\% from $p=0$ to $p=1$, and up to 12-20\% from as depth increases from $p=1$ to $p=8$.} }
    \label{fig:histogramP0vsP8}
\end{figure}

% \begin{figure}    
% \centering
%  \begin{tikzpicture}
%     \node[inner sep=0] (image) at (0,0) {\includegraphics[scale=0.4]{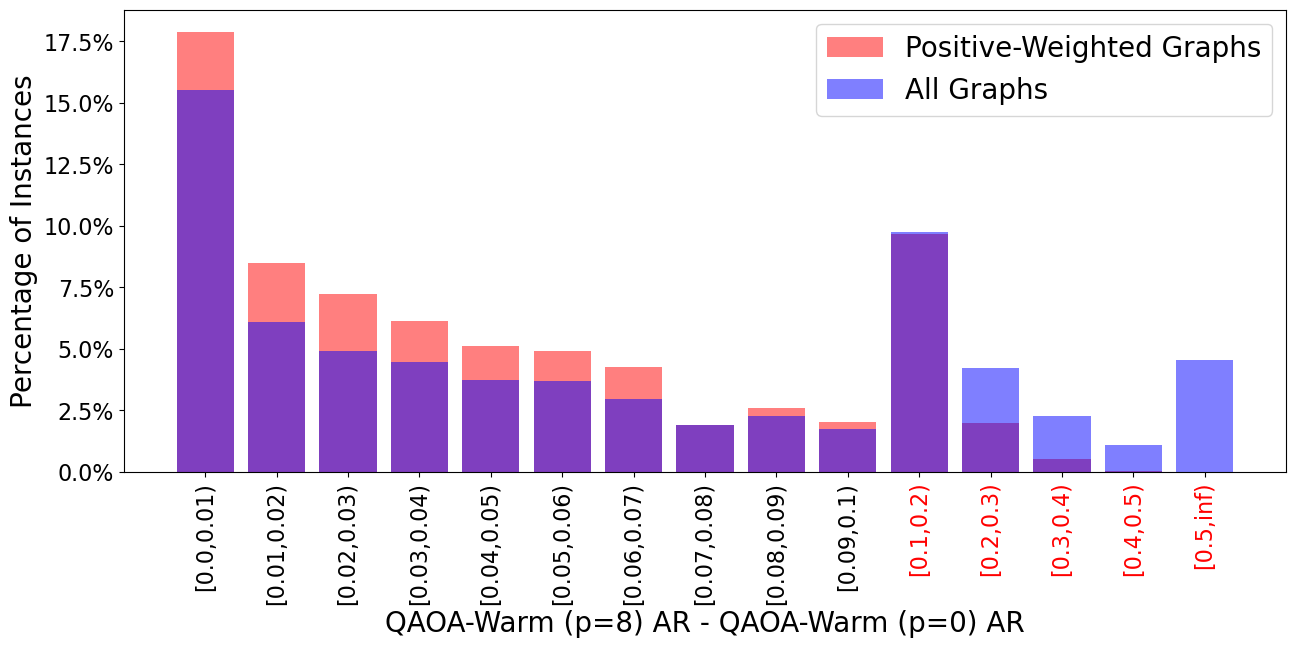}};
%     \draw[red,ultra thick] (image.south east) -- (image.north west);
%     \draw[red,ultra thick] (image.north east) -- (image.south west);
%     \draw[red,ultra thick] (image.south west) rectangle (image.north east);
% \end{tikzpicture}
%  \begin{tikzpicture}
%     \node[inner sep=0] (image) at (0,0) {\includegraphics[scale=0.4]{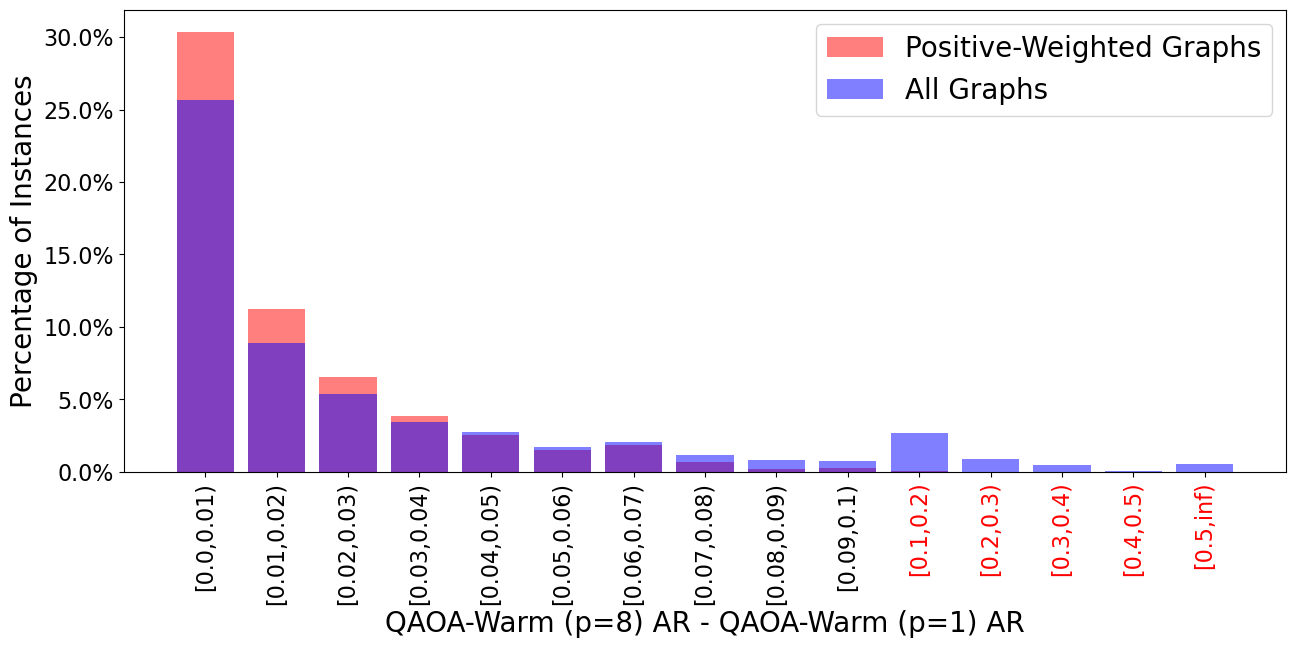}};
%     \draw[red,ultra thick] (image.south east) -- (image.north west);
%     \draw[red,ultra thick] (image.north east) -- (image.south west);
%     \draw[red,ultra thick] (image.south west) rectangle (image.north east);
% \end{tikzpicture}
%     \caption{\sout{This histogram demonstrates that the quantum aspect of \QAOAw\ is showing improvement beyond the classically-obtained initialization at $p=0$; in particular, we plot the difference in approximation ratio between depth-0 and depth-8 \QAOAw. The plot (in blue) is generated using the graphs in our graph library $\mathcal{G}$ (see Section \ref{subsec:experimentalSetup}); we also evaluate the subset of graphs in $\mathcal{G}$ with positive edge weights (in red). Overlapping regions of the histograms are in purple. Red labels along the horizontal axis denote different-sized bins from the rest of the histogram.}}
%     \label{fig:histogramP0vsP8}
% \end{figure}

% \begin{figure}
%     \centering
%     \includegraphics[scale=0.4]{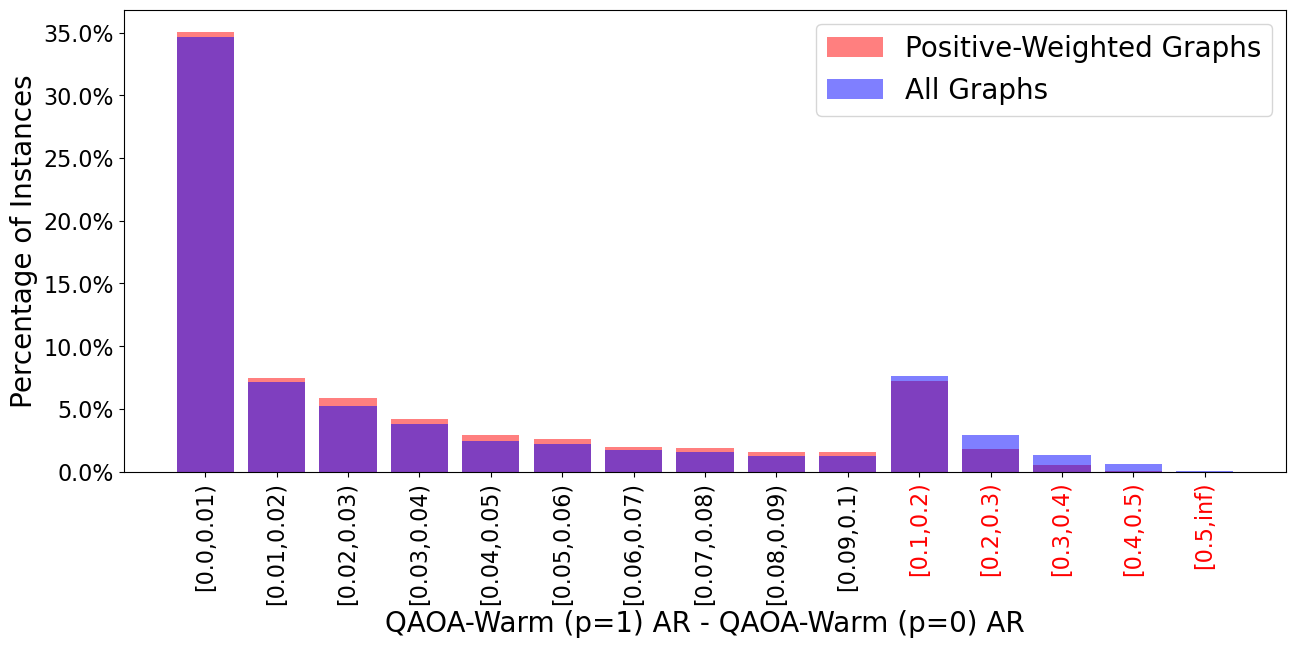}
%     \caption{Caption}
%     \label{fig:my_label}
% \end{figure}

\begin{figure}
    \centering
    \begin{tabular}{ccc}
    All Graphs & Positive-Weighted Graphs\\
    \includegraphics[scale=0.33]{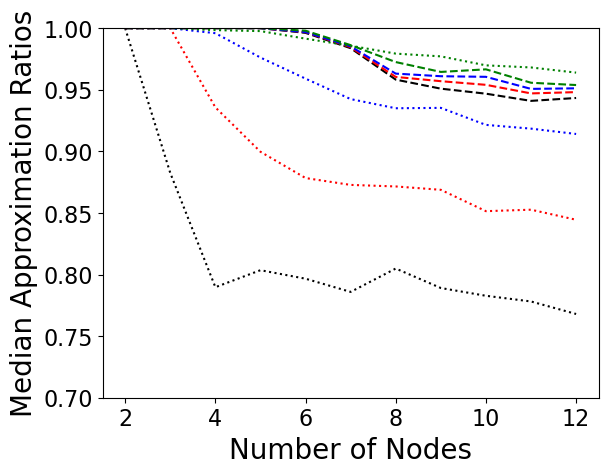}&\includegraphics[scale=0.33]{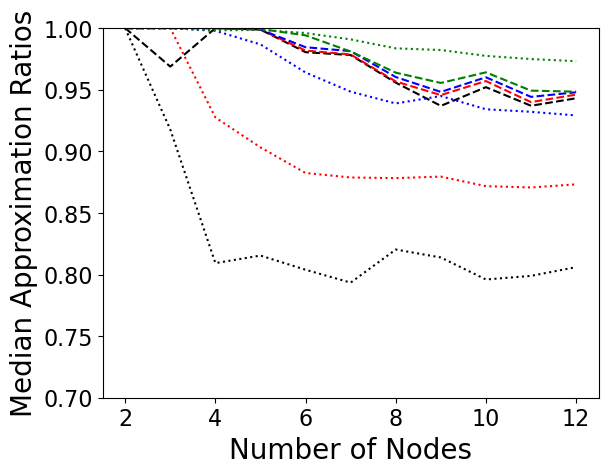}&\raisebox{1cm}{\includegraphics[scale=0.33]{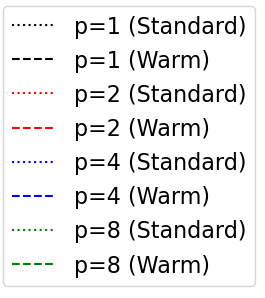}}
    \end{tabular}
    \caption{This figure shows how standard \QAOA\ (dotted) and \QAOAw\ (solid) perform as we alter the circuit depth and the number of nodes. For \QAOAw, we take the best of 5 vertex-at-top rotations. For the left plot, for each $n=2,\dots,\fe{12}$, we find the approximation ratio achieved for both standard \QAOA\ and \QAOAw\ for each $n$-node instance in $\mathcal{G}$ (see Section \ref{subsec:experimentalSetup}), and take the median of those approximation ratios. The right plot is constructed similarly except only instances in $\mathcal{G}$ with positive edge-weights are considered. We plot the results for circuit depths $p=1,2,4,8$.}
    \label{fig:performanceAcrossNandP}
\end{figure}

\subsection{Parameter Landscapes and Trajectories}
\label{subsec:landscapes}
\begin{figure}[t]
	\centering
	\begin{tabular}{cc}
	\begin{tabular}{ccc}
	    &\hspace{0.5cm} {\small No Warm Start} & \hspace{0.5cm}{\small BM-MC$_3$, Vertex 2 to Top}\\
	    
	  	\raisebox{0.5cm}{\includegraphics[scale=0.2]{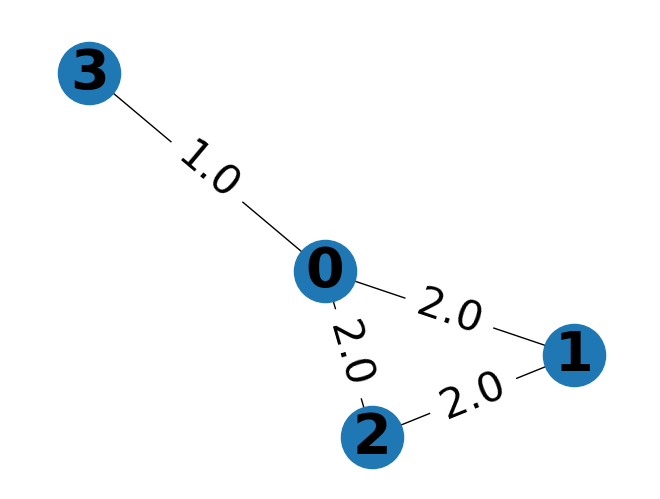}}  & 	\includegraphics[scale=0.33]{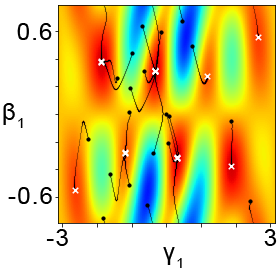} &\includegraphics[scale=0.33]{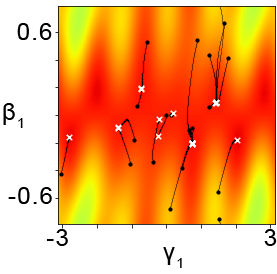}
	    	\\
	       
	       & \hspace{0.5cm}{\small BM-MC$_3$, Vertex 1 to Top}  & \hspace{0.5cm}{\small BM-MC$_3$, Uniform Rotation}\\
	       
	       \raisebox{0.5cm}{\includegraphics[scale=0.5]{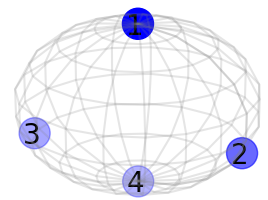}} & \includegraphics[scale=0.33]{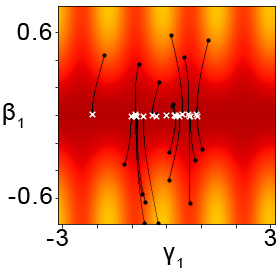} &\includegraphics[scale=0.33]{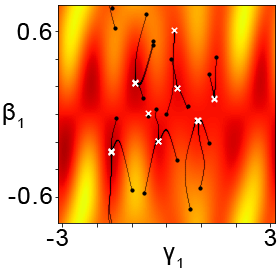} 
	\end{tabular}
	\raisebox{-1.9cm}{\includegraphics[scale=.4]{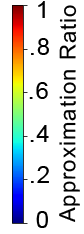}}
	\end{tabular}
	\caption{Parameter landscapes for $\hat{G}$ (top-left) with corresponding SDP solution (bottom-left). For each trajectory of optimization of the variational parameters, we use a black circle to denote the beginning of the trajectory and a white $\times$ to denote the end of the trajectory. When no warm start is used, there are many peaks and valleys (top-center). When vertex 1 rotated to the top; we have a ridge-like landscape with the optimal solutions occurring on the horizontal line $\beta_1=0$ (bottom-center). When rotating vertex 2 at the top instead, the parameter landscape is less ridge-like and the endpoints of the trajectories are more scattered (top-right). When using a uniform rotation we have peaks and valleys similar to when no warm-start was used but with overall better solution qualities (bottom-right).}
	\label{fig:weightedGraphLandscape}
	
\end{figure}

We now consider looking at \emph{all} parameter combinations for $\gamma$ and $\beta$ in order to obtain a better understanding of the landscape that we need to optimize over for standard \QAOA\ and \QAOAw. For any graph $G$, initial state $\ket{s_0}$, and circuit depth $p=1$, we can plot a  \emph{parameter landscape} which allows us to visualize the solution quality as a function of the variational parameters $\gamma_1$ and $\beta_1$. In particular, each point $(\gamma_1,\beta_1)$ in the landscape is assigned a color which corresponds to the approximation ratio (i.e. the quantity \fe{$\frac{F_1(\gamma,\beta) - \text{\sc Min-Cut(G)}}{\text{\mc}(G) - \text{\sc Min-Cut(G)}} $}).

As an example, we plot the parameter landscape for graph $\hat{G}$ in Figure \ref{fig:weightedGraphLandscape} without and with warm-starts (using 2 vertex-at-top rotations and one uniform rotation). For each parameter landscape, we ran the \QAOA\ training loop twenty times with random initializations of $(\gamma_1,\beta_1)$ and overlayed the trajectories of the parameter values throughout the training loop for the variational parameters. When no warm-start is used, the parameter landscape has many peaks and valleys and a wide range of solution qualities; using a warm-start drastically changes the landscape. However, if we rotate one of the approximate solution of BM-MC$_{3}$ for $\hat{G}$ using a vertex-at-top rotation, this yields a ridge-like parameter landscape where the optimal parameter values lie near the line $\beta_1 = 0$. This behavior is no longer there for a different vertex-at-top rotation for the same approximation solution. The endpoints of the optimization trajectories on the resultant are scattered, and the ridge-like shape is not as pronounced.
When performing a uniform rotation, the globally optimal solution qualities are comparable to the solution qualities when rotating vertex 1 to the top; however, the landscape retains some less symmetric peaks and valleys and some of the trajectories end at local optima that are far from optimal.

Overall, we see that the rotation used in the preprocessing stage can have a considerable effect on both the shape of the landscape and the solution qualities. Ideally, with a good choice of rotation, the parameter landscape has a ridge-like shape with high solution qualities near the line $\beta_1 = 0$,  in which case, $\gamma=\beta=\mathbf{0}$ is a natural choice of initialization when running \QAOAw. %We will later see (in Section \ref{subsec:approxBounds}) the ``most ideal" version of this flat, ridge-like shape when discussing initial states with an antipodal structure.

To quantify flatness of the parameter landscapes when using warm-starts, we consider some simple aggregate statistics of the landscapes of all unit-weight graphs\footnote{Due to the symmetries in the \QAOA\ circuit for unit-weight graphs, we know that it suffices to check the values of $F_p(\gamma,\beta)$ for $(\gamma,\beta)$ in $[-\pi,\pi] \times [-\pi/4, \pi/4]$ \cite{ZWCPL19}.} in $\mathcal{G}$. For each graph, we view each point in the parameter landscape as producing a cut with approximation ratio in [0,1]. We compute the minimum, maximum, and average approximation ratios found across each landscape\footnote{The minimum, maximum, and average are computed by considering a discretization of the landscape. In particular, we consider the values of $F_1(\gamma_1,\beta_1)$ for all $(\gamma_1,\beta_1) \in \mathcal{D} = \{(\pi\frac{ i}{50},\frac{\pi}{4}\frac{ j}{  50}) : i=-50,-49,\dots,50 \text{ and } j = -50,-49,\dots,50\}$.}. As shown in Figure \ref{fig:landscapeStatistics}, \QAOAw\ landscapes have lower range of approximation values, e.g., \fe{80.4\%} of the instances have a range of at most 0.4 in the approximation values attained in the landscape. This means that any two choices of $\gamma_1, \beta_1$ parameters will produce solutions with a difference in approximation of at most 0.4. In contrast, only \fe{27.5\%} of our graph instances have such a range of approximation factors for the standard \QAOA. We further see that when we use warm-starts, the overall quality of approximation across the parameter landscape improves. This can be seen by observing a higher minimum, maximum, and average approximation ratios than standard \QAOA.

\begin{figure}
    \centering
    \includegraphics[scale=0.4]{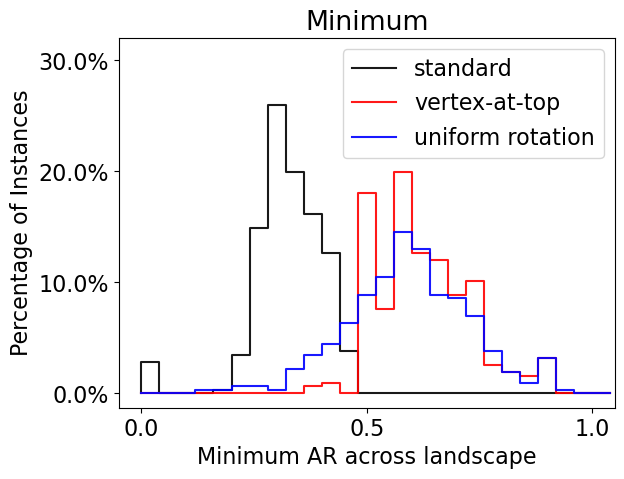}\includegraphics[scale=0.4]{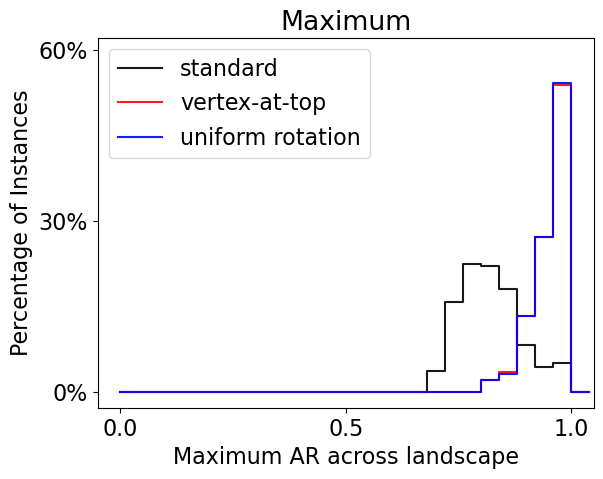}\\
    \includegraphics[scale=0.4]{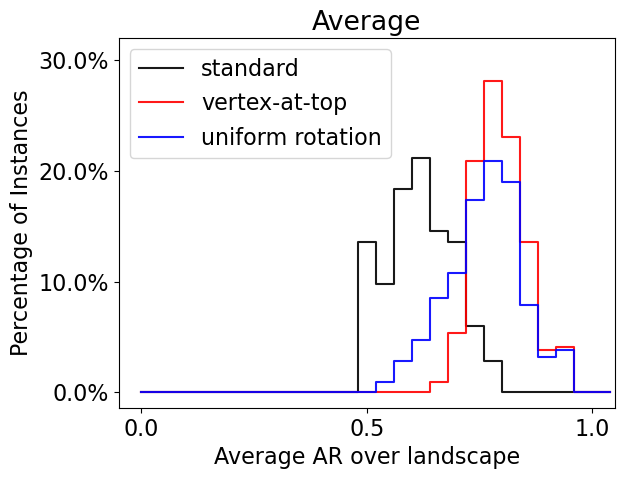}
    \includegraphics[scale=0.4]{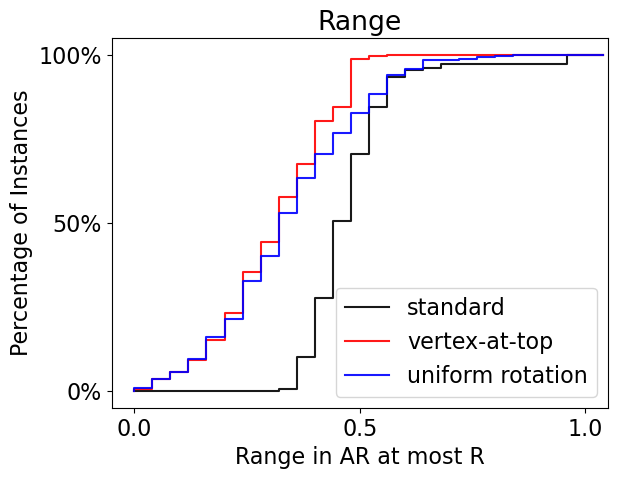}
    \caption{%\ch{fonts too small; don't label every tick?}\rt{(Reuben): Fixed.} 
    This figure shows how various statistics of the parameter landscape change with the variant of \QAOA\ considered (standard \QAOA, \QAOAw\ with vertex-at-top rotations, and \QAOAw\ with random rotations). For each unit weight graphs in our graph library $\mathcal{G}$ (See Section \ref{subsec:experimentalSetup}) and for each \QAOA\ variant, we first generate the parameter landscape; we use a single rank-2 initialization for both rotation schemes considered for \QAOAw. For each landscape, we calculate the minimum, maximum, and average across the landscape in addition to the range (the difference between the highest and lowest approximation ratio achieved in the landscape).}
    \label{fig:landscapeStatistics}
\end{figure}

\section{Theoretical Bounds}\label{sec:theory}

In this section, we theoretically analyze \QAOAw{} and demonstrate its strengths and weaknesses compared to standard \QAOA. The literature on provable approximation guarantees for \QAOA\ is sparse. In 2014, Farhi et al.\ \cite{FGG14} proved a $0.6924$ approximation for $3$-regular graphs at $p = 1$; for triangle-free $d$-regular graphs, Wang et al. \cite{WHJR18} demonstrated that depth-1 QAOA achieves an approximation ratio of at least $\frac{1}{2}\left(1+\frac{1}{\sqrt{e(d+1)}}\right)$.  \fe{Wurst and Love extend Farhi et al.'s result to show that depth-2 \QAOA{} achieves a 0.7559-approximation on 3-regular graphs (and depth-3 \QAOA{} achieves a 0.7924-approximation ratio on 3-regular graphs under some conjectures).} \fe{For higher circuit depths and for more general graphs, not  \tsout{Not} much else} is known about \QAOA\ approximation \fe{bounds}. 

Our results add to this narrative. We first show that, at $p=0$ (i.e. before any gates are applied beyond initialization), \QAOAw\ \fe{on graphs with non-negative edge-weights} achieves at least $0.75\kappa$ and $0.66\kappa$ the \mc{} when using a $\kappa$-approximate BM-MC$_2$ and BM-MC$_3$ solution\footnote{\fe{In this context, we say that a local BM-MC$_k$ solution $x$ is $\kappa$-approximate if the ratio between the BM-MC$_k$ objective at $x$ and the \mc{} is $\kappa$. Note that if $x$ is globally optimal with respect to the BM-MC$_k$ objective, then $\kappa \geq 1$ as the BM-MC$_k$ objective is a relaxation of the \mc{} objective.}} (which may correspond to distributions over cuts) respectively; \fe{for $\kappa > 2/3$ and $\kappa > 3/4$ (for rank-2 and rank-3 respectively), this results in an improvement from the $1/2$-approximation provided by standard \QAOA\ at $p=0$. }\fe{Though the worst-case results on approximation ratios for Burer-Monteiro relaxation give trivial bounds\footnote{\fe{In the worst-case, it is known that a local optimum (up to second order) for a rank $k$ formulation of BM-MC$_k$ is at least a $\lambda = 1-\frac{1}{k-1}$ approximation of the rank-$n$ SDP relaxation for graphs with non-negative edge-weights (Theorem \ref{thm:mei}) \cite{MMMO17}.}} for rank-2 and rank-3 solutions, $\kappa$ could be much higher in practice (e.g., in our simulations, we observed $\kappa \geq 0.999$ for all positive-weighted instances for BM-MC$_3$, the same can be said for BM-MC$_2$ with the exception of 19 instances with the smallest $\kappa$ observed being $\kappa = 0.833$). }

Next, we discuss \QAOAw's performance in the case where the initialization has a particular antipodal structure. We prove that such structures naturally arise when considering locally approximate BM-MC$_3$ solutions for (connected) even cycles. For these cases, vertex-at-top rotations recover the optimal solution. For uniform rotations on these antipodal structures, one can achieve optimality using BM-MC$_2$ solutions. 

\subsection{Approximation Bounds for \QAOAw} 
\label{subsec:approxBounds}
We first show that a biased initialization (using classical algorithms) can improve the theoretically known performance of \QAOA\ in some cases. 

Solving a rank-$n$ SDP relaxation of \mc\ is polynomial time solvable. Obtaining an optimal rank-$1$ solution is the ultimate goal and seems to be the hardest rank-constrained problem. The higher the rank, the more tractable the problem becomes. \tsout{A local optimum (up to second order) for a rank $d$ formulation is within $1-\frac{1}{d}$ of the global optima.~\cite{MMMO17}} We show if one accomplishes the harder objective in the classical phase of \QAOAw, the quantum phase will be better initiated. Our randomized mapping from a low-rank BM solution to the Bloch sphere guarantees to preserve the objective by a factor of $0.75$ and $0.66$ at $p = 0$ for rank-2 and rank-3 initialization respectively.

\begin{theorem}\label{thm:approx}
	Let $G$ be a graph with non-negative edge weights. If $\bx$ is a $\kappa$-approximate solution to {\sc BM-MC}$_3$ (for $G$) in $3$-dimensions, (randomized) initialization of QAOA with $R_U(\mathbf{x})$ has (expected) performance guarantee of $0.66\kappa$ at $p = 0$, i.e., only using quantum sampling with zero circuit depth for QAOA. Similarly, if $\bx$ is a $\kappa$-approximate solution of {\sc BM-MC}$_2$ (for \mc of $G$) in $2$-dimensions, initialization of QAOA with \fe{$R_U(\mathbf{x})$} has an expected performance guarantee of $0.75\kappa$ at $p=0$.
\end{theorem}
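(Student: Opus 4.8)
The plan is to compute directly the expected cut obtained by measuring the initial state $\ket{s_0}$ in the computational basis (this \emph{is} the $p=0$ sample $F_0 = \bra{s_0}H_C\ket{s_0}$), averaged over the uniform rotation $R_U$, and then compare it edge-by-edge to the Burer--Monteiro objective $\mathrm{SDP}_d(\bx^*) = \sum_{(i,j)\in E}\tfrac{w_{ij}}{2}(1-\cos\theta_{ij})$, where $\theta_{ij}$ is the angle between the embedded points $\bx^*(i),\bx^*(j)$ and $d\in\{2,3\}$ is the embedding dimension. First I would note that at $p=0$ the state is a product state, so the qubit outcomes are independent, and that only the \emph{polar} angle of each qubit on the Bloch sphere matters for a $z$-basis measurement (the azimuthal phase is irrelevant). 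Writing $z_v$ for the projection of the rotated point onto the measurement ($z$) axis, qubit $v$ reads $\ket{0}$ with probability $\cos^2(\theta_v/2) = (1+z_v)/2$. For the rank-2 embedding I would first verify that \emph{both} branches of $Q(\theta)$ give the same value $(1+\cos\theta_v)/2$ (since $\cos^2(\pi-\theta/2)=\cos^2(\theta/2)$), so the same formula applies.

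Using independence, the probability that edge $(i,j)$ is cut is
\begin{equation*}
P_{ij} = \tfrac{1+z_i}{2}\cdot\tfrac{1-z_j}{2} + \tfrac{1-z_i}{2}\cdot\tfrac{1+z_j}{2} = \tfrac{1-z_iz_j}{2},
\end{equation*}
so $\mathbb{E}_{R_U}[F_0] = \sum_{(i,j)\in E} w_{ij}\,\mathbb{E}_{R_U}[P_{ij}]$. The crux is the averaging identity $\mathbb{E}_{R_U}[z_iz_j] = \cos\theta_{ij}/d$. I would obtain it by observing that uniformly rotating the embedding is equivalent to fixing the points and drawing the measurement axis $u$ uniformly from $S^{d-1}$, whence $z_v = u\cdot\bx^*(v)$ and $\mathbb{E}[uu^\top]=\tfrac1d I$ give $\mathbb{E}[z_iz_j] = \bx^*(i)^\top\mathbb{E}[uu^\top]\bx^*(j) = \cos\theta_{ij}/d$. (In rank-2 the identical conclusion follows from a product-to-sum expansion of $\cos(\theta_i+\hat\theta)\cos(\theta_j+\hat\theta)$ averaged over the uniform shift $\hat\theta\in[0,2\pi)$.) This yields $\mathbb{E}_{R_U}[P_{ij}] = \tfrac12 - \tfrac{\cos\theta_{ij}}{2d}$.

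Next I would compare contributions edge by edge. Setting $c=\cos\theta_{ij}\in[-1,1]$, the ratio of the rotation-averaged quantum contribution to the SDP contribution is
\begin{equation*}
\frac{\tfrac12 - \tfrac{c}{2d}}{\tfrac{1-c}{2}} = \frac{d-c}{d(1-c)},
\end{equation*}
whose derivative in $c$ is $d(d-1)/(d-dc)^2>0$; hence it is increasing on $[-1,1)$ and minimized at $c=-1$, where it equals $\tfrac{d+1}{2d}$ --- that is $\tfrac23\approx 0.66$ for $d=3$ and $\tfrac34=0.75$ for $d=2$. Since all weights are nonnegative and each SDP contribution is nonnegative (the per-edge inequality is trivial when $c=1$), summing gives $\mathbb{E}_{R_U}[F_0] \ge \tfrac{d+1}{2d}\,\mathrm{SDP}_d(\bx^*)$.

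Finally, I would close the loop using the definition of a $\kappa$-approximate solution together with the relaxation chain $\mathrm{SDP}_d^{\mathrm{opt}} \ge \text{\mc}(G)$ (rank-$d$ relaxes rank-$1$), so $\mathrm{SDP}_d(\bx^*)\ge \kappa\,\text{\mc}(G)$, giving $\mathbb{E}_{R_U}[F_0]\ge \tfrac{d+1}{2d}\kappa\,\text{\mc}(G)$, i.e.\ $0.66\kappa$ and $0.75\kappa$ for $d=3,2$ respectively. I expect the main obstacle to be the rotation-averaging identity $\mathbb{E}[z_iz_j]=\cos\theta_{ij}/d$: one must correctly reduce the quantum computational-basis measurement to the projection of the embedded point onto the measurement axis (discarding the phase), justify that rotating the embedding is the same as randomizing that axis, and --- in the rank-2 case --- handle the two-branch definition of $Q$ so that the clean identity $\mathbb{E}[uu^\top]=\tfrac1d I$ applies uniformly.
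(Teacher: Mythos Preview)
Your proposal is correct and follows the same overall route as the paper: a per-edge analysis comparing the rotation-averaged probability that an edge is cut to that edge's contribution to the BM-MC$_d$ objective, reducing to minimizing the ratio $\dfrac{1-\cos\theta_{ij}/d}{1-\cos\theta_{ij}}$ over $\theta_{ij}\in[0,\pi]$, which is attained at $\theta_{ij}=\pi$ and gives $(d+1)/(2d)$.

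The one genuine difference is in how the averaging $\mathbb{E}_{R_U}[z_iz_j]$ is carried out. The paper fixes coordinates and performs the spherical (respectively circular) integral explicitly, treating $d=3$ and $d=2$ separately. You instead invoke the isotropy identity $\mathbb{E}[uu^\top]=\tfrac{1}{d}I$ for a uniform unit vector $u\in S^{d-1}$, which yields $\mathbb{E}[z_iz_j]=\bx^*(i)^\top\mathbb{E}[uu^\top]\bx^*(j)=\cos\theta_{ij}/d$ in one line and unifies both dimensions. This is cleaner and more conceptual; the paper's explicit integration is longer but makes no appeal to the covariance-of-a-random-direction fact. Your handling of the two-branch definition of $Q$ in rank~2 (checking that both branches give probability $(1+\cos\theta_v)/2$ of reading $\ket{0}$) is exactly the verification needed and matches what the paper implicitly uses.
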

\begin{proof}
	We start by proving the $2/3$ performance of a randomized mapping from BM-MC$_3$ to the Bloch sphere.
	Let $F_0^\prime = F_0^\prime(\gamma,\beta)$ be the expected value of {\sc Max-Cut} obtained by quantum sampling (i.e., QAOA for $p=0$) \tsout{and let $x^*$ be a (globally) optimal solution to BM-MC$_3$}. Then,
	\begin{align*}
		&\frac{F_0^\prime}{\text{\mc}(G)}\ge \kappa \cdot \frac{F_0^\prime}{\text{BM-MC}_3(\bx)}
		\tag{$\text{since {\sc BM-MC}}_3(\bx) \tsout{\ge {\kappa \text{BM-MC}_3(x^*)}} \ge \kappa \text{\mc}(G)$} \\
		&\ge \kappa \min_{(i,j) \in E} \frac{\mathbb{E}[\mathbf{1}[i \text{ and } j \text{ have different spins}]]}{\frac{1}{4} \|\bx_i - \bx_j\|^2} \tag{{$ \frac{a+c}{b+d} \geq \min(\frac{a}{b}, \frac{c}{d})$ for $a,b,c,d \geq 0$}; {$w_{ij}$'s cancel}}.
	\end{align*}
	
	%\gm{This is very unclear- too much going on in one sentence:} \sg{check now}\rt{Looks good} 
	It suffices to lower bound the ratio between edge-wise contribution from quantum sampling versus edge-wise contribution to the semi-definite objective (which upper bounds the BM-MC$_k$ denominator). Instead of rotating the sphere, we can choose a random direction $w \in S^2$ to correspond to the positive spin of the Bloch sphere. Consider an edge $e = (i, j) \in E$ whose endpoints are at angles $\alpha$ on $S^2$ with respect to $\mathbf{x}$, i.e., $\bx_i \cdot \mathbf{x}_j = \cos{\alpha}$. Let $\theta_1$ and $\theta_2$ correspond to angles from $x_i$ and $x_j$ to the positive spin $w$ of the (rotated) sphere. We can write
	\begin{align*}
		&\min_{(i,j) \in E} \frac{\mathbb{E}[\mathbf{1}[i \text{ and } j \text{ have different spins}]]}{\frac{1}{4} \|\bx_i - \bx_j\|^2}
		\geq
		\min_{\alpha \in [0,\pi]} \frac{f(\theta_1,\theta_2)}{ \sin^2(\alpha/2)},
	\end{align*}
	where
	$$f(\theta_1,\theta_2) = \cos^2\frac{\theta_1}{2}\sin^2\frac{\theta_2}{2} + \cos^2\frac{\theta_2}{2}\sin^2\frac{\theta_1}{2},$$
	where we replaced $\mathbb{E}[\mathbf{1}[i \text{ and } j \text{ have different spins}]]$ by a sum of probabilities of the two cases corresponding to assignment of different spins to $i$ and $j$, formulated considering the state is a product state and observing that $\|x_i - x_j\|^2 = 2 - 2 \cos(\alpha) = 4 \sin^2(\theta/2)$. We can rewrite the above as
	\begin{align*}
		&\min_{\alpha \in [0,\pi]} \frac{ \mathbb{E}_{\theta_1, \theta_2 | \alpha} [g_1(\theta_1)g_2(\theta_2) + g_2(\theta_1)g_1(\theta_2)]}{ 2 (1-\cos (\alpha))}
		= \min_{\alpha \in [0,\pi]} \frac{ \mathbb{E}_{\theta_1, \theta_2 | \alpha} [1-\cos(\theta_1)\cos(\theta_2)]}{  (1-\cos (\alpha))},
	\end{align*}
	where $g_1(\theta) = 1+\cos(\theta)$
	and
	$g_2(\theta) = 1-\cos(\theta).$

	To further simplify notation of our optimization problem let us assume that instead of rotating $x_i$ and $x_j$ and sampling with respect to a spin direction, we randomly choose the positive spin pivot $w$ such that the $z$-axis is now rotated to be at $w\in S^2$. Without loss of generality, assume $x_i = (1,0,0), x_j = (\cos \alpha , \sin \alpha, 0)$ and $w = (\cos \theta,\sin \theta \cos \phi, \sin \theta \sin \phi) \in S^2$ is uniformly sampled from the sphere. Let
	$$h(\theta,\phi,\alpha) =  \cos \theta (\cos \alpha \cos \theta + \sin \alpha \sin \theta \cos \phi ) .$$
 This give us the following: 
	\begin{align*}
		&\min_{\alpha \in [0,\pi]} \frac{ \mathbb{E}_{\theta_1, \theta_2 | \alpha} [1-\cos(\theta_1)\cos(\theta_2)]}{  (1-\cos (\alpha))}\\
		&= \min_{\alpha \in [0,\pi]} \frac{1 - \frac{1}{4\pi} \int_0^{\pi} \int_0 ^{2\pi}   h(\theta,\phi,\alpha)  \sin \theta d \phi d \theta }{1-\cos \alpha}
		\\
		&= \min_{\alpha \in [0,\pi]} \frac{1 - \frac{1}{2} \cos \alpha \int_0^{\pi} \sin \theta \cos^2 \theta d \theta}{1-\cos \alpha}
		\\
		&= \min_{\alpha \in [0,\pi]} \frac{1 - \frac{\cos \alpha}{2} \left[ \frac{-1}{3} \cos^3 \theta \right]_0^{\pi}}{1-\cos \alpha}
		= \min_{\alpha \in [0,\pi]} \frac{1 - \frac{\cos \alpha}{3} }{1-\cos \alpha}
		= \frac{2}{3}. 
	\end{align*}
	This finishes the proof for BM-MC$_3$.
	
	Recall that for BM-MC$_2$, we perform a uniformly at random rotation along a unit circle on the Bloch sphere passing through $\ket{0}$ and $\ket{1}$.
	The proof is similar to the rank $k=3$ case, and easier. It suffices to lower bound the following ratio: 
	\begin{align*}
		&\min_{(i,j) \in E} \frac{\mathbb{E}[\mathbf{1}[i \text{ and } j \text{ have different spins}]]}{\frac{1}{4} \|\bx_i - \bx_j\|^2}
		\geq
		\min_{\alpha \in [0,\pi]} \frac{f(\alpha)}{ \sin^2(\alpha/2)},
	\end{align*}
	by $0.75$. Here $f(\alpha)$ denotes the probability that two unentangled qubits with (angular) distance $\alpha$ over the sphere/circle, are measured by opposite spins. Similar as in previous proof we can simplify the ratio as
	\begin{align*}
		\min_{\alpha \in [0,\pi]} \frac{ \mathbb{E}_{\theta_1, \theta_2 | \alpha} [1-\cos(\theta_1)\cos(\theta_2)]}{  (1-\cos (\alpha))},
	\end{align*}
	where
	$\theta_1$ and $\theta_2$ are the angles between two vertices and the pivot.

	Again we can think of vertices to be fixed over the sphere and randomly rotate the $\ket{1}$ pivot. Without loss of generality, let $x_i = (1,0)$ and $x_j = (\cos \alpha, \sin \alpha)$. The random pivot can be formulated as $(\cos \theta, \sin \theta)$ where $\theta$ is uniformly distributed over $[0,2\pi)$. We can write
	$\theta_1 = \theta$ and $\theta_2 = \theta-\alpha$. The target ratio can be written as
	\begin{align*}
		&\min_{\alpha \in [0,\pi]} \frac{ 1 - \frac{1}{2\pi}\int_0^{2\pi} \cos(\theta)\cos(\theta-\alpha) d\theta}{  1-\cos (\alpha)}\\
		= & \min_{\alpha \in [0,\pi]} \frac{ 1 - \frac{1}{2\pi}\int_0^{2\pi} \cos^2(\theta) \cos(\alpha) d \theta + 0 }{  1-\cos (\alpha)}\\
		= & \min_{\alpha \in [0,\pi]} \frac{ 1 - \frac{1}{2} \cos(\alpha) }{  1-\cos (\alpha)}
		= \frac{3}{4}.
	\end{align*}
\end{proof}

\fe{In addition to preserving a provable fraction of the classical BM-MC$_k$ objective, \QAOAw{} can provably outperform standard \QAOA{} on certain families of graphs. In particular, for even cycles, we show that any locally optimal BM-MC$_3$ is also a globally optimal rank-1 solution. Hence the preprocessing stage yields a warm-start that is simply a collection of antipodal points corresponding to the \mc{}, which can easily be recovered with a suitable rotation. On the other hand, depth-$p$ standard \QAOA{} is only able to achieve an approximation ratio at most $(2p+1)/(2p+2)$ \cite{FGG14,WHJR18}. More details regarding the behavior of \QAOAw{} on warm-starts with antipodal structures can be found in Appendix \ref{sec:antipodalStructures}. }

\subsection{Limitations of \QAOAw}

\label{subsec:nonoptimalityQAOAw}
We now look at some of the limitations of \QAOAw. Specifically, we observe a decrease in performance due to warm-starts close to the eigenstates of the mixer with zero eigenvalue. %We also briefly discuss \QAOAw{}'s behavior on disconnected graphs.

\subsubsection{\QAOAw\ at High Circuit Depth} 

In the case of the standard initialization for \QAOA, we know that with the optimum choice of parameters $\gamma,\beta$, \tsout{one can find the \mc\ of $G$ in the limit as} \fe{the probability of sampling the \mc\ approaches 1 as} the circuit depth \tsout{$p \to \infty$} \fe{$p$ approaches infinity}. This is not the case for \QAOAw:

\begin{theorem}\label{thm:warmstart}
	There exist initializations of \QAOAw\ \tsout{that prevent the algorithm from obtaining the optimal solution, even as the circuit depth $p \to \infty$}\fe{for which the probability of measuring the \mc\ does not approach 1 as the circuit depth $p$ tends to infinity.}
\end{theorem}

\begin{proof}[Proof idea] Let $G$ be a graph on two vertices connected by an edge of unit weight. Suppose that we run \QAOAw\ starting with the state $\ket{s} := \ket{u} \otimes \ket{v}$ where $\ket{u} := \ket{+} = \frac{1}{\sqrt{2}}(\ket{0}+\ket{1})$ and $\ket{v} := \ket{-} = \frac{1}{\sqrt{2}}(\ket{0}-\ket{1})$. Note that $\ket{s}$ corresponds to an optimal solution to BM-MC$_3$.
	
	For $p=0$, it is easy to show that we obtain, in expectation, 50\% of the \mc\ of $G$. After the cost term of the circuit is applied, the resulting state is unaffected by the mixing term (since the resulting state is an eigenstate of the mixing Hamiltonian with eigenvalue zero) and thus there is no change in measurement. Even higher circuit depths have no effect in driving the system out of the eigenstate and thus $F_p(\gamma,\beta) = 0.5$ for all $p$ and any $\gamma,\beta$, i.e., \QAOAw\  (initialized with $\ket{s}$) obtains only 50\% of the \mc\ of $G$ \fe{(in expectation)} regardless of circuit depth or choice of variational parameters.
\end{proof}
We include a complete proof in Appendix \ref{sec:appendixproofs}. The previous theorem shows that \QAOAw\ may perform poorly on \emph{specific} states, however we next discuss that this behavior is consistent across slight perturbations around this state as well. In Figure \ref{fig:oneEdgePerturbation}, at any point $(\phi, \theta)$ we depict the percentage of \mc\ obtained using the optimal choice of variational parameters if the initial state of the first qubit is given by the polar and azimuthal angles $\theta$ and $\phi$ and the second qubit is diametrically opposed. Note that the optimal \mc\ \tsout{can only be} \fe{is} achieved \fe{with probability 1 only} when both vertices lie in the $yz$-plane. The worst case occurs when the vertices lie on the $x$-axis; this is consistent with Theorem \ref{thm:warmstart}. %and Lemma \ref{lem:halfApproximation}. 
In general, \fe{in expectation,} there is a larger gap to optimality the closer the solution $\mathbf{x}$ is to the $x$-axis; which suggests that it is reasonable to embed the approximate solutions of BM-MC$_2$ in the $yz$-plane of the Bloch sphere (as done in our preprocessing stage). Lastly, we believe that this behavior is consistent at larger circuit depths as well. 

\begin{figure}[t]
	\centering
	\includegraphics[scale=0.5]{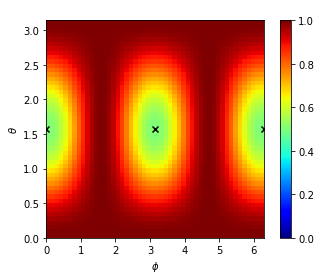}\hspace{0.1cm}\rotatebox{90}{\hspace{1cm} {\small Approximation Ratio}}
	\caption{A plot of the percentage of the \mc\ achieved with \QAOAw\ (when the optimal variational parameters are chosen) with $p=1$ for a one-edge graph $G$ at various starting states $\ket{s_0} = Q(x)$ where one point of $x$ has polar angle $\theta$ and azimuthal angle $\phi$ and the remaining point is diametrically opposed. The starting states that perform the worst, i.e. $\ket{+-}$ and $\ket{-+}$, are marked with a black $\times$. For each point in the figure, the optimal variational parameters were estimated by performing a dense grid-search over the variational parameter space.}
	\label{fig:oneEdgePerturbation}
\end{figure}

%\gm{I suggest getting rid of this subsubsection. QAOA doesn't help on disconnected graphs in general, it will never come up}
%\subsubsection{\QAOAw\ on Disconnected Graphs}The objective of BM-MC$_k$ is invariant under rotation of components; thus, when preparing the initial state, one should choose a suitable rotation for \emph{each} component, otherwise, \QAOAw\ may be unable to align each component \emph{simultaneously} to the measurement axis. However, in practice, noisy intermediate-scale quantum (NISQ) devices are only able to work with a limited number of qubits and thus, it is preferred to run \QAOA\ on each component of a disconnected graph \emph{separately} instead of running \QAOA\ on the entire graph.

\section{Discussion}\label{sec:discussion}
In this work, we proposed using classical approximate solutions to low-rank \mc\ formulations to initialize the \QAOA\ algorithm. There are significant differences in classical approximation algorithms for \mc\ and quantum algorithms. For example, in the classical approach the vertices that share the same 3-dimensional representation on the sphere will always be on the same side of the cut (no matter which hyperplane is selected). In contrast, quantum sampling creates a very different distribution (with a larger support) over cuts, wherein vertices with the same state can be sampled on different sides of the cut. Despite this difference, we observe that as the angle $\theta$ of the vertices to the measurement axis approaches 0, the probability distribution of the classical solution approaches that of the quantum sampling. Intuitively, as vertices start clustering at the antipodes on the 3-dimensional sphere, quantum sampling of the corresponding qubits and hyperplane rounding of the 3-dimensional representation both give similar cuts. Moreover, SDP-based solutions spread adjacent vertices (with positive edge weights) as far as possible on the $k$-dimensional sphere, which can be beneficial for quantum sampling as well. %Therefore, warm-starts allow us to partially solve the whole graph \emph{before} starting the quantum computations, and we suspect that the bounds on the depth of QAOA to ``see" the whole graph for optimal solutions in \cite{FGG20} may not hold for \QAOAw.

Standard \QAOA\ is a local algorithm~\cite{FGG14}. If the circuit depth $p$ is not high enough, then standard \QAOA\ may fail to achieve near-optimal solutions \cite{FGG20,BKKT19}. However, when one considers the preprocessing stage used in \QAOAw, such a locality property no longer exists. % due to the fact that the BM-MC$_k$ formulation does not have such locality properties. 
A clear example of this is BM-MC$_2$ applied to an odd cycle: the optimal solution consists of the vertices evenly spaced apart along the unit circle. However, if a single edge is deleted, the optimal solution collapses to a rank-1 solution. The edge deletion has a global effect on the positions of all the vertices, and consequently, on the probability of each edge being cut. Put another away, although the quantum operations in \QAOAw\ are still local, the warm-start encodes information about the global structure of the graph, in which case, building up correlations between distant qubits (via a high circuit depth) may not be necessary if a high-quality warm-start is used. %the preprocessing stage encodes information about the global structure of the graph which has the potential to allow \QAOAw\ to achieve desirable approximation ratios with lower circuit depth.
%\ch{This is an interesting discussion. Does this just give the quantum portion following the warm-start a leg up? The operations are still local and will only propagate as far as the p-depth, however you won't necessarily need changes to distant qubits if the warm start state was very good.}\rt{I've edited the language to make it clear that the quantum operations for QAOA-Warm are still local so as to not mislead the reader.}

Warm-starts also appear to flatten the energy landscape in terms of $(\beta,\gamma)$. In the most extreme case (for example Figure~\ref{fig:fourCycleLandscape}(b)), the warm start finds the optimal solution, completely decoupling the QAOA optimization loop from $\gamma_1$ and the cost Hamiltonian $H_C$. Even when this does not occur, warm-starting still appears to make \QAOA\ less sensitive to initial $(\beta,\gamma)$ values by starting off in the neighborhood of a possible solution. In particular, the role of $\gamma$ is diminished, as the warm-start has already begun optimizing the cost-energy. This suggests that \QAOAw\ serves as a kind of dimensional reduction, emphasizing the amplitude manipulation of the mixer over the energy weighting of the cost Hamiltonian. This is not a guarantee that the \QAOA\ optimization will find the optimal solution in the reduced space; the reduction may hide the optimal solution for graphs that are especially challenging for SDP solvers. However, this flattening may prove important for physical implementations of \QAOA. The warm-start flattened landscapes may make QAOA more robust to both classical and quantum noise that would otherwise complicate the optimal solution search.

In this work, we restricted our attention to rank-2 and rank-3 initializations, whereas in classical methods, one could also make an attempt at finding rank-$k$ ($k>3$) solutions. These solutions are easier to find, and yield provably better approximations as $k$ increases \cite{MMMO17}. However, increasing the number of dimensions makes the mapping to the quantum states non-trivial. Exploration of higher-rank approximations are left as a future research direction. 

Another direction for future work is to apply \QAOAw\ to other combinatorial problems. One path is reduction of other problems in NP to \mc\ \cite{Karp1972}. Alternately, Quadratic Unconstrained Binary Optimization (QUBO) problems can easily be recast as a \mc\ problem (and vice versa) with the number of variables differing by at most 1 \cite{dunning2018}. Many combinatorial problems are easily expressed in the form of a QUBO \cite{GKD19,lodewijks2020} so \mc\ is also an interesting problem to consider from a practical standpoint. However, it may be of interest to see if our approach can be used directly for other combinatorial problems without resorting to such reductions.

\tsout{Another open question is QAOA performance on mixed-weight graphs (graphs with both positive and negative edge weights). In Figure \ref{fig:histogramWarmVsStandard}, the graphs with the largest gap in approximation ratios between depth-0 and depth-8 \QAOAw\ were all mixed weight graphs. Similar phenomena can be observed in Figures \ref{fig:histogramP0vsP8} and \ref{fig:eggerResults} as well. Note that the 0.878-approximation ratio for GW does not hold for mixed weight graphs. In fact, the only classical approximation factor known for mixed weight graphs is a $O(1/\log(n))$, when the sum of all the edge-weights is positive \cite{CW04}. Moreover, for mixed weight graphs where the sum of the positive edge weights and the sum of (absolute values of) negative edge weights are similar, then standard \QAOA\ begins with an approximation ratio of approximately zero. As a future direction, better understanding of the instances for which \QAOAw\ performs better than standard \QAOA\ for both positive-weighted and mixed weight graphs would be interesting.}

Lastly, as seen in Section \ref{subsec:nonoptimalityQAOAw}, we acknowledge that \QAOAw, in its current form, has limitations; in particular, increased circuit depth does not necessarily yield optimality of \mc\ in the limit. Nonetheless, we believe that \QAOAw\ is a promising approach since, even at low circuit depth, it is able to start with relatively high approximation ratios (compared to standard \QAOA). This performance may be extendable to higher circuit depth via modifications to the mixing Hamiltonian $H_B$; this idea yields positive results in Egger et al.'s work and we believe that mixer modifications may be beneficial to \QAOAw\ as well.

\section{Conclusion}
\label{sec:conclusion}
We explored the idea of \emph{warm-starts} for initializing the quantum state of the \QAOA\ algorithm, and showed promising experimental and theoretical results for low-rank initializations using approximate SDP solution. On average, we find that \QAOAw\ performs better in terms of time and quality of solutions in low depth circuits, compared to standard \QAOA. \fe{Moreover, even though a portion of the approximation ratio of \QAOAw\ can be attributed to the classical warm-start itself, we find that running \QAOAw{} introduces significant improvements in expected cut quality beyond simply (quantum) sampling the initial warm-start state for many instances.} As the circuit depth increases, \QAOAw\ is however unable to converge to the optimal solution (unlike standard \QAOA). We believe that this could be remedied by considering further modifications to \QAOAw\ (e.g. modifying the mixing Hamiltonian), \fe{although standard mixers might provide easier implementation on certain hardware. We further acknowledge that beyond the approximation ratio, there are a variety of methods and metrics in which to measure the performance of \QAOA{} and its possible variants. We leave such an exploration of the cut distributions (and metrics on those distributions) for potential future work; we refer the reader to a paper by Herrman et al. for such results on standard \QAOA\  \cite{HTOLHS21}.}

\fe{Overall, we believe that the use of the standard mixers with warm-starts allows a principled way of bringing in information from classical solvers into quantum algorithms. The concept of warm-starts and plateauing of quality of approximation at higher $p$ depth could be of interest to researchers looking at reachability of solution state space and, at the limitations and strengths of the standard QAOA itself.} 

\section{Acknowledgements}
	This material is based upon work supported by the Defense Advanced Research Projects Agency (DARPA) under Contract No. HR001120C0046.

%\bibliographystyle{ACM-Reference-Format}
%\bibliography{refs}

%apsrev4-2.bst 2019-01-14 (MD) hand-edited version of apsrev4-1.bst
%Control: key (0)
%Control: author (8) initials jnrlst
%Control: editor formatted (1) identically to author
%Control: production of article title (0) allowed
%Control: page (0) single
%Control: year (1) truncated
%Control: production of eprint (0) enabled
\def\authornoop#1{}
\begin{thebibliography}{52}%
\makeatletter
\providecommand \@ifxundefined [1]{%
 \@ifx{#1\undefined}
}%
\providecommand \@ifnum [1]{%
 \ifnum #1\expandafter \@firstoftwo
 \else \expandafter \@secondoftwo
 \fi
}%
\providecommand \@ifx [1]{%
 \ifx #1\expandafter \@firstoftwo
 \else \expandafter \@secondoftwo
 \fi
}%
\providecommand \natexlab [1]{#1}%
\providecommand \enquote  [1]{``#1''}%
\providecommand \bibnamefont  [1]{#1}%
\providecommand \bibfnamefont [1]{#1}%
\providecommand \citenamefont [1]{#1}%
\providecommand \href@noop [0]{\@secondoftwo}%
\providecommand \href [0]{\begingroup \@sanitize@url \@href}%
\providecommand \@href[1]{\@@startlink{#1}\@@href}%
\providecommand \@@href[1]{\endgroup#1\@@endlink}%
\providecommand \@sanitize@url [0]{\catcode `\\12\catcode `\$12\catcode
  `\&12\catcode `\#12\catcode `\^12\catcode `\_12\catcode `\%12\relax}%
\providecommand \@@startlink[1]{}%
\providecommand \@@endlink[0]{}%
\providecommand \url  [0]{\begingroup\@sanitize@url \@url }%
\providecommand \@url [1]{\endgroup\@href {#1}{\urlprefix }}%
\providecommand \urlprefix  [0]{URL }%
\providecommand \Eprint [0]{\href }%
\providecommand \doibase [0]{https://doi.org/}%
\providecommand \selectlanguage [0]{\@gobble}%
\providecommand \bibinfo  [0]{\@secondoftwo}%
\providecommand \bibfield  [0]{\@secondoftwo}%
\providecommand \translation [1]{[#1]}%
\providecommand \BibitemOpen [0]{}%
\providecommand \bibitemStop [0]{}%
\providecommand \bibitemNoStop [0]{.\EOS\space}%
\providecommand \EOS [0]{\spacefactor3000\relax}%
\providecommand \BibitemShut  [1]{\csname bibitem#1\endcsname}%
\let\auto@bib@innerbib\@empty
%</preamble>
\bibitem [{\citenamefont {Farhi}\ \emph {et~al.}(2020)\citenamefont {Farhi},
  \citenamefont {Gamarnik},\ and\ \citenamefont {Gutmann}}]{FGG20}%
  \BibitemOpen
  \bibfield  {author} {\bibinfo {author} {\bibfnamefont {E.}~\bibnamefont
  {Farhi}}, \bibinfo {author} {\bibfnamefont {D.}~\bibnamefont {Gamarnik}},\
  and\ \bibinfo {author} {\bibfnamefont {S.}~\bibnamefont {Gutmann}},\
  }\bibfield  {title} {\bibinfo {title} {The quantum approximate optimization
  algorithm needs to see the whole graph: {A} typical case},\ }\href@noop {}
  {\bibfield  {journal} {\bibinfo  {journal} {arXiv preprint arXiv:2004.09002}\
  } (\bibinfo {year} {2020})}\BibitemShut {NoStop}%
\bibitem [{\citenamefont {Preskill}(2018)}]{P18}%
  \BibitemOpen
  \bibfield  {author} {\bibinfo {author} {\bibfnamefont {J.}~\bibnamefont
  {Preskill}},\ }\bibfield  {title} {\bibinfo {title} {Quantum computing in the
  {NISQ} era and beyond},\ }\href@noop {} {\bibfield  {journal} {\bibinfo
  {journal} {Quantum}\ }\textbf {\bibinfo {volume} {2}},\ \bibinfo {pages} {79}
  (\bibinfo {year} {2018})}\BibitemShut {NoStop}%
\bibitem [{\citenamefont {Farhi}\ \emph {et~al.}(2014)\citenamefont {Farhi},
  \citenamefont {Goldstone},\ and\ \citenamefont {Gutmann}}]{FGG14}%
  \BibitemOpen
  \bibfield  {author} {\bibinfo {author} {\bibfnamefont {E.}~\bibnamefont
  {Farhi}}, \bibinfo {author} {\bibfnamefont {J.}~\bibnamefont {Goldstone}},\
  and\ \bibinfo {author} {\bibfnamefont {S.}~\bibnamefont {Gutmann}},\
  }\bibfield  {title} {\bibinfo {title} {A quantum approximate optimization
  algorithm},\ }\href@noop {} {\bibfield  {journal} {\bibinfo  {journal} {arXiv
  preprint arXiv:1411.4028}\ } (\bibinfo {year} {2014})}\BibitemShut {NoStop}%
\bibitem [{\citenamefont {Khot}(2002)}]{Khot02}%
  \BibitemOpen
  \bibfield  {author} {\bibinfo {author} {\bibfnamefont {S.}~\bibnamefont
  {Khot}},\ }\bibfield  {title} {\bibinfo {title} {On the power of unique
  2-prover 1-round games},\ }in\ \href@noop {} {\emph {\bibinfo {booktitle}
  {Proceedings of the thiry-fourth annual ACM symposium on Theory of
  computing}}}\ (\bibinfo {year} {2002})\ pp.\ \bibinfo {pages}
  {767--775}\BibitemShut {NoStop}%
\bibitem [{\citenamefont {Mossel}\ \emph {et~al.}(2005)\citenamefont {Mossel},
  \citenamefont {O'Donnell},\ and\ \citenamefont {Oleszkiewicz}}]{MOO05}%
  \BibitemOpen
  \bibfield  {author} {\bibinfo {author} {\bibfnamefont {E.}~\bibnamefont
  {Mossel}}, \bibinfo {author} {\bibfnamefont {R.}~\bibnamefont {O'Donnell}},\
  and\ \bibinfo {author} {\bibfnamefont {K.}~\bibnamefont {Oleszkiewicz}},\
  }\bibfield  {title} {\bibinfo {title} {Noise stability of functions with low
  influences: invariance and optimality},\ }in\ \href@noop {} {\emph {\bibinfo
  {booktitle} {46th Annual IEEE Symposium on Foundations of Computer Science
  (FOCS'05)}}}\ (\bibinfo {organization} {IEEE},\ \bibinfo {year} {2005})\ pp.\
  \bibinfo {pages} {21--30}\BibitemShut {NoStop}%
\bibitem [{\citenamefont {Khot}\ \emph {et~al.}(2007)\citenamefont {Khot},
  \citenamefont {Kindler}, \citenamefont {Mossel},\ and\ \citenamefont
  {O’Donnell}}]{KKMO07}%
  \BibitemOpen
  \bibfield  {author} {\bibinfo {author} {\bibfnamefont {S.}~\bibnamefont
  {Khot}}, \bibinfo {author} {\bibfnamefont {G.}~\bibnamefont {Kindler}},
  \bibinfo {author} {\bibfnamefont {E.}~\bibnamefont {Mossel}},\ and\ \bibinfo
  {author} {\bibfnamefont {R.}~\bibnamefont {O’Donnell}},\ }\bibfield
  {title} {\bibinfo {title} {Optimal inapproximability results for {MAX-CUT}
  and other 2-variable {CSPs}?},\ }\href@noop {} {\bibfield  {journal}
  {\bibinfo  {journal} {SIAM Journal on Computing}\ }\textbf {\bibinfo {volume}
  {37}},\ \bibinfo {pages} {319} (\bibinfo {year} {2007})}\BibitemShut
  {NoStop}%
\bibitem [{\citenamefont {Goemans}\ and\ \citenamefont
  {Williamson}(1995)}]{GW95}%
  \BibitemOpen
  \bibfield  {author} {\bibinfo {author} {\bibfnamefont {M.~X.}\ \bibnamefont
  {Goemans}}\ and\ \bibinfo {author} {\bibfnamefont {D.~P.}\ \bibnamefont
  {Williamson}},\ }\bibfield  {title} {\bibinfo {title} {Improved approximation
  algorithms for maximum cut and satisfiability problems using semidefinite
  programming},\ }\href@noop {} {\bibfield  {journal} {\bibinfo  {journal}
  {Journal of the ACM (JACM)}\ }\textbf {\bibinfo {volume} {42}},\ \bibinfo
  {pages} {1115} (\bibinfo {year} {1995})}\BibitemShut {NoStop}%
\bibitem [{\citenamefont {Trevisan}\ \emph {et~al.}(2000)\citenamefont
  {Trevisan}, \citenamefont {Sorkin}, \citenamefont {Sudan},\ and\
  \citenamefont {Williamson}}]{TSSW00}%
  \BibitemOpen
  \bibfield  {author} {\bibinfo {author} {\bibfnamefont {L.}~\bibnamefont
  {Trevisan}}, \bibinfo {author} {\bibfnamefont {G.~B.}\ \bibnamefont
  {Sorkin}}, \bibinfo {author} {\bibfnamefont {M.}~\bibnamefont {Sudan}},\ and\
  \bibinfo {author} {\bibfnamefont {D.~P.}\ \bibnamefont {Williamson}},\
  }\bibfield  {title} {\bibinfo {title} {Gadgets, approximation, and linear
  programming},\ }\href@noop {} {\bibfield  {journal} {\bibinfo  {journal}
  {SIAM Journal on Computing}\ }\textbf {\bibinfo {volume} {29}},\ \bibinfo
  {pages} {2074} (\bibinfo {year} {2000})}\BibitemShut {NoStop}%
\bibitem [{\citenamefont {H{\aa}stad}(2001)}]{H01}%
  \BibitemOpen
  \bibfield  {author} {\bibinfo {author} {\bibfnamefont {J.}~\bibnamefont
  {H{\aa}stad}},\ }\bibfield  {title} {\bibinfo {title} {Some optimal
  inapproximability results},\ }\href@noop {} {\bibfield  {journal} {\bibinfo
  {journal} {Journal of the ACM (JACM)}\ }\textbf {\bibinfo {volume} {48}},\
  \bibinfo {pages} {798} (\bibinfo {year} {2001})}\BibitemShut {NoStop}%
\bibitem [{\citenamefont {Charikar}\ and\ \citenamefont {Wirth}(2004)}]{CW04}%
  \BibitemOpen
  \bibfield  {author} {\bibinfo {author} {\bibfnamefont {M.}~\bibnamefont
  {Charikar}}\ and\ \bibinfo {author} {\bibfnamefont {A.}~\bibnamefont
  {Wirth}},\ }\bibfield  {title} {\bibinfo {title} {Maximizing quadratic
  programs: extending grothendieck's inequality},\ }in\ \href
  {https://doi.org/10.1109/FOCS.2004.39} {\emph {\bibinfo {booktitle} {45th
  Annual IEEE Symposium on Foundations of Computer Science}}}\ (\bibinfo {year}
  {2004})\ pp.\ \bibinfo {pages} {54--60}\BibitemShut {NoStop}%
\bibitem [{\citenamefont {Bertsimas}\ \emph {et~al.}(2016)\citenamefont
  {Bertsimas}, \citenamefont {King},\ and\ \citenamefont
  {Mazumder}}]{bertsimas2016best}%
  \BibitemOpen
  \bibfield  {author} {\bibinfo {author} {\bibfnamefont {D.}~\bibnamefont
  {Bertsimas}}, \bibinfo {author} {\bibfnamefont {A.}~\bibnamefont {King}},\
  and\ \bibinfo {author} {\bibfnamefont {R.}~\bibnamefont {Mazumder}},\
  }\bibfield  {title} {\bibinfo {title} {Best subset selection via a modern
  optimization lens},\ }\href@noop {} {\bibfield  {journal} {\bibinfo
  {journal} {The Annals of Statistics}\ }\textbf {\bibinfo {volume} {44}},\
  \bibinfo {pages} {813} (\bibinfo {year} {2016})}\BibitemShut {NoStop}%
\bibitem [{\citenamefont {Marcucci}\ and\ \citenamefont
  {Tedrake}(2020)}]{marcucci2020warm}%
  \BibitemOpen
  \bibfield  {author} {\bibinfo {author} {\bibfnamefont {T.}~\bibnamefont
  {Marcucci}}\ and\ \bibinfo {author} {\bibfnamefont {R.}~\bibnamefont
  {Tedrake}},\ }\bibfield  {title} {\bibinfo {title} {Warm start of
  mixed-integer programs for model predictive control of hybrid systems},\
  }\bibfield  {journal} {\bibinfo  {journal} {IEEE Transactions on Automatic
  Control}\ }\href {https://doi.org/10.1109/TAC.2020.3007688}
  {10.1109/TAC.2020.3007688} (\bibinfo {year} {2020})\BibitemShut {NoStop}%
\bibitem [{\citenamefont {Ralphs}\ and\ \citenamefont
  {G\"uzelsoy}(2006)}]{ralphs2006duality}%
  \BibitemOpen
  \bibfield  {author} {\bibinfo {author} {\bibfnamefont {T.}~\bibnamefont
  {Ralphs}}\ and\ \bibinfo {author} {\bibfnamefont {M.}~\bibnamefont
  {G\"uzelsoy}},\ }\bibfield  {title} {\bibinfo {title} {Duality and warm
  starting in integer programming},\ }in\ \href@noop {} {\emph {\bibinfo
  {booktitle} {Proceedings of 2006 NSF Design, Service, and Manufacturing
  Grantees and Research Conference}}}\ (\bibinfo {year} {2006})\BibitemShut
  {NoStop}%
\bibitem [{\citenamefont {Burer}\ and\ \citenamefont {Monteiro}(2003)}]{BM03}%
  \BibitemOpen
  \bibfield  {author} {\bibinfo {author} {\bibfnamefont {S.}~\bibnamefont
  {Burer}}\ and\ \bibinfo {author} {\bibfnamefont {R.~D.}\ \bibnamefont
  {Monteiro}},\ }\bibfield  {title} {\bibinfo {title} {A nonlinear programming
  algorithm for solving semidefinite programs via low-rank factorization},\
  }\href@noop {} {\bibfield  {journal} {\bibinfo  {journal} {Mathematical
  Programming}\ }\textbf {\bibinfo {volume} {95}},\ \bibinfo {pages} {329}
  (\bibinfo {year} {2003})}\BibitemShut {NoStop}%
\bibitem [{\citenamefont {{\authornoop{BM}}~Burer}\ and\ \citenamefont
  {Monteiro}(2005)}]{BM05}%
  \BibitemOpen
  \bibfield  {author} {\bibinfo {author} {\bibfnamefont {S.}~\bibnamefont
  {{\authornoop{BM}}~Burer}}\ and\ \bibinfo {author} {\bibfnamefont {R.~D.}\
  \bibnamefont {Monteiro}},\ }\bibfield  {title} {\bibinfo {title} {Local
  minima and convergence in low-rank semidefinite programming},\ }\href@noop {}
  {\bibfield  {journal} {\bibinfo  {journal} {Mathematical Programming}\
  }\textbf {\bibinfo {volume} {103}},\ \bibinfo {pages} {427} (\bibinfo {year}
  {2005})}\BibitemShut {NoStop}%
\bibitem [{\citenamefont {Mbeng}\ \emph {et~al.}(2019)\citenamefont {Mbeng},
  \citenamefont {Fazio},\ and\ \citenamefont {Santoro}}]{MFS19}%
  \BibitemOpen
  \bibfield  {author} {\bibinfo {author} {\bibfnamefont {G.~B.}\ \bibnamefont
  {Mbeng}}, \bibinfo {author} {\bibfnamefont {R.}~\bibnamefont {Fazio}},\ and\
  \bibinfo {author} {\bibfnamefont {G.~E.}\ \bibnamefont {Santoro}},\
  }\bibfield  {title} {\bibinfo {title} {Quantum annealing: a journey through
  digitalization, control, and hybrid quantum variational schemes},\
  }\href@noop {} {\bibfield  {journal} {\bibinfo  {journal} {arXiv preprint
  arXiv:1906.08948}\ } (\bibinfo {year} {2019})}\BibitemShut {NoStop}%
\bibitem [{\citenamefont {Zhou}\ \emph {et~al.}(2018)\citenamefont {Zhou},
  \citenamefont {Wang}, \citenamefont {Choi}, \citenamefont {Pichler},\ and\
  \citenamefont {Lukin}}]{ZWCPL19}%
  \BibitemOpen
  \bibfield  {author} {\bibinfo {author} {\bibfnamefont {L.}~\bibnamefont
  {Zhou}}, \bibinfo {author} {\bibfnamefont {S.-T.}\ \bibnamefont {Wang}},
  \bibinfo {author} {\bibfnamefont {S.}~\bibnamefont {Choi}}, \bibinfo {author}
  {\bibfnamefont {H.}~\bibnamefont {Pichler}},\ and\ \bibinfo {author}
  {\bibfnamefont {M.~D.}\ \bibnamefont {Lukin}},\ }\bibfield  {title} {\bibinfo
  {title} {Quantum approximate optimization algorithm: performance, mechanism,
  and implementation on near-term devices},\ }\href@noop {} {\bibfield
  {journal} {\bibinfo  {journal} {arXiv preprint arXiv:1812.01041}\ } (\bibinfo
  {year} {2018})}\BibitemShut {NoStop}%
\bibitem [{\citenamefont {Zhu}\ \emph {et~al.}(2020)\citenamefont {Zhu},
  \citenamefont {Tang}, \citenamefont {Barron}, \citenamefont
  {Calderon-Vargas}, \citenamefont {Mayhall}, \citenamefont {Barnes},\ and\
  \citenamefont {Economou}}]{ZTBCVMBE20}%
  \BibitemOpen
  \bibfield  {author} {\bibinfo {author} {\bibfnamefont {L.}~\bibnamefont
  {Zhu}}, \bibinfo {author} {\bibfnamefont {H.~L.}\ \bibnamefont {Tang}},
  \bibinfo {author} {\bibfnamefont {G.~S.}\ \bibnamefont {Barron}}, \bibinfo
  {author} {\bibfnamefont {F.}~\bibnamefont {Calderon-Vargas}}, \bibinfo
  {author} {\bibfnamefont {N.~J.}\ \bibnamefont {Mayhall}}, \bibinfo {author}
  {\bibfnamefont {E.}~\bibnamefont {Barnes}},\ and\ \bibinfo {author}
  {\bibfnamefont {S.~E.}\ \bibnamefont {Economou}},\ }\bibfield  {title}
  {\bibinfo {title} {An adaptive quantum approximate optimization algorithm for
  solving combinatorial problems on a quantum computer},\ }\href@noop {}
  {\bibfield  {journal} {\bibinfo  {journal} {arXiv preprint arXiv:2005.10258
  [quant-ph]}\ } (\bibinfo {year} {2020})}\BibitemShut {NoStop}%
\bibitem [{\citenamefont {Sack}\ and\ \citenamefont {Serbyn}(2021)}]{SS21}%
  \BibitemOpen
  \bibfield  {author} {\bibinfo {author} {\bibfnamefont {S.~H.}\ \bibnamefont
  {Sack}}\ and\ \bibinfo {author} {\bibfnamefont {M.}~\bibnamefont {Serbyn}},\
  }\bibfield  {title} {\bibinfo {title} {Quantum annealing initialization of
  the quantum approximate optimization algorithm},\ }\href@noop {} {\bibfield
  {journal} {\bibinfo  {journal} {arXiv preprint arXiv:2101.05742 [quant-ph]}\
  } (\bibinfo {year} {2021})}\BibitemShut {NoStop}%
\bibitem [{\citenamefont {B{\"a}rtschi}\ and\ \citenamefont
  {Eidenbenz}(2020)}]{BE20}%
  \BibitemOpen
  \bibfield  {author} {\bibinfo {author} {\bibfnamefont {A.}~\bibnamefont
  {B{\"a}rtschi}}\ and\ \bibinfo {author} {\bibfnamefont {S.}~\bibnamefont
  {Eidenbenz}},\ }\bibfield  {title} {\bibinfo {title} {{Grover} mixers for
  {QAOA}: {Shifting} complexity from mixer design to state preparation},\
  }\href@noop {} {\bibfield  {journal} {\bibinfo  {journal} {arXiv preprint
  arXiv:2006.00354}\ } (\bibinfo {year} {2020})}\BibitemShut {NoStop}%
\bibitem [{\citenamefont {Akshay}\ \emph {et~al.}(2019)\citenamefont {Akshay},
  \citenamefont {Philathong}, \citenamefont {Morales},\ and\ \citenamefont
  {Biamonte}}]{APMD19}%
  \BibitemOpen
  \bibfield  {author} {\bibinfo {author} {\bibfnamefont {V.}~\bibnamefont
  {Akshay}}, \bibinfo {author} {\bibfnamefont {H.}~\bibnamefont {Philathong}},
  \bibinfo {author} {\bibfnamefont {M.~E.}\ \bibnamefont {Morales}},\ and\
  \bibinfo {author} {\bibfnamefont {J.~D.}\ \bibnamefont {Biamonte}},\
  }\bibfield  {title} {\bibinfo {title} {Reachability deficits in quantum
  approximate optimization},\ }\href@noop {} {\bibfield  {journal} {\bibinfo
  {journal} {arXiv preprint arXiv:1906.11259v2}\ } (\bibinfo {year}
  {2019})}\BibitemShut {NoStop}%
\bibitem [{\citenamefont {Egger}\ \emph {et~al.}(2020)\citenamefont {Egger},
  \citenamefont {Marecek},\ and\ \citenamefont {Woerner}}]{egger2020warm}%
  \BibitemOpen
  \bibfield  {author} {\bibinfo {author} {\bibfnamefont {D.~J.}\ \bibnamefont
  {Egger}}, \bibinfo {author} {\bibfnamefont {J.}~\bibnamefont {Marecek}},\
  and\ \bibinfo {author} {\bibfnamefont {S.}~\bibnamefont {Woerner}},\
  }\bibfield  {title} {\bibinfo {title} {Warm-starting quantum optimization},\
  }\href@noop {} {\bibfield  {journal} {\bibinfo  {journal} {arXiv preprint
  arXiv:2009.10095}\ } (\bibinfo {year} {2020})}\BibitemShut {NoStop}%
\bibitem [{\citenamefont {Dunning}\ \emph {et~al.}(2018)\citenamefont
  {Dunning}, \citenamefont {Gupta},\ and\ \citenamefont
  {Silberholz}}]{dunning2018}%
  \BibitemOpen
  \bibfield  {author} {\bibinfo {author} {\bibfnamefont {I.}~\bibnamefont
  {Dunning}}, \bibinfo {author} {\bibfnamefont {S.}~\bibnamefont {Gupta}},\
  and\ \bibinfo {author} {\bibfnamefont {J.}~\bibnamefont {Silberholz}},\
  }\bibfield  {title} {\bibinfo {title} {What works best when? {A} systematic
  evaluation of heuristics for {Max-Cut} and {QUBO}},\ }\href@noop {}
  {\bibfield  {journal} {\bibinfo  {journal} {INFORMS Journal on Computing}\
  }\textbf {\bibinfo {volume} {30}},\ \bibinfo {pages} {608} (\bibinfo {year}
  {2018})}\BibitemShut {NoStop}%
\bibitem [{\citenamefont {Kingma}\ and\ \citenamefont {Ba}(2017)}]{adamRef}%
  \BibitemOpen
  \bibfield  {author} {\bibinfo {author} {\bibfnamefont {D.}~\bibnamefont
  {Kingma}}\ and\ \bibinfo {author} {\bibfnamefont {J.~L.}\ \bibnamefont
  {Ba}},\ }\bibfield  {title} {\bibinfo {title} {Adam: A method for stochastic
  optimization},\ }\href@noop {} {\bibfield  {journal} {\bibinfo  {journal}
  {arXiv preprint arXiv:1412.6980}\ } (\bibinfo {year} {2017})}\BibitemShut
  {NoStop}%
\bibitem [{\citenamefont {Powell}(1994)}]{cobylaRef}%
  \BibitemOpen
  \bibfield  {author} {\bibinfo {author} {\bibfnamefont {M.}~\bibnamefont
  {Powell}},\ }\bibfield  {title} {\bibinfo {title} {A direct search
  optimization method that models the objective and constraint functions by
  linear interpolation},\ }\href@noop {} {\bibfield  {journal} {\bibinfo
  {journal} {Advances in Optimization and Numerical Analysis}\ }\textbf
  {\bibinfo {volume} {275}},\ \bibinfo {pages} {51} (\bibinfo {year}
  {1994})}\BibitemShut {NoStop}%
\bibitem [{\citenamefont {Gao}\ and\ \citenamefont
  {Han}(2012)}]{neldermeadRef}%
  \BibitemOpen
  \bibfield  {author} {\bibinfo {author} {\bibfnamefont {F.}~\bibnamefont
  {Gao}}\ and\ \bibinfo {author} {\bibfnamefont {L.}~\bibnamefont {Han}},\
  }\bibfield  {title} {\bibinfo {title} {Implementing the nelder-mead simplex
  algorithm with adaptive parameters},\ }\href@noop {} {\bibfield  {journal}
  {\bibinfo  {journal} {Computational Optimization and Applications}\ }\textbf
  {\bibinfo {volume} {51}},\ \bibinfo {pages} {259} (\bibinfo {year}
  {2012})}\BibitemShut {NoStop}%
\bibitem [{\citenamefont {Fletcher}(1987)}]{bfgsRef}%
  \BibitemOpen
  \bibfield  {author} {\bibinfo {author} {\bibfnamefont {R.}~\bibnamefont
  {Fletcher}},\ }\href@noop {} {\emph {\bibinfo {title} {Practical Methods of
  Optimization (2nd edition)}}}\ (\bibinfo  {publisher} {John Wiley and Sons},\
  \bibinfo {address} {New York, NY, USA},\ \bibinfo {year} {1987})\BibitemShut
  {NoStop}%
\bibitem [{\citenamefont {Lavrijsen}\ \emph {et~al.}(2021)\citenamefont
  {Lavrijsen}, \citenamefont {Tudor}, \citenamefont {Müller}, \citenamefont
  {Iancu},\ and\ \citenamefont {de~Jong}}]{LTMIJ21}%
  \BibitemOpen
  \bibfield  {author} {\bibinfo {author} {\bibfnamefont {W.}~\bibnamefont
  {Lavrijsen}}, \bibinfo {author} {\bibfnamefont {A.}~\bibnamefont {Tudor}},
  \bibinfo {author} {\bibfnamefont {J.}~\bibnamefont {Müller}}, \bibinfo
  {author} {\bibfnamefont {C.}~\bibnamefont {Iancu}},\ and\ \bibinfo {author}
  {\bibfnamefont {W.}~\bibnamefont {de~Jong}},\ }\bibfield  {title} {\bibinfo
  {title} {Classical optimizers for noisy intermediate-scale quantum devices},\
  }\href@noop {} {\bibfield  {journal} {\bibinfo  {journal} {arXiv preprint
  arXiv:2004.030043}\ } (\bibinfo {year} {2021})}\BibitemShut {NoStop}%
\bibitem [{\citenamefont {Wilson}\ \emph {et~al.}(2021)\citenamefont {Wilson},
  \citenamefont {Stromswold}, \citenamefont {Wudarski}, \citenamefont
  {Hadfield}, \citenamefont {Tubman},\ and\ \citenamefont {Rieffel}}]{WSR21}%
  \BibitemOpen
  \bibfield  {author} {\bibinfo {author} {\bibfnamefont {M.}~\bibnamefont
  {Wilson}}, \bibinfo {author} {\bibfnamefont {R.}~\bibnamefont {Stromswold}},
  \bibinfo {author} {\bibfnamefont {F.}~\bibnamefont {Wudarski}}, \bibinfo
  {author} {\bibfnamefont {S.}~\bibnamefont {Hadfield}}, \bibinfo {author}
  {\bibfnamefont {N.~M.}\ \bibnamefont {Tubman}},\ and\ \bibinfo {author}
  {\bibfnamefont {E.~G.}\ \bibnamefont {Rieffel}},\ }\bibfield  {title}
  {\bibinfo {title} {Optimizing quantum heuristics with meta-learning},\
  }\href@noop {} {\bibfield  {journal} {\bibinfo  {journal} {Quantum Machine
  Intelligence}\ }\textbf {\bibinfo {volume} {3}} (\bibinfo {year}
  {2021})}\BibitemShut {NoStop}%
\bibitem [{\citenamefont {Brandao}\ \emph {et~al.}(2018)\citenamefont
  {Brandao}, \citenamefont {Broughton}, \citenamefont {Farhi}, \citenamefont
  {Gutmann},\ and\ \citenamefont {Neven}}]{BBFGN}%
  \BibitemOpen
  \bibfield  {author} {\bibinfo {author} {\bibfnamefont {F.~G.}\ \bibnamefont
  {Brandao}}, \bibinfo {author} {\bibfnamefont {M.}~\bibnamefont {Broughton}},
  \bibinfo {author} {\bibfnamefont {E.}~\bibnamefont {Farhi}}, \bibinfo
  {author} {\bibfnamefont {S.}~\bibnamefont {Gutmann}},\ and\ \bibinfo {author}
  {\bibfnamefont {H.}~\bibnamefont {Neven}},\ }\bibfield  {title} {\bibinfo
  {title} {For fixed control parameters the quantum approximate optimization
  algorithm's objective function value concentrates for typical instances},\
  }\href@noop {} {\bibfield  {journal} {\bibinfo  {journal} {arXiv preprint
  arXiv:1812.04170}\ } (\bibinfo {year} {2018})}\BibitemShut {NoStop}%
\bibitem [{\citenamefont {Delorme}\ and\ \citenamefont {Poljak}(1993)}]{DP93}%
  \BibitemOpen
  \bibfield  {author} {\bibinfo {author} {\bibfnamefont {C.}~\bibnamefont
  {Delorme}}\ and\ \bibinfo {author} {\bibfnamefont {S.}~\bibnamefont
  {Poljak}},\ }\bibfield  {title} {\bibinfo {title} {Laplacian eigenvalues and
  the maximum cut problem},\ }\href@noop {} {\bibfield  {journal} {\bibinfo
  {journal} {Mathematical Programming}\ }\textbf {\bibinfo {volume} {62}},\
  \bibinfo {pages} {557} (\bibinfo {year} {1993})}\BibitemShut {NoStop}%
\bibitem [{\citenamefont {Poljak}\ and\ \citenamefont {Rendl}(1995)}]{PR95}%
  \BibitemOpen
  \bibfield  {author} {\bibinfo {author} {\bibfnamefont {S.}~\bibnamefont
  {Poljak}}\ and\ \bibinfo {author} {\bibfnamefont {F.}~\bibnamefont {Rendl}},\
  }\bibfield  {title} {\bibinfo {title} {Nonpolyhedral relaxations of
  graph-bisection problems},\ }\href@noop {} {\bibfield  {journal} {\bibinfo
  {journal} {SIAM Journal on Optimization}\ }\textbf {\bibinfo {volume} {5}},\
  \bibinfo {pages} {467–} (\bibinfo {year} {1995})}\BibitemShut {NoStop}%
\bibitem [{\citenamefont {Nesterov}\ and\ \citenamefont
  {Nemirovski}(1994)}]{NN94}%
  \BibitemOpen
  \bibfield  {author} {\bibinfo {author} {\bibfnamefont {Y.}~\bibnamefont
  {Nesterov}}\ and\ \bibinfo {author} {\bibfnamefont {A.}~\bibnamefont
  {Nemirovski}},\ }\href@noop {} {\emph {\bibinfo {title} {Interior-point
  polynomial algorithms in convex programming}}}\ (\bibinfo  {publisher}
  {SIAM},\ \bibinfo {address} {Philadelphia, PA, USA},\ \bibinfo {year}
  {1994})\BibitemShut {NoStop}%
\bibitem [{\citenamefont {Barvinok}(1995)}]{Barvinok95}%
  \BibitemOpen
  \bibfield  {author} {\bibinfo {author} {\bibfnamefont {A.~I.}\ \bibnamefont
  {Barvinok}},\ }\bibfield  {title} {\bibinfo {title} {Problems of distance
  geometry and convex properties of quadratic maps},\ }\href@noop {} {\bibfield
   {journal} {\bibinfo  {journal} {{Discrete \& Computational Geometry}}\
  }\textbf {\bibinfo {volume} {13}},\ \bibinfo {pages} {189} (\bibinfo {year}
  {1995})}\BibitemShut {NoStop}%
\bibitem [{\citenamefont {Pataki}(1998)}]{Pataki98}%
  \BibitemOpen
  \bibfield  {author} {\bibinfo {author} {\bibfnamefont {G.}~\bibnamefont
  {Pataki}},\ }\bibfield  {title} {\bibinfo {title} {On the rank of extreme
  matrices in semidefinite programs and the multiplicity of optimal
  eigenvalues},\ }\href@noop {} {\bibfield  {journal} {\bibinfo  {journal}
  {{Mathematics of Operations Research}}\ }\textbf {\bibinfo {volume} {23}},\
  \bibinfo {pages} {339} (\bibinfo {year} {1998})}\BibitemShut {NoStop}%
\bibitem [{\citenamefont {Boumal}\ \emph {et~al.}(2018)\citenamefont {Boumal},
  \citenamefont {Voroninski},\ and\ \citenamefont {Bandeira}}]{BVB18}%
  \BibitemOpen
  \bibfield  {author} {\bibinfo {author} {\bibfnamefont {N.}~\bibnamefont
  {Boumal}}, \bibinfo {author} {\bibfnamefont {V.}~\bibnamefont {Voroninski}},\
  and\ \bibinfo {author} {\bibfnamefont {A.~S.}\ \bibnamefont {Bandeira}},\
  }\bibfield  {title} {\bibinfo {title} {The non-convex burer–monteiro
  approach works on smooth semidefinite programs},\ }\href@noop {} {\bibfield
  {journal} {\bibinfo  {journal} {arXiv preprint arXiv:1606.04970}\ } (\bibinfo
  {year} {2018})}\BibitemShut {NoStop}%
\bibitem [{\citenamefont {Boumal}\ \emph {et~al.}(2020)\citenamefont {Boumal},
  \citenamefont {Voroninski},\ and\ \citenamefont {Bandeira}}]{BVB19}%
  \BibitemOpen
  \bibfield  {author} {\bibinfo {author} {\bibfnamefont {N.}~\bibnamefont
  {Boumal}}, \bibinfo {author} {\bibfnamefont {V.}~\bibnamefont {Voroninski}},\
  and\ \bibinfo {author} {\bibfnamefont {A.~S.}\ \bibnamefont {Bandeira}},\
  }\bibfield  {title} {\bibinfo {title} {Deterministic guarantees for
  {Burer-Monteiro} factorizations of smooth semidefinite programs},\
  }\href@noop {} {\bibfield  {journal} {\bibinfo  {journal} {Communications on
  Pure and Applied Mathematics}\ }\textbf {\bibinfo {volume} {73}},\ \bibinfo
  {pages} {581} (\bibinfo {year} {2020})}\BibitemShut {NoStop}%
\bibitem [{\citenamefont {Burer}\ \emph {et~al.}(2001)\citenamefont {Burer},
  \citenamefont {Monteiro},\ and\ \citenamefont {Zhang}}]{BMZ01}%
  \BibitemOpen
  \bibfield  {author} {\bibinfo {author} {\bibfnamefont {S.}~\bibnamefont
  {Burer}}, \bibinfo {author} {\bibfnamefont {R.~D.~C.}\ \bibnamefont
  {Monteiro}},\ and\ \bibinfo {author} {\bibfnamefont {Y.}~\bibnamefont
  {Zhang}},\ }\bibfield  {title} {\bibinfo {title} {Rank-2 relaxation
  heuristics for {Max-Cut} and other binary quadratic programs},\ }\href@noop
  {} {\bibfield  {journal} {\bibinfo  {journal} {SIAM Journal on Optimization}\
  }\textbf {\bibinfo {volume} {12}},\ \bibinfo {pages} {503} (\bibinfo {year}
  {2001})}\BibitemShut {NoStop}%
\bibitem [{\citenamefont {Mei}\ \emph {et~al.}(2017)\citenamefont {Mei},
  \citenamefont {Misiakiewicz}, \citenamefont {Montanari},\ and\ \citenamefont
  {Oliveira}}]{MMMO17}%
  \BibitemOpen
  \bibfield  {author} {\bibinfo {author} {\bibfnamefont {S.}~\bibnamefont
  {Mei}}, \bibinfo {author} {\bibfnamefont {T.}~\bibnamefont {Misiakiewicz}},
  \bibinfo {author} {\bibfnamefont {A.}~\bibnamefont {Montanari}},\ and\
  \bibinfo {author} {\bibfnamefont {R.~I.}\ \bibnamefont {Oliveira}},\
  }\bibfield  {title} {\bibinfo {title} {Solving {SDPs} for synchronization and
  {MaxCut} problems via the {Grothendieck} inequality},\ }\href@noop {}
  {\bibfield  {journal} {\bibinfo  {journal} {arXiv preprint arXiv:1703.08729}\
  } (\bibinfo {year} {2017})}\BibitemShut {NoStop}%
\bibitem [{\citenamefont {Hagberg}\ \emph {et~al.}(2008)\citenamefont
  {Hagberg}, \citenamefont {Swart},\ and\ \citenamefont {S~Chult}}]{networkx}%
  \BibitemOpen
  \bibfield  {author} {\bibinfo {author} {\bibfnamefont {A.}~\bibnamefont
  {Hagberg}}, \bibinfo {author} {\bibfnamefont {P.}~\bibnamefont {Swart}},\
  and\ \bibinfo {author} {\bibfnamefont {D.}~\bibnamefont {S~Chult}},\
  }\href@noop {} {\emph {\bibinfo {title} {Exploring network structure,
  dynamics, and function using NetworkX}}},\ \bibinfo {type} {Tech. Rep.}\
  (\bibinfo  {institution} {Los Alamos National Lab.(LANL), Los Alamos, NM
  (United States)},\ \bibinfo {year} {2008})\BibitemShut {NoStop}%
\bibitem [{\citenamefont {Wang}\ \emph {et~al.}(2018)\citenamefont {Wang},
  \citenamefont {Hadfield}, \citenamefont {Jiang},\ and\ \citenamefont
  {Rieffel}}]{WHJR18}%
  \BibitemOpen
  \bibfield  {author} {\bibinfo {author} {\bibfnamefont {Z.}~\bibnamefont
  {Wang}}, \bibinfo {author} {\bibfnamefont {S.}~\bibnamefont {Hadfield}},
  \bibinfo {author} {\bibfnamefont {Z.}~\bibnamefont {Jiang}},\ and\ \bibinfo
  {author} {\bibfnamefont {E.~G.}\ \bibnamefont {Rieffel}},\ }\bibfield
  {title} {\bibinfo {title} {Quantum approximate optimization algorithm for
  {MaxCut}: {A} fermionic view},\ }\href@noop {} {\bibfield  {journal}
  {\bibinfo  {journal} {Physical Review A}\ }\textbf {\bibinfo {volume} {97}},\
  \bibinfo {pages} {022304} (\bibinfo {year} {2018})}\BibitemShut {NoStop}%
\bibitem [{\citenamefont {Bravyi}\ \emph {et~al.}(2019)\citenamefont {Bravyi},
  \citenamefont {Kliesch}, \citenamefont {Koenig},\ and\ \citenamefont
  {Tang}}]{BKKT19}%
  \BibitemOpen
  \bibfield  {author} {\bibinfo {author} {\bibfnamefont {S.}~\bibnamefont
  {Bravyi}}, \bibinfo {author} {\bibfnamefont {A.}~\bibnamefont {Kliesch}},
  \bibinfo {author} {\bibfnamefont {R.}~\bibnamefont {Koenig}},\ and\ \bibinfo
  {author} {\bibfnamefont {E.}~\bibnamefont {Tang}},\ }\bibfield  {title}
  {\bibinfo {title} {Obstacles to state preparation and variational
  optimization from symmetry protection},\ }\href@noop {} {\bibfield  {journal}
  {\bibinfo  {journal} {arXiv preprint arXiv:1910.08980}\ } (\bibinfo {year}
  {2019})}\BibitemShut {NoStop}%
\bibitem [{\citenamefont {Karp}(1972)}]{Karp1972}%
  \BibitemOpen
  \bibfield  {author} {\bibinfo {author} {\bibfnamefont {R.~M.}\ \bibnamefont
  {Karp}},\ }\bibinfo {title} {Reducibility among combinatorial problems},\ in\
  \href@noop {} {\emph {\bibinfo {booktitle} {Complexity of Computer
  Computations: Proceedings of a symposium on the Complexity of Computer
  Computations}}}\ (\bibinfo  {publisher} {Springer US},\ \bibinfo {address}
  {Boston, MA},\ \bibinfo {year} {1972})\ pp.\ \bibinfo {pages}
  {85--103}\BibitemShut {NoStop}%
\bibitem [{\citenamefont {Glover}\ \emph {et~al.}(2019)\citenamefont {Glover},
  \citenamefont {Kochenberger},\ and\ \citenamefont {Du}}]{GKD19}%
  \BibitemOpen
  \bibfield  {author} {\bibinfo {author} {\bibfnamefont {F.}~\bibnamefont
  {Glover}}, \bibinfo {author} {\bibfnamefont {G.}~\bibnamefont
  {Kochenberger}},\ and\ \bibinfo {author} {\bibfnamefont {Y.}~\bibnamefont
  {Du}},\ }\bibfield  {title} {\bibinfo {title} {Quantum bridge analytics {I}:
  A tutorial on formulating and using {QUBO} models},\ }\href@noop {}
  {\bibfield  {journal} {\bibinfo  {journal} {arXiv preprint arxiv:1811.11538}\
  } (\bibinfo {year} {2019})}\BibitemShut {NoStop}%
\bibitem [{\citenamefont {Lodewijks}(2020)}]{lodewijks2020}%
  \BibitemOpen
  \bibfield  {author} {\bibinfo {author} {\bibfnamefont {B.}~\bibnamefont
  {Lodewijks}},\ }\bibfield  {title} {\bibinfo {title} {Mapping {NP}-hard and
  {NP}-complete optimization problems to quadratic unconstrained binary
  optimization problems},\ }\href@noop {} {\bibfield  {journal} {\bibinfo
  {journal} {arXiv preprint arxiv:1911.08043}\ } (\bibinfo {year}
  {2020})}\BibitemShut {NoStop}%
\bibitem [{\citenamefont {Herrman}\ \emph {et~al.}(2021)\citenamefont
  {Herrman}, \citenamefont {Treffert}, \citenamefont {Ostrowski}, \citenamefont
  {Lotshaw}, \citenamefont {Humble},\ and\ \citenamefont {Siopsis}}]{HTOLHS21}%
  \BibitemOpen
  \bibfield  {author} {\bibinfo {author} {\bibfnamefont {R.}~\bibnamefont
  {Herrman}}, \bibinfo {author} {\bibfnamefont {L.}~\bibnamefont {Treffert}},
  \bibinfo {author} {\bibfnamefont {J.}~\bibnamefont {Ostrowski}}, \bibinfo
  {author} {\bibfnamefont {P.~C.}\ \bibnamefont {Lotshaw}}, \bibinfo {author}
  {\bibfnamefont {T.~S.}\ \bibnamefont {Humble}},\ and\ \bibinfo {author}
  {\bibfnamefont {G.}~\bibnamefont {Siopsis}},\ }\bibfield  {title} {\bibinfo
  {title} {Impact of graph structures for qaoa on maxcut},\ }\href@noop {}
  {\bibfield  {journal} {\bibinfo  {journal} {Quantum Information Processing}\
  }\textbf {\bibinfo {volume} {20}},\ \bibinfo {pages} {1} (\bibinfo {year}
  {2021})}\BibitemShut {NoStop}%
\bibitem [{\citenamefont {Feller}(1971)}]{feller_1971}%
  \BibitemOpen
  \bibfield  {author} {\bibinfo {author} {\bibfnamefont {W.}~\bibnamefont
  {Feller}},\ }\href@noop {} {\emph {\bibinfo {title} {An Introduction to
  Probability Theory and Its Applications}}},\ \bibinfo {edition} {3rd}\ ed.,\
  Vol.~\bibinfo {volume} {2}\ (\bibinfo  {publisher} {Wiley},\ \bibinfo
  {address} {Hoboken, New Jersey},\ \bibinfo {year} {1971})\BibitemShut
  {NoStop}%
\bibitem [{\citenamefont {{The Sage Developers}}(2020)}]{sagemath}%
  \BibitemOpen
  \bibfield  {author} {\bibinfo {author} {\bibnamefont {{The Sage
  Developers}}},\ }\href@noop {} {\emph {\bibinfo {title} {{S}ageMath, the
  {S}age {M}athematics {S}oftware {S}ystem}}} (\bibinfo {year} {2020}),\
  \bibinfo {note} {{\tt https://www.sagemath.org}}\BibitemShut {NoStop}%
\bibitem [{\citenamefont {Lyons}\ and\ \citenamefont {Peres}(2017)}]{lyons}%
  \BibitemOpen
  \bibfield  {author} {\bibinfo {author} {\bibfnamefont {R.}~\bibnamefont
  {Lyons}}\ and\ \bibinfo {author} {\bibfnamefont {Y.}~\bibnamefont {Peres}},\
  }\href@noop {} {\emph {\bibinfo {title} {Probability on trees and
  networks}}},\ Vol.~\bibinfo {volume} {42}\ (\bibinfo  {publisher} {Cambridge
  University Press},\ \bibinfo {year} {2017})\BibitemShut {NoStop}%
\bibitem [{\citenamefont {Karush}(1939)}]{Kar39}%
  \BibitemOpen
  \bibfield  {author} {\bibinfo {author} {\bibfnamefont {W.}~\bibnamefont
  {Karush}},\ }\emph {\bibinfo {title} {Minima of functions of several
  variables with inequalities as side conditions}},\ \href@noop {} {Master's
  thesis},\ \bibinfo  {school} {University of Chicago} (\bibinfo {year}
  {1939})\BibitemShut {NoStop}%
\bibitem [{\citenamefont {Kuhn}\ and\ \citenamefont {Tucker}(1951)}]{KT51}%
  \BibitemOpen
  \bibfield  {author} {\bibinfo {author} {\bibfnamefont {H.~W.}\ \bibnamefont
  {Kuhn}}\ and\ \bibinfo {author} {\bibfnamefont {A.~W.}\ \bibnamefont
  {Tucker}},\ }\bibfield  {title} {\bibinfo {title} {Nonlinear programming},\
  }in\ \href@noop {} {\emph {\bibinfo {booktitle} {Proceedings of the Second
  Berkeley Symposium on Mathematical Statistics and Probability}}}\ (\bibinfo
  {publisher} {University of California Press},\ \bibinfo {address} {Berkeley,
  Calif.},\ \bibinfo {year} {1951})\ pp.\ \bibinfo {pages}
  {481--492}\BibitemShut {NoStop}%
\bibitem [{\citenamefont {Lee}\ and\ \citenamefont {Padmanabhan}(2019)}]{LP19}%
  \BibitemOpen
  \bibfield  {author} {\bibinfo {author} {\bibfnamefont {Y.~T.}\ \bibnamefont
  {Lee}}\ and\ \bibinfo {author} {\bibfnamefont {S.}~\bibnamefont
  {Padmanabhan}},\ }\bibfield  {title} {\bibinfo {title} {An
  $\tilde{O}(m/\epsilon^3.5 )$-cost algorithm for semidefinite programs with
  diagonal constraints},\ }\href@noop {} {\bibfield  {journal} {\bibinfo
  {journal} {arXiv preprint arXiv:1903.01859}\ } (\bibinfo {year}
  {2019})}\BibitemShut {NoStop}%
\end{thebibliography}%
%apsrev4-2.bst 2019-01-14 (MD) hand-edited version of apsrev4-1.bst
%Control: key (0)
%Control: author (8) initials jnrlst
%Control: editor formatted (1) identically to author
%Control: production of article title (0) allowed
%Control: page (0) single
%Control: year (1) truncated
%Control: production of eprint (0) enabled
\def\authornoop#1{}

\appendix

\section{Computational Details}
\subsection{Details of Rotations}\label{sec:appendixrotations}

\subsubsection{Random vertex-at-top}
		We first describe the rotation in 3-dimensions for vertex $$v_i = \allowbreak (\sin \theta_i \cos \phi_i,  \sin\theta_i \sin \phi_i, \cos \theta_i)^T,$$ which is sampled uniformly at random (for $i\in[n]$).
		The rotation that maps $v_i \in \mathbb{R}^3$ to $(0,0,1)^T$ is obtained by first rotating clockwise along the $z$-axis by $\phi_i$, followed by a clockwise rotation along the $y$-axis by $\theta_i$, followed by a uniform at random rotation $\mu$ in $[0,2\pi]$ around the $z$-axis.
		
		Indeed, one can check that $R_V(\mathbf{x}^*(v_i)) = (0,0,1)^T$ (which will correspond to the quantum state $\ket{0}$ on the Bloch Sphere). For rank-2 solutions, with a uniform at random vertex $v_i = (\cos(\theta_i), \sin(\theta_i))^T$ sampled from $i\in[n]$, we can simply work with polar coordinates and shift all polar angles by $\theta_i$ to obtain the random vertex-at-top rotation. To be precise, we set $\theta_j = \theta_j - \theta_i$ where $\theta_j$ denotes the angle of the point corresponding to the $j$th vertex in the rank-2 solution.

\subsubsection{Uniform at Random Rotation}
		In this case, we  uniformly pick a rotation of the $k$-dimensional sphere and apply it.
		For rank-3 solutions, one way to accomplish this is by picking a point $\hat{x}$ uniformly at random from the surface of the sphere, rotating $\hat{x}$ to the top of the sphere (in a way similar to the vertex-at-top rotations), and then performing a uniform random rotation in $[0,2\pi]$ around the $z$-axis.
		Such an $\hat{x}$ can be generated by  picking $\alpha,\beta$ uniformly at random from the interval $[0,1]$ and then setting $\phi = 2\pi \alpha$ and $\theta = \arccos(2\beta-1)$. The pair $(\theta,\phi)$ will then correspond to the polar and azimuthal angles of the randomly chosen point $\hat{x}$ on the surface of the sphere \cite{feller_1971}.
		
		For rank-2 solutions, we can simply shift all the angles by some random angle. More precisely, set $\theta_j = \theta_j + \hat{\theta}$ where $\theta_j$ denotes the angle of the point corresponding to the $j$th vertex in the rank-2 solution and $\hat{\theta}$ is chosen uniformly at random in $[0,2\pi]$.

\label{sec:appendixAlgos}
\subsection{Finding approximate BM-MC$_k$ solutions.}
In Algorithm \ref{alg:getSDPRank2}, we describe our implementation for obtaining the semidefinite programming (SDP) solution for BM-MC$_k$ for $k=2,3$ using coordinate ascent. In the algorithms below, we write $U(a,b)$ to denote the uniform distribution on the interval $[a,b]$ (where $a,b \in \mathbb{R}$ with $a<b$). 
We set $\eta = 1/20$ for experiments in this work. We normalize the angles output by BM-MC$_3$ to enforce the standard range of angles for spherical coordinates without changing the objective value. 

\begin{algorithm}[t]
        \DontPrintSemicolon
		\caption{Obtain Solution for BM-MC$_k$}
	    \KwIn{Weighted graph $G = (V,E), w: E \to \mathbb{R}, k \in \{2,3\}$}
		\label{alg:getSDPRank2}
		If $k=2$, let $\theta_1,\dots,\theta_n \in \mathbb{R}$ be the angles of $n$ points chosen uniformly at random on the 2-dimensional unit circle. If $k=3$, let $(\theta_1,\phi_1),\dots,(\theta_n,\phi_n)$ be the spherical coordinates of $n$ points chosen uniformly at random on the 3-dimensional sphere.\;
		\Repeat{no improvement in objective by $\geq$ $10^{-5}\sum_{e \in E} |w_e|$ within 100 evaluations.}
		{
    		\For(\tcp*[f]{coordinate ascent}){$i=1$ through $n$}
    		{
    		    Sample the perturbation value(s) $\Delta \theta$ (and $\Delta \phi$ if $k=3$) from $U(-\eta,\eta)$ for small $\eta > 0$.\;
        		Update $\theta_i = \theta_i + \Delta \theta$ (and $\phi_i = \phi_i + \Delta \phi$ if $k=3$) and compute the BM objective.\;
        		If the objective improves, keep the perturbation.\;
    		}
		}
\end{algorithm}

\subsection{Graph Instances.}
\label{sec:graphInstancesAppendix}

As mentioned in Section \ref{sec:computational}, for our simulations, we first generate a collection of unit-weight graphs. From each one, we create multiple weighted graphs by applying different edge-weight distributions to the unit-weight graph. Below, we describe the collection of unit-weight graphs that were generated for this process.

The collection of non-isomorphic graphs up to 6 nodes were generated using \texttt{SageMath} \cite{sagemath}. The remaining instances were generated using various random graph generators found in the \texttt{NetworkX} package \cite{networkx}; the parameter names used below are the same as those used in the corresponding \texttt{NetworkX} functions .

\begin{itemize}
        \item All non-isomorphic connected graphs up to 6 nodes (142 instances)
        \item Erdos-Renyi (\fe{42} instances): for each $n$ from $n=7$ to $n=\fe{12}$, create 7 instances with $p$ sampled from $[0,1]$ uniformly.
        \item Random Regular (\fe{42} instances): for each $n$ from $n=7$ to $n=\fe{12}$, create 7 instances with $d$ sampled uniformly from valid degrees.
        \item Barabasi Albert (\fe{18} instances): for each $n$ from $n=7$ to $n=\fe{12}$ and for all $m$ in $\{1,2,3\}$, create 1 instance (with initial graph being star graph on $m+1$ nodes)
        \item Dual Barabasi Albert (\fe{36} instances): for each $n$ from $n=7$ to $n=\fe{12}$ and for all $\{(m_1,m_2) : m_1,m_2 \in \{1,2,3\}, m_1 \neq m_2\}$ with $p=0.25$, create 1 instance with initial graph on star with  $\max(m_1,m_2)+1$ nodes
        \item Watts Strogatz Graphs (\fe{18} instances): for each $n$ from $n=7$ to $n=\fe{12}$, for all $k$ in  $\{2,4,6\}$, create 1 instance with $p$ sampled uniformly from $[0,1]$.
        \item Newman Watts Strogatz Graphs (\fe{18} instances): for each $n$ from $n=7$ to $n=\fe{12}$, for all k in  $\{2,4,6\}$, create 1 instance with $p$ sampled uniformly from $[0,1]$.
    \end{itemize}
    
    \fe{Figure \ref{fig:spaceOfInstances} demonstrates how varied our ensemble is with respect to two important graph metrics dependent on eigenvalues of the normalized Laplacian \cite{lyons}. }
    
    \begin{figure}
    \centering
    \includegraphics[scale=0.5]{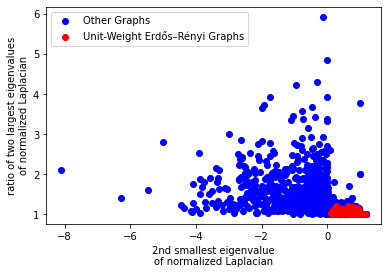}
    \caption{Illustration depicting the range of graph metrics for our instance library $\mathcal{G}$. When comparing unit-weight Erd\H{o}s-R\'enyi graphs (red) with the remaining graphs in $\mathcal{G}$ (blue), there is an increase in the range of values obtained for both graph metrics.}
    \label{fig:spaceOfInstances}
\end{figure}
    
\subsection{\tsout{Median Performance of} \QAOA-Warm \fe{with Median and Worst Vertex-At-Top Rotations}}
\label{sec:QAOAWarmWithMedianAndWorstRotations}
For our numerical simulations in Section \ref{sec:computational}, we use the best of either 5 vertex-at-top rotations or best of 5 uniform rotations for \QAOAw. Performing multiple runs of \QAOAw\ with different rotations and taking the best allows one to mitigate the possibility of using a warm-start with a poor rotation. We present the results with respect to the median rotation here. We plot the results below in Figure \ref{fig:medianRotations}; we see that the results do not differ much from what was seen in Figure \ref{fig:histogramWarmVsStandard}. In Figure \ref{fig:worstRotations}, we also plot the results when the worst of 5 vertex-at-top rotations are used to give an idea of the worst-case performance for \QAOAw.
\begin{figure}[H]
    \centering
    \includegraphics[scale=0.4]{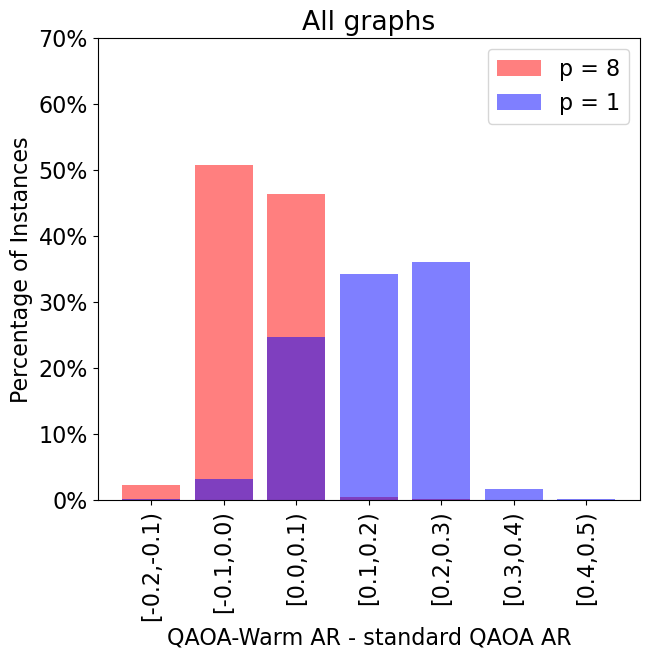}
    \includegraphics[scale=0.4]{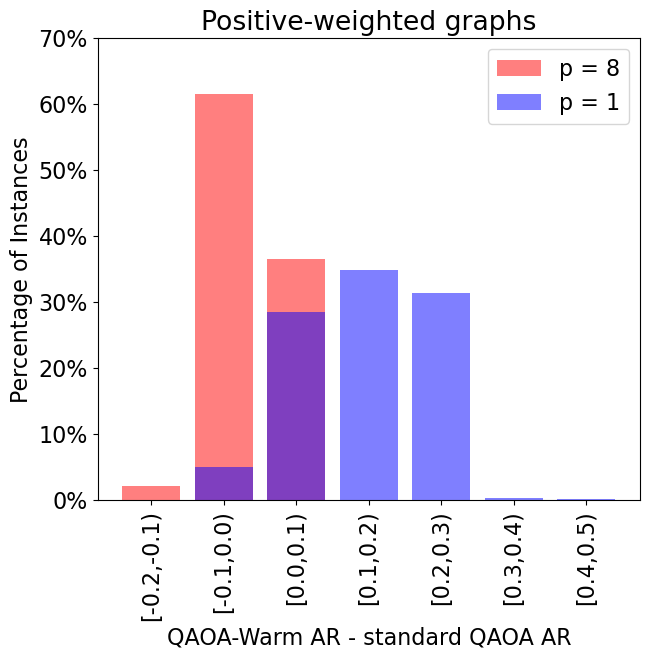}
    \caption{Histograms comparing the performance in (depth-$p$) \QAOAw\ and (depth-$p$) standard \QAOA\ for both $p=1$ (blue) and $p=8$ (red) where the median vertex-at-top rotations are used. Overlapping portions of the histogram are in purple. The \tsout{top} \fe{left} plot is generated using the graphs in our graph library $\mathcal{G}$ (see Section \ref{subsec:experimentalSetup}) whereas for the \tsout{bottom} \fe{right} plot, we restrict our attention to only those graphs in $\mathcal{G}$ with positive edge weights. \tsout{Larger bin sizes are used for the far right portions of the histogram; we color the bin label to emphasize this.}}
    \label{fig:medianRotations}
\end{figure}

\begin{figure}
    \centering \includegraphics[scale=0.4]{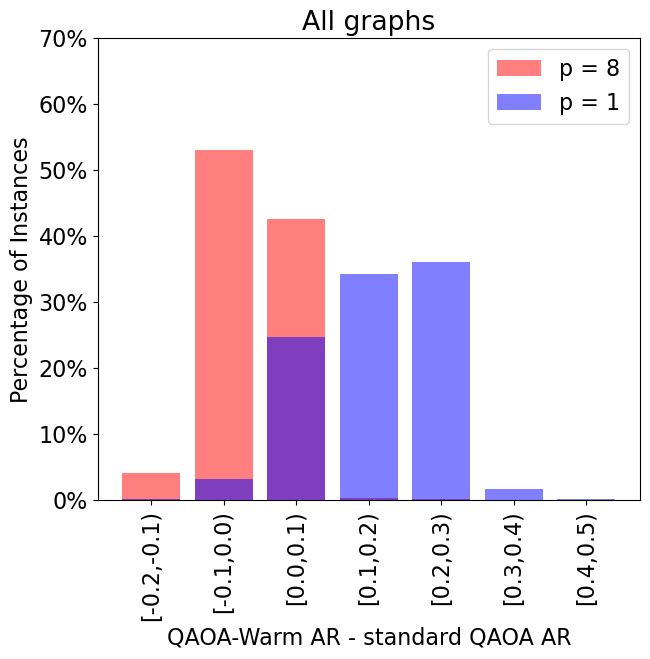}
    \includegraphics[scale=0.4]{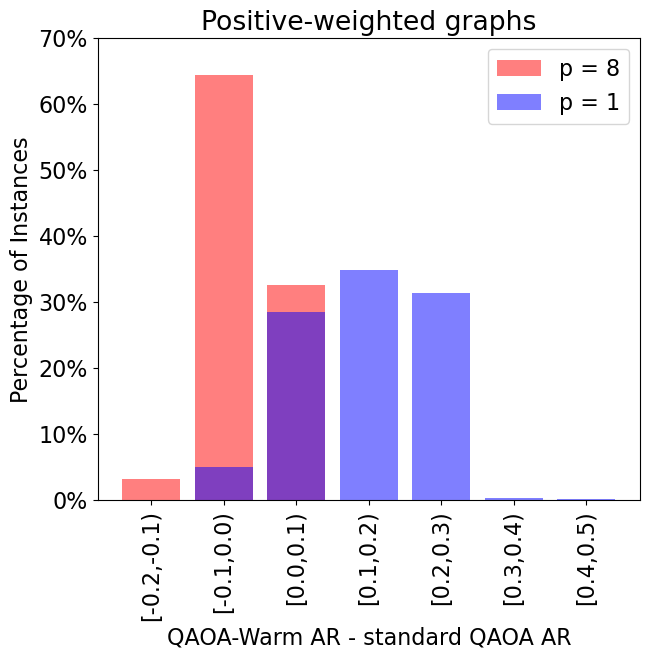}
    \caption{Histograms comparing the performance in (depth-$p$) \QAOAw\ and (depth-$p$) standard \QAOA\ for both $p=1$ (blue) and $p=8$ (red) where the worst vertex-at-top rotations are used. Overlapping portions of the histogram are in purple. The \tsout{top} \fe{left} plot is generated using the graphs in our graph library $\mathcal{G}$ (see Section \ref{subsec:experimentalSetup}) whereas for the \tsout{bottom} \fe{right} plot, we restrict our attention to only those graphs in $\mathcal{G}$ with positive edge weights. \tsout{Larger bin sizes are used for the far right portions of the histogram; we color the bin label to emphasize this.}}
    \label{fig:worstRotations}
\end{figure}

\section{Proofs}\label{sec:appendixproofs}

\paragraph{Proof for Theorem \ref{thm:warmstart}.}
\begin{proof}
	
	We consider a graph $G=(V,E)$ on two vertices connected by an edge of unit weight initialized with $\ket{s} := \ket{u} \otimes \ket{v}$ where $\ket{u} := \ket{+} = \frac{1}{\sqrt{2}}(\ket{0}+\ket{1})$ and $\ket{v} := \ket{-} = \frac{1}{\sqrt{2}}(\ket{0}-\ket{1})$. (Note that $\ket{s} = Q(x^*)$ where $x^* = ((1,0,0)^T,(-1,0,0)^T)$ is an optimal solution to BM-MC$_3$.)

	We first consider the $p=1$ case. For convenience, let $\gamma := \gamma_1$ and $\beta := \beta_1$ for this case.
	Observe, $$\ket{s} = \ket{u} \otimes \ket{v} = \frac{1}{\sqrt{2}}(\ket{0}+\ket{1})\otimes  \frac{1}{\sqrt{2}}(\ket{0}-\ket{1}) =  \frac{1}{2}(\ket{00}-\ket{01}+\ket{10}-\ket{11}).$$ Note that if we were to do a quantum measurement of this state, we would get each of the four states $\ket{00},\ket{01},\ket{10},\ket{11}$ with equal probability, i.e., the theorem holds in the $p=0$ case as well.
	
	Since $H_C$ is the Hamiltonian of the \mc\ problem, $H_C\ket{01} = 1\cdot\ket{01}$. Thus,  $e^{-i\gamma H_C}\ket{01} = e^{-i\gamma \cdot 1}\ket{01} = e^{-i\gamma}\ket{01}$. Similar calculations show that $e^{-i\gamma H_C}\ket{10} = e^{-i\gamma}\ket{10}, e^{-i\gamma H_C}\ket{00} = \ket{00},$ and $e^{-i\gamma H_C}\ket{11} = \ket{11}$ and thus by linearity,
	$$\ket{s'} := e^{-i\gamma H_C}\ket{s} = e^{-i\gamma H_C}\left(\frac{1}{2}(\ket{00}-\ket{01}+\ket{10}-\ket{11})\right) = \frac{1}{2}\left(\ket{00} - \ket{11} + e^{-i \gamma}\big(\ket{10} - \ket{01}\big)\right).$$

	For a 2-node graph, $H_B = \sigma^x_1+\sigma^x_2$, and thus,
	$$H_B(\ket{00}-\ket{11}) = \sigma^x_1\ket{00} - \sigma^x_1\ket{11}+\sigma^x_2\ket{00}-\sigma^x_2\ket{11}  = 0,$$
	and similarly $H_B(\ket{10}-\ket{01})  = 0.$
	
	By the above observations and linearity, we have that $H_B\ket{s'} = 0$, i.e., $\ket{s'}$ is an eigenvector of $H_B$ with eigenvalue 0 and thus, $$\ket{\psi_1(\gamma,\beta)} = e^{-i\beta H_B}\ket{s'} = e^{-i\beta \cdot 0}\ket{s'} = \ket{s'},$$
	i.e., the mixing Hamiltonian has no effect on the quantum state. Writing out $\ket{s'}$, we have
	$$\ket{\psi_1(\gamma,\beta)} = \frac{1}{2}\left(\ket{00} - \ket{11} + e^{-i \gamma}\big(\ket{10} - \ket{01}\big)\right).$$

	If we repeat all of these calculations in the case that $p > 1$, we get that
	$$\ket{\psi_p(\gamma,\beta)} = \frac{1}{2}\left( \ket{00}-\ket{11} + e^{-i\gamma_p}\cdots e^{-i\gamma_1}\big(\ket{10}-\ket{01}\big)\right) $$$$= \frac{1}{2}\left( \ket{00}-\ket{11} +e^{-i\sum_{i=1}^p \gamma_i}\big(\ket{10} - \ket{01}\big)\right),$$
	in which case all four states $\ket{00},\ket{01},\ket{10},\ket{11}$ are measured with equal probability meaning that the expected cut value for $G$ is 50\% of the maximum cut in $G$.
\end{proof}

% \paragraph{Proof for Lemma \ref{lem:halfApproximation}.}

% \begin{proof}
% 	Let $F_0^\prime(\gamma,\beta)$ be the expected value of {\sc Max-Cut} obtained by quantum sampling (i.e., QAOA for $p=0$). Then,
% 	\begin{align*}
% 		\frac{F_0^\prime(\gamma,\beta)}{\text{\mc}(G)} 
% 		&= \frac{\mathbb{E}\left[ \sum_{(i,j) \in E(G)} \mathbf{1}[\text{$i$ and $j$ have different spins}]\right]}{\text{\mc}(G)}\\
% 		&=\frac{ \sum_{(i,j) \in E(G)} \Pr(\text{$i$ and $j$ have different spins})}{\text{\mc}(G)} \\
% 		&=\frac{ \sum_{(i,j) \in E(G)} 0.5}{\text{\mc}(G)} \tag{as initial state is $\ket{+}^{\otimes n}$}\\
% 		&= \frac{0.5m}{\text{\mc}(G)} \tag{where $m$ is number of edges}\\
% 		&\geq \frac{0.5m}{m} \tag{as $\text{\mc}(G) \leq m$}
% 		= 0.5,
% 	\end{align*}
% 	which yields the desired result. The factor of $0.5$ is due to the independent measurements of the qubits in the state $\ket{+}^{\otimes n}$.
% \end{proof}

\paragraph{Proof for Theorem \ref{thm:cycle}.}

{\begin{proof}
		Without loss of generality, let $G$ be a cycle with $n$ vertices.
		Let $x: V \rightarrow S^2$ be a local optimum for BM-MC$_3$.
		Our proof consists of two steps. First, we show $\text{rank}(\text{span}(\set{x_v : v \in V})) \leq 2$. Next, building upon this characterization for the local optima we show that in fact the above rank is exactly $1$ and all edges are (fully) cut, i.e., global optimum is achieved.
		
		We use first order necessary conditions, known as KKT, to derive the desired characterization. Let us formulate the Lagrangian for our constrained optimization problem,
		$$\mathcal{L}(x,\alpha) = \sum_{(u,v) \in E} \|x_u - x_v\|^2 - \sum_{v \in V} \alpha_v (\|x_v\|^2-1)\,,$$
		where $\alpha_v \in \mathbb{R}$ is a multiplier corresponding to the condition $\|x_v\| = 1, ~ \forall v \in V$.
		It is easy to see the objective for our constrained optimization problem is equal to $
		\max_{x:V \rightarrow S^2} \min_{\alpha: V \rightarrow \mathbb{R}} \mathcal{L}(x,\alpha)\,.
		$ Further using Lagrangian duality theory, we apply 
		KKT optimality conditions that require (at any local optima) stationary condition $\frac{\partial \mathcal{L}}{\partial x_v} = 0$ is satisfied for all $v \in V$ in addition to the following feasibility and complementary slackness conditions (which are trivially satisfied):
		
		\begin{itemize}
			\item Primal feasibility requires $\|x_v\| = 1, \quad \forall v \in V$.
			\item Dual feasibility requires $\alpha_v \in \mathbb{R}, \quad \forall v \in V$.
			\item Complementary slackness requires $\alpha_v = 0$ whenever $\|x_v\| \ne 1$.
		\end{itemize}

        The stationary condition can be reformulated as
		$$\frac{\partial }{\partial x_v} \Big(  \sum_{(u,v)\in E} (x_v-x_u)^T (x_v-x_u) - \alpha_v (x_v^T x_v - 1)\Big) = 0 \quad \forall v \in V\,.$$
		Thus, for all $v \in V$, we have the following stationary condition:
		$\sum_{(u,v)\in E} (x_v - x_u) = \alpha_v x_v .$
		
		Considering our graph being a cycle, where every vertex $v \in V$ has two neighbors, the stationary condition implies a linear dependence of $x_v$ with $x_u$'s corresponding to its neighbors. Hence, $\text{rank}(\text{span}( \set{x_v, x_w, x_u})) \leq 2$ where $w$ and $u$ are two neighbors of $v$. Note that if this rank is $1$, one can easily show neighbors of this vertex (and consequently for every vertex) are at antipodal points $x_w = x_u = -x_v$. Otherwise, $x_w$ and $x_u$ are mirrored with respect to $\text{span}(\set{x_v})$. In this case, these three vectors lie on a unique plane. Inductively, one can show all vertices of the cycle lie on the same plane.
		
		With all the points lying on the same plane, it remains to show that the additional dimension (direction) in $\mathbb{R}^3$ allows one to (locally) increase the objective. Without loss of generality, let $x_v \in \mathbb{R}^3$ be a point on the unit sphere with polar angle $\theta = \pi/2$ and azimuthal angle $\phi_v$. Coloring vertices of the cycle by two colors $c_v \in \set{1, 2}$, it is easy to see for $\tilde{x}_v = (1, \pi/2 + (-1)^{c_v} \varepsilon, \phi_v)$ all edges stretch (unless they are antipodal) so the objective increases. This shows $x$ was not a local optimum, in case of a rank $2$ assignment.
\end{proof}}

\paragraph{Proof for Observation \ref{lem:landscape}}
\begin{proof}
	Since we are working only with circuit depth $p=1$, for simplicity, we use $\gamma$ and $\beta$ to denote $\gamma_1$ and $\beta_1$ respectively.
	
	If $\ket{s_0} = Q(R_V(\mathbf{x}))$ where $R_V(\cdot)$ is a vertex-at-top rotation and $\mathbf{x}$ is as described in the statement of the observation, then it is straightforward to see that there exists a reordering of the vertices such that $\ket{s_0} = \ket{0}^{\otimes r} \otimes \ket{1}^{\otimes (n-r)}$ where $|S| = r$ (where the first $r$ qubits correspond to vertices in $S$).
	
	Let $M = \text{\mc}(G)$. 
	Since $H_C$ is the Hamiltonian of the \mc\ problem and $\ket{s_0}$ corresponds to an optimal cut, then $\ket{s_0}$ is a eigenvector of $H_C$ with eigenvalue $M$. Thus, \begin{equation} e^{-i\gamma H_C}\ket{s_0} = e^{-i\gamma \cdot M}\ket{s_0} = \alpha \ket{s_0}. \label{eqn:costHamNoEffect}\end{equation}
	where $\alpha = e^{-i\gamma M}$.
	Using equation (\ref{eqn:costHamNoEffect}), we have that
	$$	\ket{\psi_1(\gamma,\beta)} = e^{-i\beta H_B}e^{-i\gamma H_C}\ket{s_0} = \alpha e^{-i\beta H_B}\ket{s_0} =  \alpha\left(\ket{s_{\beta}}^{\otimes r} \otimes  \ket{s_{\beta}'}^{\otimes (n-r)}\right),$$
	where
	$s_\beta =  \cos(\beta)\ket{0}-i\sin(\beta)\ket{1}$ 
	and
	$s_\beta' = \cos(\beta)\ket{1}-i\sin(\beta)\ket{0}.$
	
	 The expected energy is the sum of the expected energy for each edge $(u,v) \in E$ and each edge contributes a non-zero amount if and only if both endpoints have a different spin after measurement. However, since $\ket{\psi_1(\gamma,\beta)}$ is an unentangled state, then we can consider measuring each vertex independently.\footnote{The $\alpha$ term is a global phase change that doesn't affect the measurement and can thus be ignored.} Consider an edge $(u,v) \in E$ and suppose that $u \in S$ and $v \in S$. Then,
	\begin{align*}
		&\Pr(\text{$u$ and $v$ measured to be $\ket{0}$ and $\ket{1}$ respectively})\\
		&= \Pr(\text{$u$ measured to be $\ket{0}$})\cdot  \Pr(\text{$v$ measured to be $\ket{1}$}) \tag{$\ket{\psi_1(\gamma,\beta)}$ is unentangled}\\
		&= \Pr(\text{$s_\beta$ measured to be $\ket{0}$}) \cdot \Pr(\text{$s_\beta$ measured to be $\ket{1}$}) \tag{by construction}\\
		&= \cos^2(\beta) \cdot \sin^2(\beta). \tag{def of $s_\beta$}
	\end{align*}
	
	Similar calculations show that if $u \in S$ and $v \in S$, then the probability that $u$ is measured to be $\ket{1}$ and $v$ is measured to be $\ket{0}$ is also $\cos^2(\beta)\sin^2(\beta)$. In the case that $u \in S$ and $v \notin S$, one can similarly show that the probability of measuring both to have differing spins is given by $\cos^4(\beta)+\sin^4(\beta)$.
	
	Combining all the calculations above, the expected energy is given by
	$$F_1(\gamma,\beta) =  2(W-M)\sin^2\beta \cos^2 \beta + M(\sin^4\beta +\cos^4\beta),$$
	where $W$ is the sum of all edge weights (i.e. $W = \sum_{e \in E} w_e$).
	
	Observe that when $\beta = 0$, the above equation reduces to $F_1(\gamma,\beta) = M$ as desired.
	By applying various trigonometric identities and algebraic manipulations, we can rewrite the above function as
	$$F_1(\gamma,\beta) = \frac{1}{4}\Big((2M-W)\cos(4\beta)+2M+W)\Big).$$
	
	The \mc\ of a graph is at least half the sum of the edge weights, i.e.,  $M \geq W/2$ (which implies $2M-W \geq 0$).\footnote{To see this, observe that the expected sum of the weights of the edges crossing a random cut (where one independently places each vertex on one side of the cut or the other with probability $1/2$) will be $W/2$. Since the expectation is $W/2$, then there must exist at least one cut where the sum of the weights crossing the cut is \emph{at least} $W/2$, i.e., $M \geq W/2$. } Since $\cos(4\beta)$ is decreasing in $|\beta|$ for $\beta \in [-\pi/4,\pi/4]$, then it must be that $F_1$ is decreasing (in $|\beta|$ for $\beta \in [-\pi/4,\pi/4]$).
\end{proof}

\section{QAOA-Warm on Antipodal Structures}
\label{sec:antipodalStructures}
\begin{figure}[t]
	\centering
	\hspace{-0.3cm}\begin{tabular}{c@{\hskip 0.1cm}c@{\hskip 0.1cm}c@{\hskip 0.1cm}c@{\hskip 0.1cm}cc}
	\hspace{0.5cm}(a)&(b)&(c)& \hspace{-0.5cm}(d)\\
	\raisebox{-0.03cm}{\includegraphics[scale=0.38]{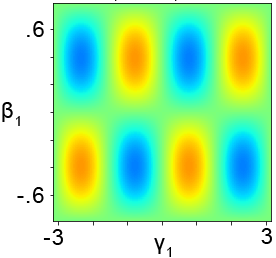}}
	&\includegraphics[scale=0.38]{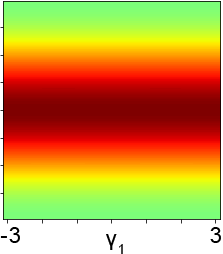}
	&\includegraphics[scale=0.38]{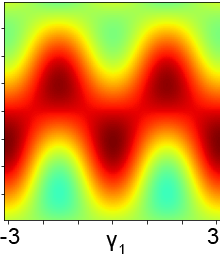}&\includegraphics[scale=0.38]{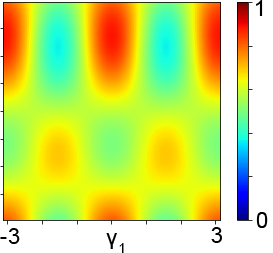}\rotatebox{90}{\hspace{0.35cm} {\small Approximation Ratio}}
	\end{tabular}
	\caption{Parameter landscapes of the four-cycle $C_4$ for $p=1$. When no warm start is used, the landscape has many peaks and valleys in the form of local maxima and minima (a). For both BM-MC$_2$ and BM-MC$_3$, a vertex-at-top rotation yields a convex landscape with a ridge-like shape (b), thereby effectively capturing the optimal solution for the 4-cycle. When a uniform rotation is used, a BM-MC$_2$ formulation (c) is able to achieve optimality for \emph{some} choice of parameters whereas this is not the case for a BM-MC$_3$ formulation (d).}
	\label{fig:fourCycleLandscape}
\end{figure}
We illustrate a set of graph instances where \QAOAw\ has a significant advantage over standard \QAOA\ by considering BM-MC$_k$ solutions that have a special structure. For any positive integer $k$, we say that $\mathbf{x} \in (\mathbb{R}^k)^n$ has an \emph{optimal antipodal structure (in $\mathbb{R}^k$)} for a graph $G=(V,E)$ if there exists $S \subseteq V$ and a unit vector $u \in \mathbb{R}^k$ such that $(S,V\setminus S)$ is a \mc\ of $G$ and $x_i = u$ if $i \in S$ and $x_i = -u$ if $i \notin S$. That is, the points corresponding to each vertex lie at antipodal points on the sphere in a way determined by some \mc\ of $G$. If we consider $\ket{s_0} = Q(R_V(\mathbf{x}))$ where $R_V$ is a random vertex-at-top rotation and $\mathbf{x}$ has optimal antipodal structure, then we basically recover the \tsout{Max-Cut} \fe{\mc}. In this case, \QAOAw\ with initial state $\ket{s_0}$ yields the \mc\ of $G$ for $p=0$.

For any connected bipartite graph and any $k$ (including $k=2,3$), one can show that any globally optimal solution $x$ of $\text{BM-MC}_k$ will have the antipodal structure described above. For the special case of even cycles, we find that the {\sc BM-MC}$_3$ optimization of \mc\ always finds the global optimum. These observations simply imply that random vertex-at-top rotations recover good solutions from the classical regime.

\begin{theorem}\label{thm:cycle}
	For a union of $r$ even-cycles, any local optimum $\mathbf{x}^*$ for {\sc BM-MC}$_3$ is a global optimum. 
\end{theorem}
The characterization of local optima above for even cycles simply implies that initializing \QAOAw\ with the random vertex-at-top rotation $Q(R_V(\mathbf{x}^*))$ will also recover the \tsout{Max-Cut} \fe{\mc}. To prove the structure of local optima in Theorem \ref{thm:cycle}, we exploit the structure of the graph and utilize KKT conditions to first show that any local optimum for BM-MC$_3$ has rank at most 2.\footnote{In non-linear optimization, the Karush–Kuhn–Tucker (KKT) conditions are first-order necessary conditions which characterize the set of optimal solutions. The usage of the KKT conditions generalizes the method of Lagrange multipliers \cite{Kar39}\cite{KT51}.} Further, we show that any rank-2 solution can be improved locally, thus resulting in rank-1 local (and therefore, global) optima. Details of the proof can be found in the Appendix \ref{sec:appendixproofs}.

It is conjectured that the performance of standard \QAOA\ for $n$-node even cycles is $({2p+1})/({2p+2})$ when $n > 2p$ \cite{FGG14,WHJR18}. The above theorem is a concrete example where \QAOAw\ outperforms standard \QAOA, due to a warm-start with a classical optimal solution. We find that warm-starts often result in flatter parameter landscapes for \QAOAw, e.g., see Figure \ref{fig:fourCycleLandscape} depicting the landscapes for various variants of \QAOAw\ on cycle $C_4$ on four vertices (i.e. $C_4 = (V,E)$ with $V=\{1,2,3,4\}$ and $E = \{(1,2),(2,3), (3,4), (1,4)\}$). For the vertex-at-top rotation in particular, notice that the solution quality monotonically decreases in $|\beta_1|$ with the optimal parameters all lying on the line $\beta_1 = 0$. We make this precise in the following observation. 

\begin{observation}
	\label{lem:landscape}
	Let $k\in\{2,3\}$ and $G = (V,E)$ be a graph with weights $w: E \to \mathbb{R}$ and $S \subseteq V$ be such that $(S,V \setminus S)$ is a \mc\ of $G$.  Let $y$ be a unit vector in $\mathbb{R}^k$. Let $\mathbf{x}$ be such that $x_u = y$ if $u \in S$ and $x_u = -y$ if $u \notin S$. If we initialize \QAOAw\ with the initial state $\ket{s_0} = Q(R_V(\mathbf{x}))$ where $R_V$ is a random vertex-at-top rotation, then, we recover the \mc\ since the states are aligned with $\ket{0}$ and $\ket{1}$. The expected cut value at $(\gamma_1,\beta_1)$ is given by
	$$F_1(\gamma_1,\beta_1) = \frac{1}{4}\Big((2M-W)\cos(4\beta)+2M+W)\Big),$$
	where $M = \text{\mc}(G)$ and $W = \sum_{e \in E} w_e$.
	Observe that $F_1(\gamma_1,0) = \text{\mc}(G)$ for all $\gamma_1 \in \mathbb{R}$ and $F_1(\gamma_1,\beta_1)$ decreases as $|\beta_1|$ increases for all $|\beta_1| \in [0,\pi/4]$.
\end{observation}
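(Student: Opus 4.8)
The plan is to exploit the fact that, under the antipodal hypothesis, the prepared state $\ket{s_0} = Q(R_V(\mathbf{x}))$ is an honest computational basis state. Every vertex sits at $\pm y$, and a vertex-at-top rotation sends the sampled vertex's point to a pole of the Bloch sphere; since $y$ and $-y$ are antipodal and the concluding uniform $z$-rotation by $\omega$ fixes both poles, each qubit is mapped to exactly $\ket{0}$ or $\ket{1}$ (up to an irrelevant global phase in the rank-$2$ map). Thus $\ket{s_0} = \ket{b}$ for the bit string $b$ encoding the maximum cut $(S, V \setminus S)$, and $\sigma_i^z \ket{s_0} = s_i \ket{s_0}$ with $s_i \in \{\pm 1\}$ recording the side of vertex $i$; in particular $\bra{s_0} H_C \ket{s_0} = M$.

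First I would use that $\ket{s_0}$ is an eigenstate of $H_C$. Since $H_C$ is diagonal in the computational basis, $e^{-i\gamma_1 H_C}\ket{s_0} = e^{-i\gamma_1 E_0}\ket{s_0}$ is a pure global phase, which cancels in $\bra{\psi_1} H_C \ket{\psi_1}$. This explains the absence of $\gamma_1$ from the formula and reduces the task to evaluating $\bra{s_0} e^{i\beta_1 H_B} H_C\, e^{-i\beta_1 H_B} \ket{s_0}$.

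Next, because $H_B = \sum_k \sigma_k^x$ is a sum of commuting single-qubit terms, $e^{-i\beta_1 H_B}$ factorizes as $\prod_k e^{-i\beta_1 \sigma_k^x}$, and by linearity of expectation I would treat each edge of $H_C = \tfrac12\sum_{(i,j)\in E} A_{i,j}(1 - \sigma_i^z\sigma_j^z)$ separately. Conjugation by the mixer uses the single-qubit identity $e^{i\beta\sigma^x}\sigma^z e^{-i\beta\sigma^x} = \cos(2\beta)\sigma^z + \sin(2\beta)\sigma^y$, so the evolved two-qubit term expands into $\cos^2(2\beta_1)\sigma_i^z\sigma_j^z$ together with cross terms each carrying at least one $\sigma^y$ factor. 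The crux is that $\ket{s_0}$ is a computational basis state, so any term containing a $\sigma^y$ maps it to an orthogonal basis state and contributes nothing to the diagonal expectation; only $\cos^2(2\beta_1)\langle \sigma_i^z\sigma_j^z\rangle = \cos^2(2\beta_1)\, s_i s_j$ survives. The per-edge contribution is therefore $\tfrac{A_{i,j}}{2}\big(1 - s_i s_j \cos^2(2\beta_1)\big)$.

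Finally I would sum over edges, grouping them by whether they are cut ($s_i s_j = -1$) or uncut ($s_i s_j = +1$). With $M = \text{\mc}(G)$ the total cut weight and $W = \sum_{e} w_e$, so that the uncut weight is $W - M$, the sum collapses to $\tfrac{W}{2} + \tfrac{2M-W}{2}\cos^2(2\beta_1)$; substituting $\cos^2(2\beta_1) = \tfrac12(1 + \cos 4\beta_1)$ gives the stated closed form $\tfrac14\big((2M-W)\cos(4\beta_1) + 2M + W\big)$. The two concluding observations are then immediate: setting $\beta_1 = 0$ yields $\tfrac14(4M) = M$, and since a maximum cut attains at least half the total weight we have $2M - W \geq 0$, so $F_1$ tracks $\cos(4\beta_1)$ and decreases monotonically as $4\beta_1$ ranges over $[0,\pi]$, i.e.\ as $|\beta_1|$ grows on $[0,\pi/4]$. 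I expect no conceptual obstacle; the only care needed is the bookkeeping in the per-edge conjugation and confirming that every $\sigma^y$-bearing cross term vanishes, which is where a sign slip or a missed contribution would most easily creep in.
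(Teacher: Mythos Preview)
Your argument is correct, but it proceeds differently from the paper's. The paper works in the Schr\"odinger picture: after noting (as you do) that $e^{-i\gamma H_C}\ket{s_0}$ is a global phase times $\ket{s_0}$, it pushes the state through the mixer explicitly, writing each qubit as $\cos(\beta)\ket{0}-i\sin(\beta)\ket{1}$ or $\cos(\beta)\ket{1}-i\sin(\beta)\ket{0}$, and then computes the probability that the two endpoints of each edge are measured with opposite spins (getting $\cos^4\beta+\sin^4\beta$ for cut edges and $2\sin^2\beta\cos^2\beta$ for uncut edges). Summing these probabilities and applying trig identities yields the closed form. You instead work in the Heisenberg picture, conjugating $\sigma^z_i\sigma^z_j$ by the mixer via $e^{i\beta\sigma^x}\sigma^z e^{-i\beta\sigma^x}=\cos(2\beta)\sigma^z+\sin(2\beta)\sigma^y$ and then observing that every $\sigma^y$-bearing term has zero diagonal expectation in a computational basis state. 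Your route is slightly more economical algebraically---the cross terms are dispatched in one line rather than through four separate probability computations---while the paper's route is closer to the operational picture of sampling. Both land on the same intermediate expression $\tfrac{W}{2}+\tfrac{2M-W}{2}\cos^2(2\beta_1)$ and the same monotonicity argument via $2M\geq W$.
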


The form of the expression for $F_1(\gamma_1,\beta_1)$ follows from the fact that the cost term of the quantum circuit has no effect and that $\beta_1$ can be interpreted as a rotation angle (about the $x$-axis) in the Bloch sphere that moves the state away from the measurement axis. The details of this derivation are included in Appendix \ref{sec:appendixproofs}.

In contrast to the vertex-at-top rotations preserving the optimality of antipodal solutions (Observation \ref{lem:landscape}), this is not always the case for uniform rotations. In Figure \ref{fig:fourCycleLandscape} for example, the \emph{uniform} rotation does not yield the optimal cut for $C_4$ for any choice of parameters in rank-3. However, if we instead use the BM-MC$_2$ solution with a uniform rotation to obtain $R(\mathbf{x})$, then there \emph{does} exists a combination of parameter values that yields the optimal cut 
(by choosing $\gamma_1 = 0$ and an appropriate choice of $\beta_1$, application of the quantum circuit can be interpreted as a rotation that aligns $R(\mathbf{x})$ with the measurement axis in this case). This is due to potential proximity of uniformly rotated rank-3 solutions to the eigenstates of the mixer, which we can avoid in rank-2 initializations as discussed in \tsout{the next section} \fe{Section \ref{subsec:nonoptimalityQAOAw}}.

\section{Pre-processing Time vs Parameter Search Time}\label{sec:appendixPreProcess}

\begin{figure}
    \centering
    \includegraphics[scale=0.5]{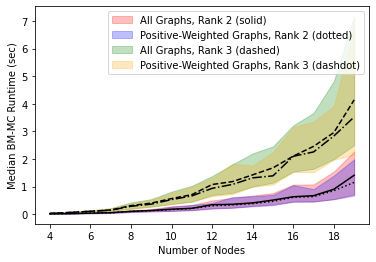}\includegraphics[scale=0.5]{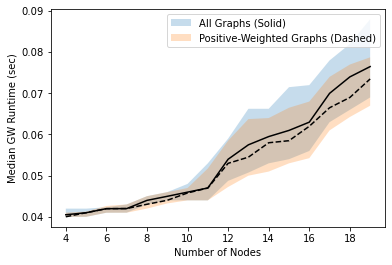}
    \caption{This figure shows how the median runtime changes for both GW and BM-MC$_k$ ($k=2,3$) as the number of nodes increases. The extended graph library $\mathcal{G}'$ (2076 instances) was used to generate the results above; we also run plot the results for just the positive-weighted graphs in $\mathcal{G}'$ as well. The top and bottom of the colored regions corresponding to 75 and 25 percentiles respectively.}
    \label{fig:classicalRuntimes}
\end{figure}

\begin{figure}
    \centering
    \includegraphics[scale=0.5]{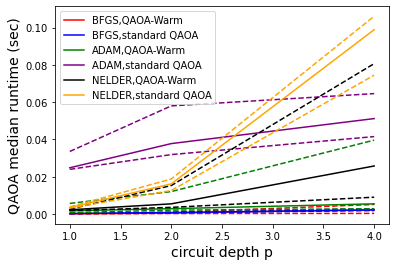}\includegraphics[scale=0.5]{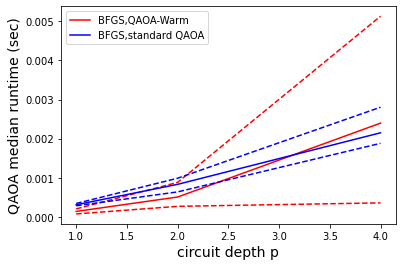}
    \caption{This figure shows how the median runtime changes for the optimization loop of both standard \QAOA\ and \QAOAw\ for various optimizers (ADAM, BFGS, and Nelder-Mead). COBYLA was not included due to technical limitations with our software; in particular, we were unable to gain direct access to the source code needed to in order to exclude the runtime of function or gradient evaluations. as the circuit depth increases. These runtimes do not include the time taken to evaluate/estimate the function values or gradients of the expected cut value $F_p(\gamma,\beta)$ (since in practice, such calls would be made on the quantum device). These plots were generated by randomly selecting 20 8-node graphs from our graph library $\mathcal{G}$ (see Section \ref{subsec:experimentalSetup}), with 10 of the 20 graphs having only positive edge weights. For each solid colored line (corresponding to the median), there are two dashed lines of the same color above and below representing the 75th and 25th percentiles respectively. On the right, we plot the runtimes for BFGS separately in order to more easily see the trend in runtime as $p$ increases.}
    \label{fig:QAOARuntime}
\end{figure}
Here, we compare runtimes for various aspects of \QAOAw\ to those of standard \QAOA\ and GW. For the preprocessing stage, finding an approximate solution for BM-MC$_k$ takes up the bulk of the time (i.e., 1-3 seconds). The rotation applied to the solution and the mapping of the rotated solution to the Bloch sphere is negligible. We plot the runtimes for BM-MC$_k$ for $k=2,3$ in Figure \ref{fig:classicalRuntimes}. To get a better idea of scaling, we consider an expanded graph library $\mathcal{G}'$ consisting of 2076 instances; $\mathcal{G}'$ is generated in the same way as $\mathcal{G}$ (see Section \ref{subsec:experimentalSetup}) but we instead consider graphs of up to 19 nodes. Finding approximate solutions to rank-2 BM-MC$_2$ is considerably faster than rank-3 BM-MC$_3$; furthermore, the runtimes are similar regardless of edge weight values. Plots of GW's runtime for all graphs in $\mathcal{G}'$ are included in Figure \ref{fig:classicalRuntimes}; as before, we see the runtimes are similar even if restrict our attention to only positive-weighted graphs. Note that our code for BM-MC runs is not optimized, and possibly faster implementations for this might be possible.

For both classical algorithms (GW and BM-MC$_k$), we see that the runtime increases superlinearly in the number of nodes $n$. In regards to theoretical results, the runtime of GW is dominated by solving the SDP; Lee and Padmanabhan \cite{LP19} develop an algorithm where one can get within factor $1-\varepsilon$ of the optimal SDP objective in $\widetilde{O}(m/\varepsilon^{3.5})$ time where $m$ is the number of edges in the graph. Similarly, for BM-MC$_k$, Mei et al. \cite{MMMO17} show that one can use a variant of the fast Riemannian trust-region algorithm to find a locally optimal solution in $O(n^2dk^4\log n)$ time for $d$-regular graphs. 

We now consider the runtime of the optimization loop used in both standard \QAOA\ and \QAOAw\ as seen in Figure \ref{fig:QAOARuntime} for various optimizers (ADAM, BFGS, Nelder-Mead). To get an idea of the runtime of the classical portions of the optimization loop, we exclude\footnote{We exclude such portions since including them would not be reflective of the runtime obtained on an actual quantum device; a quantum device can estimate $F_p(\gamma,\beta)$ (the expected cut value) in time polynomial in $n$ whereas a numerical simulation would (typically) take time that is exponential in $n$.} the time taken to estimate the function values or gradients of $F_p(\gamma,\beta)$. During our preliminary experiments, we found that the number of nodes did not have any noticable effect on the runtime of the optimization loop for either standard \QAOA\ or \QAOAw\  for any of the optimizers. However, for all optimizers, Figure \ref{fig:QAOARuntime} shows that more time is needed to optimize $\gamma$ and $\beta$ as the circuit depth increases. With the exception of BFGS, for all optimizers and circuit depths, it appears that \QAOAw\ converges to a set of parameters more quickly compared to standard QAOA.

We now discuss the runtime of the preprocessing stage of \QAOAw\ relative to the runtime of \QAOAw 's optimization loop. A direct comparison is difficult since the former is independent of the circuit depth $p$ and the latter is independent of the number of nodes $n$. However, for the $p$ and $n$ considered in our experiments, it appears (from Figures \ref{fig:classicalRuntimes} and \ref{fig:QAOARuntime}) that the preprocessing stage takes orders of magnitude longer. We remark that our current implementation for finding approximate BM-MC$_k$ solutions (Algorithm \ref{alg:getSDPRank2}) was not designed to find solutions quickly; we suspect other methods can find solutions more quickly. Additionally, we remark that the runtime preprocessing stage appears to scale modestly as the number of nodes increases. The trends in Figure \ref{fig:QAOARuntime} also suggest that as the circuit depth $p$ increases, that the proportion of \QAOAw\ spent in the preprocessing stage diminishes. Moreover, the real runtime of the optimization loop on an actual quantum device would be longer since one needs to consider the time needed to query the quantum device in order to estimate the value or gradient of $F_p(\gamma,\beta)$ at every iteration of the optimization loop. Lastly, there is the additional benefit that if one wants to perform multiple \QAOAw\ runs with different initializations of the variational parameters or different rotation schemes, then one only needs to find a solution to BM-MC$_k$ once.

\section{Comparison With Egger et al.}
    \label{sec:appendixEgger}
%\subsection{Similarities and Differences}
 
 Both our approach and that of Egger et al. \cite{egger2020warm} consider a variant of \QAOA\ initialized with a separable state obtained by some classical method. In this section, we describe the similarities and differences between the two approaches.

In their first approach, which Egger et al. refer to as ``Continuous warm-start QAOA", they consider QAOA applied to the Quadratic Unconstrained Binary Optimization (QUBO) problem which can be formulated as
    \begin{equation}
    \label{eqn:QUBO}
        \min_{x \in \{0,1\}^n} x^T\Sigma x,
    \end{equation}
    where $\Sigma$ is an $n \times n$ symmetric matrix. One can consider the relaxation,
    \begin{equation}
    \label{eqn:QUBO_Relaxed}
        \min_{x \in [0,1]^n} x^T\Sigma x,
    \end{equation}
    where now each $x_i$ lies in the interval $[0,1]$. If $\Sigma$ is positive semidefinite, then Equation \ref{eqn:QUBO_Relaxed} is a convex program which can easily be solved by classical optimizers to obtained an optimal solution $x^*$. Next, Egger et al. produces an unentangled state by mapping each $x_i^* \in [0,1]$ to a portion of a great-circle on the Bloch sphere; more specifically, the initial state $\ket{s_0}$ is given by
    \begin{equation}
    \label{eqn:startingState}
    \ket{s_0} = \bigotimes_{j=1}^n \left[ R_Y\left(2\arcsin\left(\sqrt{x^*_j}\right)\right) \ket{0}\right],\end{equation}
    where $R_Y(\theta)$ is a rotation on the Bloch sphere about the $y$-axis by angle $\theta$. This initialization significantly differs from our initialization scheme in that our relaxation relaxes each variable to a unit vector (instead of a position in an interval) and (for rank-3 initializations) we are not restricted to any particular portion of the Bloch sphere. Additionally, since \mc\ is equivalent to QUBO \cite{dunning2018}, our approach can also be used to solve arbitrary QUBO problems; however, this approach by Egger et al. is not applicable to \mc\ since one can show that the corresponding $\Sigma$ matrix in Equation \ref{eqn:QUBO_Relaxed} would not be positive semidefinite.\footnote{In fact, the matrix would be \emph{negative} definite in the case of \mc. It is straightforward to show that this implies that any locally optimal solution to the proposed relaxation of the QUBO (corresponding to the \mc\ instance) would yield only purely binary solutions in $\{0,1\}^n$.}
    
   Egger et al. also modify the mixing Hamiltonian $H_B$ in their approach; in particular, they choose\footnote{The authors do consider further modifications of the initial state by introducing an $\epsilon>0$ parameter that ensures the qubits are not initialized too close to the poles; this is done to avoid degenerate initializations that would cause Egger et al.'s variant of \QAOA\ to fail to converge as the circuit depth $p \to \infty$. The mixer is also adjusted accordingly.} $H_B = \bigoplus_{j=1}^n H_{B,j}$ where
    \begin{equation}
    \label{eq:eggerMixer}
        H_{B,j} = \begin{bmatrix}
    2x_j^*-1 & -2\sqrt{x_j^*(1-x_j^*)}\\
    -2\sqrt{x_j^*(1-x_j^*)} & 1-2x_j^*
    \end{bmatrix}.
    \end{equation}
    
    One can show that $\ket{s_0}$ from Equation \ref{eqn:startingState} is a ground state of $H_B$ as described above; Egger et al. remark that this allows us to apply the adiabatic theorem and conclude that this variant of QAOA approaches the optimal solution as the circuit depth tends to infinity (assuming an optimal choice of variational parameters). Unfortunately, our \QAOAw\ approach has no such guarantees.
    
    Egger et al. also consider another variant of \QAOA\ called ``Rounded warm-start \QAOA." Unlike their previous approach, this approach is more readily applicable to \mc{}. In this approach, a cut $(S,V \setminus S)$ is first generated via classical means (e.g. the rounding procedure found in the GW algorithm). Then, each qubit corresponding to a vertex in $S$ is initialized to $R_Y(\pi/3)\ket{0}$; similarly, each vertex in $V \setminus S$ is initialized to $R_Y(2\pi/3)\ket{0}$. The mixer used in this approach is the same as Equation \ref{eq:eggerMixer} but with the diagonal elements multiplied by $-1$. It can be shown that this approach allows one to recover the same cut that was initially used to create the initial quantum state. However, unlike their previous approach, the adiabatic theorem is not applicable in this case (since the initial state is no longer a ground state of the mixer), and thus, not much is known about the theoretical convergence of this rounded approach.

\paragraph{Experimental Comparison} We perform a similar set of numerical simulations that Egger et al. used for \mc\ (i.e. their rounded approach) and compare their approach with our own. When using their approach, for each instance considered, we also use a GW-solver to (classically) obtain 10 cuts, keeping only the best 5. Each of these cuts are used to create a different initial state using their Rounded Warm-Start \QAOA\ approach (with a regularization parameter of $\varepsilon = 0.25$ which guarantees that their approach can recover the cut used for initialization at a particular set of variational parameters at $p=1$). Similar to our numerical simulations for \QAOAw, we also test their approach at circuit depths $p=1,2,4,8$.

\begin{figure}
    \centering
    \includegraphics[scale=0.28]{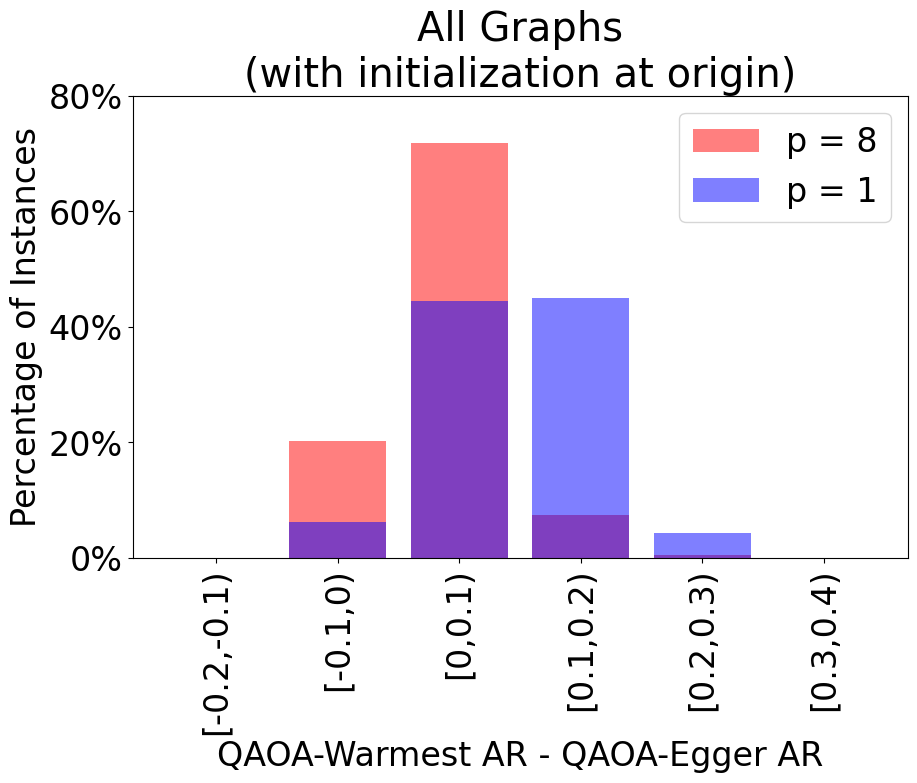}\includegraphics[scale=0.28]{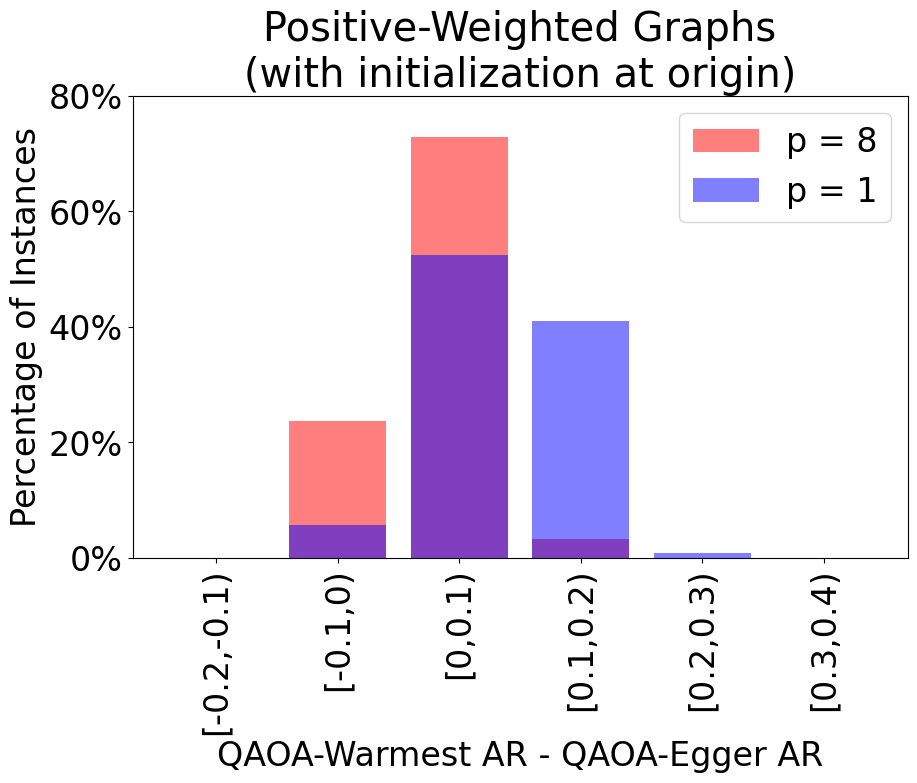}\\
    \includegraphics[scale=0.28]{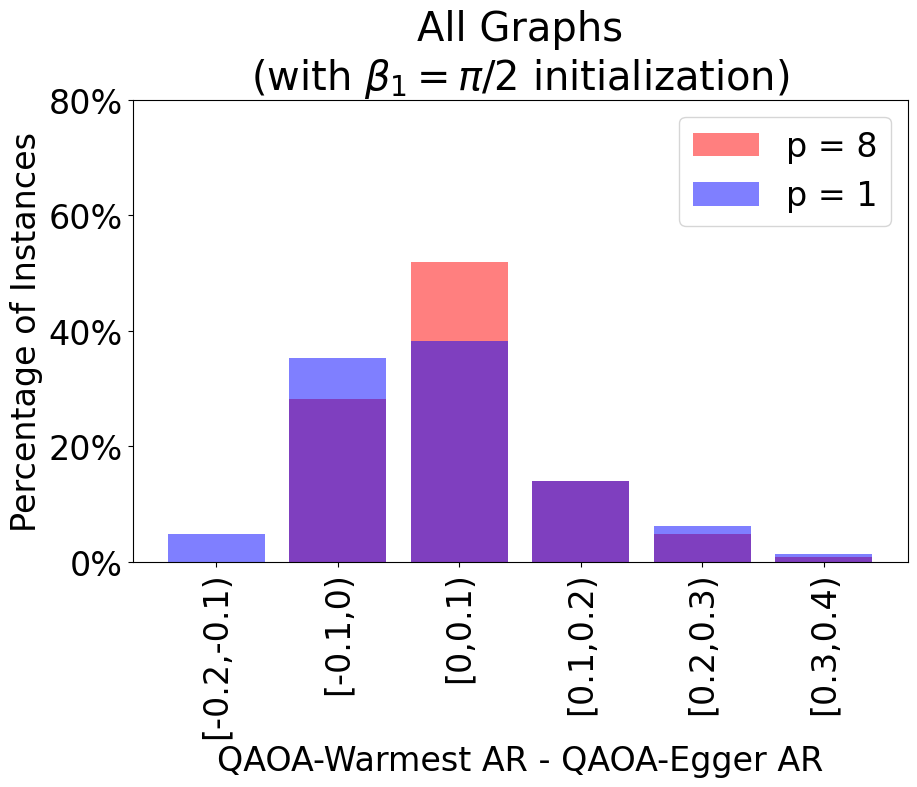}\includegraphics[scale=0.28]{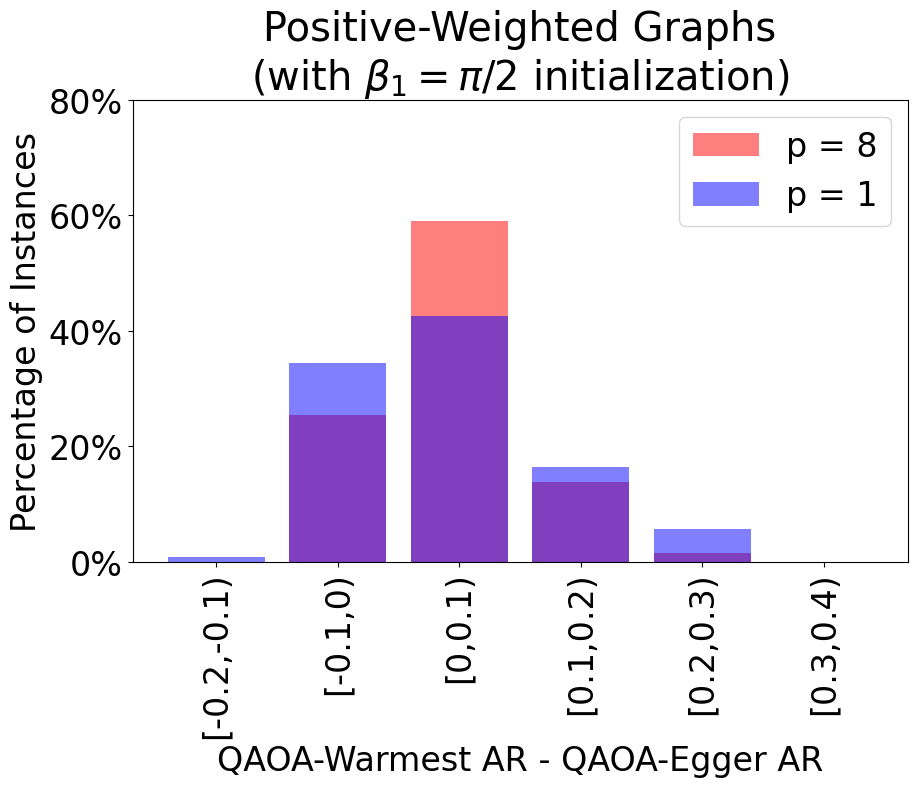}\\
    \caption{Histograms are used to compare the performance of \QAOAw{} and Egger et al.'s \cite{egger2020warm} Rounded Warm-Start QAOA approach in our graph library $\mathcal{G}$ (see Section \ref{subsec:experimentalSetup}) for circuit depths $p=1$ (blue) and $p=8$ (red); regions in purple are the overlaps of the two histograms. The left column uses all the graphs in $\mathcal{G}$ (see Section \ref{subsec:experimentalSetup}) whereas the right column only considers those with positive edge weights. The top row considers initializing the variational parameters to the origin for both approaches whereas the bottom row considers initializing $\beta_1 = \pi/2$ (with the remaining parameters at zero) for Egger et al.'s approach. }
    \label{fig:eggerResults}
\end{figure}

We consider two possible choices regarding the initialization of the variational parameters $\gamma$ and $\beta$: (1) initializing near the origin and (2) initializing at $\beta_1 = \pi/2, \gamma_1=0$ (with the remaining parameters being initialized to 0 for $p > 1$). The former initialization is the same as that used for \QAOAw\ and the latter is guaranteed to produce the cut used to initialize the quantum state.

 For the Egger et al. approach, we found that for most instances, when taking the best of 5 cuts produced by GW, that at least one of them would be the optimal cut. This yields uninteresting results since choosing $\beta_1 = \pi/2$ (with the remaining parameters being 0) would automatically yield the optimal cut when using Egger et al.'s approach. For this reason, we have removed all such instances (\fe{76.2\%} of the instance library) in our comparison.  The removal of such instances also removes all instances for which \QAOAw\ achieves optimality at $p=0$ (\fe{26.3\%} of the instance library).

As seen in Figure \ref{fig:eggerResults}, at low circuit depth ($p=1$), regardless of initializaiton scheme used for Egger et al.'s approach, \QAOAw\ is able outperform it for the majority of the instances. The advantage that \QAOAw\ has over Egger et al.'s approach is subdued at higher $p$ (e.g. we see a leftward shift in the histograms when going from $p=1$ to $p=8$) or when using the secondary initialization scheme for Egger et al.'s approach. \tsout{Lastly, magnitude in the difference in approximation is more spread out when considering all the graphs in our instance library $\mathcal{G}$ as opposed to when we only use the subset of $\mathcal{G}$ containing positive-weighted graphs.}

\end{document}